\title{Partial Identification in Nonseparable Binary Response Models with Endogenous Regressors\\~\\ \footnotetext{We are grateful to James Heckman, Marc Henry, Roger Koenker, and to seminar audiences at Columbia University and Michigan State University for helpful feedback. We also thank Martin Weidner and the organizers of the Chamberlain Seminar, and are grateful to Florian Gunsilius, Sukjin Han, Wayne Gao, and Takuya Ura for their questions and feedback, and to Adam Rosen for his thoughtful discussion. Jiaying Gu acknowledges financial support from the Social Sciences and Humanities Research Council of Canada. All errors are our own.}
\vspace{-.4in}
%\large \textbf{(Preliminary and incomplete. Do not distribute.)}
}
\date{\small June 20, 2022}
\author{Jiaying Gu\footnote{Jiaying Gu, Department of Economics, University of Toronto, 150 St. George Street, Toronto, Ontario, M5S3G7, Canada. Email: jiaying.gu@utoronto.ca.} \\ \textit{University of Toronto}
\and 
Thomas M. Russell\footnote{Thomas M. Russell, Department of Economics, Carleton University, 1125 Colonel By Drive, Ottawa, Ontario, K1S5B6, Canada. Email: thomas.russell3@carleton.ca. 
}\\\textit{Carleton University}}
\begin{document}
\maketitle
\vspace{-.4in}
\begin{abstract}
\noindent  This paper considers (partial) identification of a variety of counterfactual parameters in binary response models with possibly endogenous regressors. Our framework allows for nonseparable index functions with multi-dimensional latent variables, and does not require parametric distributional assumptions. We leverage results on hyperplane arrangements and cell enumeration from the literature on computational geometry in order to provide a tractable means of computing the identified set. We demonstrate how various functional form, independence, and monotonicity assumptions can be imposed as constraints in our optimization procedure to tighten the identified set. Finally, we apply our method to study the effects of health insurance on the decision to seek medical treatment. 
\end{abstract}

\medskip

%\noindent \textit{JEL classification}: 

\medskip
\noindent \textit{Keywords}: Binary Choice, Counterfactual Probabilities, Endogeneity, Hyperplane Arrangement, Linear Programming, Partial Identification 

\thispagestyle{empty}
\clearpage

\section{Introduction}

This paper considers partial identification of counterfactual parameters in a class of binary response models of the form:
\begin{align}
Y= \mathbbm{1}\{\tilde{\varphi}_{1}(X,\theta)^\top U + \tilde{\varphi}_{2}(X,\theta)\geq 0 \},\label{eq_main_model}
\end{align}
where $Y \in \{0,1\}$ is a binary outcome variable, $X \in \mathcal{X}\subset \mathbb{R}^{d_{x}}$ is vector of (possibly endogenous) covariates with finite support, $U \in \mathcal{U}:=\mathbb{R}^{d_{u}}$ is a vector of latent variables, $\theta \in \Theta \subset \mathbb{R}^{d_{\theta}}$ is a vector of structural parameters, and $\tilde{\varphi}_{1}:\mathcal{X}\times \Theta \to \mathbb{R}^{d_{u}}$ and $\tilde{\varphi}_{2}: \mathcal{X}\times \Theta \to \mathbb{R}$ are known functions. Our approach does not require any parametric distributional assumptions on the latent variables. We then consider counterfactuals that can be expressed as a function $\gamma: \mathcal{X} \to \mathcal{X}$ that reassigns each $x \in \mathcal{X}$ to a new value $\gamma(x) \in \mathcal{X}$. Our main focus is on bounding linear functionals of counterfactual probabilities.  %Our approach can easily be used to bound the average structural function, the average treatment effect, or various marginal effects. %We describe a characterization of the identified set of the distribution of latent variables in terms of linear constraints. After establishing the sharp set of constraints on the distribution of latent variables, we show how to obtain sharp bounds on counterfactual conditional choice probabilities as the solution to two optimization problems. We also show how exogenous regressors, instruments, monotonicity assumptions, and functional form assumptions can be combined to tighten the identified set for counterfactual conditional choice probabilities.

In our setting, nonparametric point-identification of the distribution of latent variables occurs only under restrictive conditions, including strong independence assumptions and large support conditions.\footnote{E.g. \cite{ichimura1998maximum}.} Control function approaches are often used to address the issue of endogenous regressors, but if endogenous regressors are discrete or the mechanism generating the endogenous regressors is poorly understood, then many of these approaches are not applicable. Partial identification arises as a natural alternative to methods for point-identification in the presence of endogenous and discrete covariates. This paper uses an optimization-based approach to bound counterfactual quantities, which allows researchers to easily construct sharp bounds on counterfactual quantities under a variety of different assumptions by simply altering the constraints in our optimization problems.

In the absence of parametric distributional assumptions, our analysis reveals the importance of a special partition of the latent variable space into \textit{response types} that have identical responses in all possible counterfactuals.  We are not the first to emphasize the importance of response types, and our discussion echoes the insights of \cite{balke1994counterfactual} and \cite{heckman2018unordered}, among others. Similar to these works, we show that functional form and monotonicity assumptions amount to assigning zero probability to certain response types. Before bounding counterfactual quantities, it is thus necessary to determine which response type are possible/impossible in a given model. In our particular class of models, the latent variable space admits a partition into cells defined by a collection of hyperplanes, each cell corresponding to a unique response type. Using the cell enumeration algorithm of \cite{gu2020nonparametric}, we show how to enumerate all response types in a time polynomial in the number of input hyperplanes. After enumerating response types, we then demonstrate how various counterfactual quantities can be bounded by solving a sequence of linear programming problems. We also study the case when the index function is linear in parameters, in which case we show how sharp bounds on counterfactual probabilities can be computed without the need to grid over the entire parameter space. Finally, we present a consistency result, we show how to adapt the inference procedure of \cite{cho2021simple} to our setting, and we apply our method to study the effects of private health insurance on the decision to seek medical treatment.  %Overall, the strength of the conclusions from our application are proportional to the strength of the assumptions the researcher is willing to maintain.  

\subsection{A Review of the Relevant Literature} 

Binary response models with endogenous regressors have been studied extensively. Point identified approaches include linear probability models, maximum likelihood estimation (e.g.\! the bivariate probit), control function approaches, or approaches based on special regressors. All of these approaches have well-documented limitations.\footnote{See \cite{lewbel2012comparing} for a review. }  %In particular, linear probability models are commonly justified as approximations to the underlying conditional expectation function for the binary dependent variable, but are known to deliver misleading results when the conditional expectation function is highly nonlinear.\footnote{\cite{lewbel2012comparing} construct an interesting treatment effect example with a binary outcome variable where the treatment effect is positive for everyone, but the ATE under a linear probability model is negative. } Methods that use maximum likelihood---such as the bivariate probit model---enjoy efficiency gains relative to other approaches when the model is correctly specified, but require strong \textit{a priori} knowledge of the mechanism generating the endogenous variables, as well as knowledge of the joint distribution of the latent variables up to some finite parameter vector. Finally, control function approaches (e.g. \cite{blundell1989estimation}, \cite{BLUNDELL2004}, and \cite{imbens2009identification}, among many others) relax (to some extent) the assumptions required on the latent variables, but are generally restricted to cases with continuous endogenous variables and still require a correctly specified model for the endogenous variables in nonlinear models. Unlike the control function approach, the special regressor approach of \cite{lewbel2000semiparametric} (see also \cite{lewbel2012comparing} and \cite{dong2015simple}) does not require the correct specification of a model for endogenous variables, but instead requires the existence of an observed continuously distributed regressor with large support that satisfies certain conditional independence assumptions. Such a special regressor is not always readily available. 
Nonparametric identification was studied in binary choice and threshold crossing models by \cite{matzkin1992nonparametric}, and in more general nonseparable models by \cite{matzkin2003nonparametric} and \cite{chernozhukov2005iv}, among many others. \cite{vytlacil2007dummy} studied nonparametric identification of the average treatment effect in a discrete triangular system with a binary endogenous variable under a weak separability assumption in the outcome equation. Since then a number of paper have studied point identification in similar models with discrete endogenous variables (e.g. \cite{han2017identification}, \cite{vuong2017counterfactual}, \cite{chen2020identification} and \cite{khan2021informational}) and continuous endogenous variables (e.g. \cite{imbens2009identification}, \cite{d2015identification}, \cite{torgovitsky2015identification}). Important precedents to the work presented here from the literature on point identification in random coefficient models include \cite{ichimura1998maximum}, \cite{gautier2013nonparametric} and \cite{gu2020nonparametric}. However, these papers focus almost exclusively on the point-identified case with a linear index function and exogenous covariates with large support. %Han and Vytlacil (2017), Vuong and Xu (2017), Lewbel, JachoChavez, and Encarnciono (2016), Khan, Maurel, and Zhang (2019). Work for models when the endogenous variable is continuous includes Imbens and Newey (2009), D’Haultfoeuille and Fevrier (2015) and Torgovitsky (2015). Feng (2020)

Many authors have also used partial identification methods to relax the assumptions required for point-identification in these models. In a relevant series of papers, \cite{chesher2013instrumental} and \cite{chesher2014instrumental} show how to use random set theory to characterize the identified set of structures in discrete choice models.\footnote{The general formulation of their approach is presented in \cite{chesher2017generalized}.} Similar to the current paper, they do not provide a model for the endogenous explanatory variables, rendering the discrete choice model \textit{incomplete}. They construct bounds on structural parameters using a characterization of the sharp set of constraints based on the results of \cite{artstein1983distributions}.\footnote{See also \cite{norberg1992existence} and \cite{molchanov2017theory} Corollary 1.4.11.} %However, without additional simplification, the sharp set of constraints implied by Artstein's inequalities can quickly become computationally unmanageable.\footnote{See \cite{galichon2011set} for an early discussion of this issue in the context of empirical entry games, and \cite{russell2019sharp} for a discussion in the context of estimating treatment effects.} 
Although our paper focuses primarily on computational issues that arise when bounding counterfactual parameters, we present a comparison of our approach with the approach of \cite{chesher2013instrumental} and \cite{chesher2014instrumental} in Appendix \ref{appendix_artstein_comparison}. 

Our approach to identification is closely related to approaches in \cite{galichon2011set}, \cite{laffers2019identification} and \cite{torgovitsky2019partial}, who demonstrate how to construct sharp bounds on various parameters in models with discrete variables by partitioning the latent variable space and discretizing the latent variables. We also use an identification argument based on partitioning the latent variable space, and we demonstrate how to practically compute the relevant partition using results from the literature on computational geometry. In certain cases, we also show how to avoid griding over the entire parameter space when computing the identified set.

There are a number of other relevant papers in the literature on partial identification in discrete choice models. \cite{manski2007partial} considers counterfactual choice probabilities in a setting with partial identification, and shows how these counterfactual choice probabilities can be bounded using optimization problems. However, the response type approach in our paper is quite different. We also show how to practically incorporate different assumptions, and we allow for endogenous explanatory variables.\footnote{This latter point differentiates our work from \cite{chiong2017counterfactual} and \cite{allen2019identification}.} In other related work, \cite{tebaldi2019nonparametric} study the problem of computing various counterfactual quantities in a nonparametric discrete choice model with an application to consumer choice of health insurance in California. However, they focus on quasi-linear utility functions and use the particular structure of their setting to resolve the issue of endogeneity by conditioning on a set of covariates. %\footnote{In particular, \cite{tebaldi2019nonparametric} consider a multinomial choice model with preferences over insurance options given by the difference between the consumer's latent valuation and the consumer's premium for each option. Endogeneity arises because of possible dependence between valuations and premiums. However, in their setting (subsidized) premiums are deterministic functions of the coverage area, age, and income. The authors then discretize age and income, and assume that a valuation distribution is fixed within a given coverage area and discretized age and income bin; the remaining variation in premiums within each coverage area and discretized age and income bin is then considered to be exogenous.  } %Finally, in a recent working paper \cite{kitamura2019nonparametric} build on the methods in \cite{kitamura2018nonparametric} to develop a method for bounding functionals of counterfactual demand under the assumption of ``stochastic rationalizability.'' However, their setup is very much geared toward demand applications, and requires the researcher to observe a sequence of budget sets---characterized by observed price vectors and total expenditure---as well as a choice from each budget set. Although similar applications can be accommodated (albeit in a more limited way) in our framework, our approach is also applicable in many environments beyond the one they consider.\footnote{We also note that the notion of ``patches'' in \cite{kitamura2019nonparametric} does not reduce to our notion of ``response types'' introduced later in the paper, even when you consider their framework applied to the discrete choice case.} %\footnote{The exact class of utility functions allowed in \cite{kitamura2019nonparametric} is not made clear. In particular, they presume that utility functions are drawn according to a measure $P_{u}$, but they do not indicate the measure space on which $P_{u}$ is defined. Utility functions in their context might be viewed as maps $u:\Omega\times \mathbb{R}_{+}^{k} \to \mathbb{R}$, in which case the measure on $P_{u}$ is the measure induced on the space of utility functions by the measure on the underlying probability space. } requires 
 Computational considerations are also not the main focus of these papers. 

This paper is also related to papers that bound treatment effect parameters. Nonparametric bounds were proposed by \cite{manski1990nonparametric}, and since then contributions have been made by \cite{manski1997monotone}, \cite{manski2000monotone}, \cite{heckman2001instrumental}, \cite{bhattacharya2008treatment}, \cite{manski2009more}, \cite{chiburis2010semiparametric}, \cite{shaikh2011partial}, \cite{bhattacharya2012treatment} and \cite{mourifie2015sharp}, among many others. However, most papers derive closed-form bounds and provide corresponding proofs of sharpness for a fixed target parameter and under a fixed set of assumptions. Similar to other optimization-based approaches, the advantage of our procedure is its flexibility: the researcher can easily modify the model or target parameter and obtain new bounds without the need to derive closed form expressions, or to provide a corresponding proof of sharpness.\footnote{Examples of optimization-based approaches to bounding treatment effect parameters include \cite{balke1997bounds}, \cite{laffers2019bounding}, \cite{russell2021sharp}, \cite{mogstad2018using} and \cite{gunsilius2020path}. } Our optimization-based bounds are shown to be sharp under a variety of different assumptions, including flexible functional form assumptions, and any mix of fixed and random coefficients on either exogenous or endogenous covariates. These assumptions are of direct interest to those familiar with structural models of binary outcome variables, and allow us to obtain many previous results as a special case. We hope our approach may also serve as a useful method of sensitivity analysis for those primarily interested in point identified models.

Finally, our paper makes numerous connections to the literature on computational geometry. Computation of our bounds requires the analysis of a partition of the latent space determined by a finite collection of hyperplanes. This turns out to be a well studied subject in combinatorial geometry, and leads us to consideration of the enumeration algorithm proposed by \cite{gu2020nonparametric}, who build on the work of \cite{rada2018new}. Our profiling procedure also makes use of the double-description algorithm proposed by \cite{Fukuda}. %Finally, our application considers data on health insurance and doctor visits from \cite{han2019estimation}.

\subsection{Paper Outline and Notation} 

The remainder of the paper proceeds as follows. Section \ref{section_theoretical_considerations} introduces the main theoretical framework and main assumptions. Section \ref{section_practical_considerations} studies practical implementation of the theoretical framework from Section \ref{section_theoretical_considerations} and introduces our optimization-based bounding procedure for counterfactual probabilities. Section \ref{section_additional_assumptions} then demonstrates how to introduce functional form, independence, and monotonicity assumptions into our bounding procedure. Section \ref{section_consistency_bias_correction_inference} discusses estimation and inference, and Section \ref{section_application} applies our methodology to study the impact of health insurance on utilization of health care services. Section \ref{section_conclusion} concludes. All proofs can be found in Appendix \ref{appendix_proofs}. Appendix \ref{appendix_additional_definitions_and_results} provides some additional discussion of the results presented in the main text. Appendix \ref{appendix_artstein_comparison} compares our procedure to an approach based on Artstein's inequalities, and Appendix \ref{appendix_figure} contains supplementary material for our application. \\~\\%, and a comparison of our approach to the approach based on Artstein's inequalities from \cite{chesher2013instrumental} and \cite{chesher2014instrumental} is presented in Appendix \ref{appendix_artstein_comparison}.\\~\\

\noindent \textbf{Notation:} The following notation is relevant for both the main text and the appendices. Given a subset $\mathcal{X}$ of Euclidean space, we use $\mathfrak{B}(\mathcal{X})$ to denote the Borel $\sigma-$algebra on $\mathcal{X}$. For two measurable spaces $(\mathcal{X},\mathfrak{B}(\mathcal{X}))$ and $(\mathcal{X}',\mathfrak{B}(\mathcal{X}'))$, the product $\sigma-$algebra on $\mathcal{X}\times \mathcal{X}'$ is denoted by $\mathfrak{B}(\mathcal{X})\otimes \mathfrak{B}(\mathcal{X}')$. Random variables are denoted using capital letters, and if $X: (\Omega,\mathfrak{A})\to (\mathcal{X},\mathfrak{B}(\mathcal{X}))$ is a random variable defined on the probability space $(\Omega,\mathfrak{A},P)$, then we use $P_{X}$ to denote the probability measure induced on $\mathcal{X}$ by $X$; that is, for any $A \in \mathfrak{B}(\mathcal{X})$, $P_{X}(A) := P(X^{-1}(A))$. Furthermore, we interpret $P_{X\mid X'}(X \in A \mid X' =x')$ as a regular conditional probability measure. Finally, $P_{X\mid X'}$ is used as shorthand for the collection $P_{X\mid X'}:=\{P_{X\mid X'}(\,\cdot\, \mid X'=x') : x' \in \mathcal{X}'\}$. We do not explicitly differentiate between scalars and vectors, or random variables and random vectors. To keep the notation clean, we sometimes omit the transpose when combining column vectors; that is, if $v_{1}$ and $v_{2}$ are two column vectors, rather than write $v=(v_{1}^\top,v_{2}^\top)^\top$ we instead write $v=(v_{1},v_{2})$, where it is understood that $v$ is a column vector unless otherwise specified. The cardinality of a set $\mathcal{X}\subset \mathbb{R}^{d}$ is given by $|\mathcal{X}|$.

\section{General Framework: Theoretical Considerations}\label{section_theoretical_considerations}

\subsection{Main Assumptions and Definitions}

We begin by introducing our main assumptions on the binary response models under consideration.
\begin{assumption}\label{assumption_basic}
There exists a complete probability space $(\Omega, \mathfrak{A},P)$, a random variable $Y:\Omega \to \{0,1\}$, and random vectors $X:\Omega\to \mathcal{X} \subset \mathbb{R}^{d_{x}}$ and $U:\Omega\to \mathcal{U} = \mathbb{R}^{d_{u}}$ satisfying: 
\begin{align}
Y= \mathbbm{1}\{\varphi(X,U,\theta_{0})\geq 0  \}\,\, a.s.,\label{eq_model}
\end{align}
for some function $\varphi(\,\cdot\,,\theta_{0}): \mathcal{X}\times \mathcal{U} \to \mathbb{R}$ parameterized by $\theta_{0} \in \Theta\subset \mathbb{R}^{d_{\theta}}$ with:
\begin{align}
\varphi(x,u,\theta)=\tilde{\varphi}_{1}(x,\theta)^\top u + \tilde{\varphi}_{2}(x,\theta),\label{eq_model2}
\end{align}
where $\tilde{\varphi}_{1}(\,\cdot\,,\theta):\mathcal{X} \to \mathbb{R}^{d_{u}}$ and $\tilde{\varphi}_{2}(\,\cdot\,,\theta): \mathcal{X} \to \mathbb{R}$ are measurable for each $\theta$. Furthermore, $|\mathcal{X}|=:m < \infty$, the spaces $\mathcal{X}$ and $\mathcal{U}$ are equipped with the Borel $\sigma-$algebra,  and the distribution of U assigns zero probability to all sets of the form $\{ u \in \mathcal{U} : \varphi(x,u,\theta_{0}) =0\}$.
\end{assumption}
In Assumption \ref{assumption_basic}, $U \in \mathcal{U}$ is a vector of latent variables, $\theta \in \Theta$ is a vector of fixed coefficients, and $X\in \mathcal{X}$ is a vector of covariates. From \eqref{eq_model2} we restrict the index function to be linear in the latent variables $U \in \mathcal{U}$, although the model in Assumption \ref{assumption_basic} still allows for general nonseparability between covariates and latent variables. Importantly, Assumption \ref{assumption_basic} imposes that the random vector $X$ has finite support. 
% \footnote{Specifying linearity in latent variables seems to be nonnested with the weak separability of assumption of \cite{vytlacil2007dummy}. In particular, \cite{vytlacil2007dummy} permit index functions of the form (omitting $\theta$ for simplicity):
% \begin{align*}
% \varphi(x,z,u) = g(h(x,z),u),
% \end{align*}
% where $h:\mathcal{X}\times\mathcal{Z} \to \mathbb{R}$ and $g: \mathbb{R}\times \mathcal{U} \to \mathbb{R}$ are real-valued functions. Note that this form cannot accommodate cases like $\varphi(x,z,u) = \langle x, u\rangle$, although it allows for nonlinearities in $u$. 
% } 

In this model the latent variables can also be interpreted as random coefficients. A special case of linearity occurs when the function $\varphi$ is additively separable in a scalar latent variable $U$, which occurs, for instance, when $\tilde{\varphi}_{1}(x,\theta)=1$. A full analysis of this special case using the framework in this paper is taken up in Appendix \ref{appendix_additively_separable}. %The assumptions on the index function $\varphi$ imply that it is a \textit{Caratheodory function}, which is important to establish certain measurability results (see Appendix \ref{appendix_measurability_results} and the discussion below).\footnote{If $(S,\Sigma)$ is a measurable space and $A$ and $B$ are topological spaces, then we call $f:S \times A \to B$ a Caratheodory function if for each $a \in A$ we have $f(\cdot,a)$ is a measurable function, and if for each $s \in S$ we have that $f(s,\cdot)$ is continuous. See Definition 4.50 in \cite{aliprantis2006infinite}. } 
%The exact form of the index function may or may not be known to the researcher. 
%For now there is no distinction between $X$ and $Z$, and either may be dependent with the latent vector $U$. %However, later in the paper we will distinguish $X$ from $Z$ by introducing independence assumptions between $Z$ and the vector of latent variables $U$. The variables $X$ and $Z$ can be seen as consisting of utility-relevant attributes of the set of alternatives and the set of individual decision makers.\footnote{See \cite{manski1977structure} for a discussion.} We focus throughout the paper on the case when the joint support $\mathcal{X} \times \mathcal{Z}$ is finite with $m := m_{x} \cdot m_{z}$ points of support, although $m$ may be taken to be very large. 
%Throughout the paper we switch freely between indexing the points in $\mathcal{X} \times \mathcal{Z}$ either by $\{(x_{1},z_{1}), (x_{1},z_{2}),\ldots, (x_{m_{x}},z_{m_{z}})\}$ or by $\{(x_{1},z_{1}),(x_{2},z_{2}),\ldots,(x_{m},z_{m})\}$ with $m:=m_{x}\cdot m_{z}$, depending on which method is more convenient for our purpose. %Finally, it is important to note that, because of finiteness of the support $\mathcal{X} \times \mathcal{Z}$, it is possible to construct a model satisfying Assumption \ref{assumption_basic} that can rationalize any observed joint distribution of $Y$, $X$ and $Z$. In this sense, the model in \eqref{eq_model} has not yet imposed any significant structure. 
Finally, assuming the distribution of $U$ assigns zero probability to sets of the form $\{u \in \mathcal{U} : \varphi(x,u,\theta)=0\}$ allows for a simplification of the cell enumeration algorithm introduced in the next section. This assumption is implied by absolute continuity of the distribution of latent variables with respect to the Lebesgue measure, which is a standard assumption in this literature.%\footnote{Examples of other papers imposing absolute continuity include \cite{chesher2013instrumental}, \cite{chesher2014instrumental}, and \cite{tebaldi2019nonparametric}, among many others.}

We assume that the researcher's objective throughout is to obtain a sharp set of constraints defining the identified set of latent variable distributions, and to use these constraints to bound various counterfactual quantities, such as counterfactual conditional probabilities.\footnote{ 
Similar to previous works, we take the \textit{selection} relation as a primitive relation on which to construct a definition of the identified set. The close connection between the selection relation from random set theory and the concept of \textit{observational equivalence} from the work in econometrics on identification has been appreciated in \cite{beresteanu2011sharp}, \cite{beresteanu2012partial}, \cite{chesher2013instrumental}, \cite{chesher2014instrumental}, and \cite{chesher2017generalized}, among many others. We continue this work here.} Define the set:%\note{Unlike \cite{chesher2014instrumental}, our definition of the random set in \eqref{eq_random_set} does not take the closure on the right hand side. Theoretically, taking the closure means the random set differs slightly from the relation in \eqref{eq_model}; namely, for a given $(x,z,\theta)$ there may exist a $\theta \in  \mathcal{U}(0,x,z,\theta) \cap  \mathcal{U}(1,x,z,\theta)$ that rationalizes both values $Y=0$ and $Y=1$ of the outcome variable. This differs from the relation in \eqref{eq_model}, which implies that no value of $\theta$ can simultaneously rationalize both $Y=0$ and $Y=1$. However, \cite{chesher2014instrumental} also assume absolute continuity, which ensures that and $ \mathcal{U}(0,x,z,\theta) \cap  \mathcal{U}(1,x,z,\theta)$ is assigned probability zero. However, not all measurable selections from a random closed set will have distributions that are absolutely continuous (there may exist no such selections). Thus absolute continuity places a more onerous requirement on the selections under consideration, making common selection Theorems inapplicable.} 
\begin{align}
 \mathcal{U}(y,x,\theta)&:=\left\{u \in \mathcal{U} : y = \mathbbm{1}\{\varphi(x,u,\theta) \geq 0  \} \right\}.\label{eq_random_set}
\end{align}
\cite{chesher2017generalized} call this set the $U-$level set; intuitively, it delivers all possible values of the latent variables $u$ consistent with the vector $(y,x,\theta)$ given the binary response model in \eqref{eq_model}.  A \textit{measurable selection} from the random set $ \mathcal{U}(Y,X,\theta)$ is a random vector $U: \Omega \to \mathcal{U}$ satisfying $U \in  \mathcal{U}(Y,X,\theta)$ a.s.\footnote{A general definition of a selection and a random set is provided in Appendix \ref{appendix_measurability_results}. In Appendix \ref{appendix_measurability_results} we prove that $\mathcal{U}(Y,X,\theta)$ is suitably measurable and thus is a random set under our assumptions (see Lemma \ref{lemma_effros_measurability}). We also prove the existence of a universally measurable selection (see Lemma \ref{lemma_selection}).} Importantly, given a distribution of the observable random vectors $(Y,X)$, a structural function $\varphi$ and a fixed coefficient $\theta \in \Theta$, any two measurable selections $U$ and $U'$ from the random set $\mathcal{U}(Y,X,\theta)$ are \textit{observationally equivalent} in the sense that both latent variable vectors $U$ and $U'$ are consistent with the observed distribution of $Y$ and $X$ for the vector of parameters $\theta \in \Theta$ through the model \eqref{eq_model}. This idea can be used to define the identified set. %Constructing the identified set of latent variable distributions then becomes a problem of verifying whether a given random vector $U: \Omega \to \mathcal{U}$ is a measurable selection from the random set in \eqref{eq_random_set}, and then collecting the distributions of all such selections. 

\begin{definition}[Identified Set]\label{definition_identified_set}
Under Assumption \ref{assumption_basic}, the (joint) identified set $\mathcal{I}_{Y,X}^{*}$ of conditional latent variable distributions $P_{U\mid Y,X}$ and fixed coefficients $\theta$ is the set of all pairs $(P_{U\mid Y,X},\theta)$ satisfying:
\begin{align}
P_{U\mid Y,X}(U \in  \mathcal{U}(Y,X,\theta) \mid Y=y, X=x)=1, \,\, P_{Y,X}-a.s., \label{eq_identified_set}
\end{align}
and such that the distribution $P_{U} = P_{U\mid Y,X} P_{Y,X}$ assigns zero probability to all sets of the form $\{u \in \mathcal{U} : \varphi(x,u,\theta)=0\}$.
\end{definition}
%As promised, this definition relies heavily on the idea of selectionability from the random set $ \mathcal{U}(Y,X,Z,\theta)$, and permits only the (conditional) distributions of selections from $ \mathcal{U}(Y,X,Z,\theta)$ to belong to the identified set. 
This definition depends on the observed distribution of $(Y,X)$ through the almost-sure relation in \eqref{eq_identified_set}. Conditioning the latent variable distribution on the vector $(Y,X)$ is carried throughout the paper, and we show in Section \ref{section_application} that it allows us to bound counterfactual parameters that may be relevant to policy analysis. Definition \ref{definition_identified_set} can also be used to define other related identified sets, including identified sets for conditional latent variable distributions of the form $P_{U\mid X}$ or $P_{U}$. 

\begin{example}
Consider the simple additively separable threshold crossing model:
\begin{align*}
Y = \mathbbm{1}\{ X  \theta_{0} \geq U \},
\end{align*}
where $X \in \mathcal{X} \subset \mathbb{R}$ has finite support, and $U \in \mathbb{R}$. This is a special case of the model we consider, and is explored in detail in Appendix \ref{appendix_additively_separable}. Here we have:
\begin{align*}
\mathcal{U}(1,x,\theta) &:= \left\{ u \in \mathbb{R} : u \leq x  \theta \right\},\\
\mathcal{U}(0,x,\theta) &:= \left\{ u \in \mathbb{R} : u > x  \theta \right\}.
\end{align*}
Now suppose that $P(Y=y, X=x)>0$ for all $(y,x) \in \mathcal{Y} \times \mathcal{X}$. From Definition \ref{definition_identified_set}, the (joint) identified set $\mathcal{I}_{Y,X}^{*}$ is the set of all pairs $(P_{U\mid Y,X},\theta)$ satisfying the conditions:
\begin{align}
P_{U\mid Y,X}( U \leq X  \theta \mid Y=1, X=x)=1,\qquad \forall x \in \mathcal{X},\\
P_{U\mid Y,X}( U > X  \theta\mid Y=0, X=x)=1, \qquad \forall x \in \mathcal{X},
\end{align}
and such that the distribution $P_{U} = P_{U\mid Y,X} P_{Y,X}$ assigns zero probability to the sets $\{ \{u \in \mathbb{R} : u = x\theta\} : x \in \mathcal{X}\}$.
\end{example}

In this paper, we consider counterfactuals characterized by the occurrence of an exogenous \textit{intervention} that modifies at least one of the explanatory variables. %These interventionist counterfactuals take as primitive an existing set of structural equations relating causes and effects, and each equation in the system of structural equations is autonomous in the sense that it remains unaltered under external manipulations of its inputs.\footnote{\cite{heckman2015} attribute the notion of autonomous equations to \cite{frisch1938}. } In our case, the relevant structural equation is given by the binary response model in \eqref{eq_model}, where the random vectors $X$ and $Z$ might be interpreted as relevant causes of the binary random variable $Y$. However, differing from the typical structural equations environment, we allow for $X$ to be endogenous without explicitly providing a structural equation for it.\footnote{It is natural to imagine that a structural equation exists that determines $X$ as a function of its (potential) causes, but that we remain ignorant as to its exact form.} 
%An intervention is represented by an exogenous causal process capable of manipulating the values of $X$. %For exogenous random variables---that is, those whose values are determined outside of the model---we simply replace the random variable by its value under consideration in the counterfactual. For endogenous random variables---that is, those whose values are determined by a function of the other exogenous and endogenous variables within a model---the function determining the value of the endogenous variable is deleted from the system, and the endogenous variable is replaced by its value under consideration in the counterfactual.\footnote{We refer the reader to \cite{pearl2009causality} Section 7.1 for a discussion of a similar procedure. Such counterfactuals have a natural interpretation as ``hypothetical experiments,'' and are widely attributed to \cite{haavelmo1943statistical,haavelmo1944probability}.  } The following assumption summarizes this discussion.

\begin{assumption}[Counterfactual Domain]\label{assumption_counterfactual_domain}
For some collection of functions $\Gamma$ with typical element $\gamma: \mathcal{X}  \to \mathcal{X}$, there exists a collection of random variables $\{Y(\,\cdot\,,\gamma): \Omega \to \{0,1\} \mid \gamma \in \Gamma\}$, abbreviated $Y_{\gamma}:=Y(\,\cdot\,,\gamma)$, representing counterfactual choices for each $\gamma$ such that $Y_{\gamma}: \Omega \to \{0,1\}$ is measurable for each $\gamma$, and:
\begin{align}
P_{Y_{\gamma}\mid Y,X,U}\left(Y_{\gamma} = \mathbbm{1}\{\varphi(\gamma(X),U,\theta_{0}) \geq 0  \}\mid Y=y,X=x,U=u\right)=1,  \label{eq_pygamma}
\end{align}
$P_{Y,X,U}-$a.s. for the same $\theta_{0} \in \Theta$ as in Assumption \ref{assumption_basic}, and for all $\gamma \in \Gamma$.
\end{assumption}
%Assumption \ref{assumption_counterfactual_domain} posits the existence of a collection of counterfactual random variables indexed by $\gamma \in \Gamma$ that represent counterfactual responses or choices and that are defined on the same probability space as the random vector $(Y,X,Z,U)$. It then explicitly links these counterfactual random variables to the binary response model from Assumption \ref{assumption_basic} through condition \eqref{eq_pygamma}. 
Assumption \ref{assumption_counterfactual_domain} implies that (i) counterfactual response variables indexed by $\gamma \in \Gamma$ exist on the common probability space from Assumption \ref{assumption_basic}, and (ii) such counterfactual response variables are equal (almost surely) to the values that would arise after an intervention on the system represented by \eqref{eq_model}. %\footnote{In terms of practical interpretation, here we do not specify how such an intervention is to be carried out, and instead suppose that the random variables $Y(\omega,\gamma)$ are stable in the sense that they do not depend on any mechanism that may be generating the intervention $\gamma$. This assumption is very common, and we refer to \cite{heckman2007econometric} pp. 4790 - 4801 for a more detailed discussion of similar assumptions.} %It implicitly encodes an invariance assumption; other than changes in $(x,z)$ induced by $\gamma: \mathcal{X} \times \mathcal{Z} \to \mathcal{X} \times \mathcal{Z}$, all other aspects of the environment are held constant. 
Here each counterfactual is represented by a function $\gamma: \mathcal{X} \to \mathcal{X}$, which allows us to consider a general class of counterfactuals. Although each function $\gamma$ is seen as a map from $\mathcal{X}$ to itself, this does not prevent consideration of counterfactuals where $\gamma$ selects values of $x$ that have never been observed in the data. Such cases can be accommodated by simply extending the support $\mathcal{X}$ from Assumption \ref{assumption_basic} to include any counterfactual pair $x$ of interest.\footnote{In particular, if $x \in \mathcal{X}$ but $\gamma(x)=x' \notin \mathcal{X}$, then redefine $\mathcal{X} \leftarrow \mathcal{X} \cup \{x'\}$. This approach does not affect anything we present in this paper, since we always require any relation to the observed distribution of $(Y,X)$ to hold only almost-surely.  } %This means that our framework can be used to study the impact of historical interventions, as well as forecast the impacts of interventions in environments never before experienced.\footnote{See the three policy evaluation problems of \cite{heckman2007econometric} pp. 4790 - 4792.}

\begin{definition}[Identified Set of Counterfactual Conditional Distributions]\label{definition_identified_set_counterfactual}
Under Assumptions \ref{assumption_basic} and \ref{assumption_counterfactual_domain}, the identified set of counterfactual conditional distributions $\mathcal{P}_{Y_{\gamma}\mid Y,X,U}^{*}$ is the set of all conditional distributions $P_{Y_{\gamma}\mid Y,X,U}$ satisfying:
\begin{align}
P_{Y_{\gamma}\mid Y,X,U}\left(Y_{\gamma} = \mathbbm{1}\{\varphi(\gamma(X),U,\theta) \geq 0  \}\mid Y=y, X=x,U=u\right)=1, \label{eq_cf_xzt}
\end{align}
$P_{Y,X,U}-$a.s. for some $(P_{U\mid Y,X},\theta)\in \mathcal{I}_{Y,X}^{*}$. 
\end{definition}
Note this definition references the identified set $\mathcal{I}_{Y,X}^{*}$ from Definition \ref{definition_identified_set}. This definition can also be used to define other related identified sets, including for counterfactual distributions $P_{Y_{\gamma}\mid Y,X}$ or $P_{Y_{\gamma}\mid X}$, as well as identified sets for average structural functions and average treatment effects. 

\setcounter{example}{0}
\begin{example}[\textbf{cont'd}]
Consider again the simple additively separable threshold crossing model:
\begin{align*}
Y = \mathbbm{1}\{ X  \theta_{0} \geq U \},
\end{align*}
where $X \in \mathcal{X} \subset \mathbb{R}$ has finite support, and $U \in \mathbb{R}$. Now fix some $x' \in \mathcal{X}$ and consider the map $\gamma(x) = x'$ for all $x  \in \mathcal{X}$. This corresponds to a counterfactual map $\gamma:\mathcal{X}\to \mathcal{X}$ where all values of $X$ are set to the value $X=x'$. The counterfactual outcome variable $Y_{\gamma}$ satisfies:
\begin{align}
Y_{\gamma} = \mathbbm{1}\{x' \theta_{0} \geq U  \},
\end{align}
a.s. for the same $\theta_{0} \in \Theta$ as in Assumption \ref{assumption_basic}. The identified set of counterfactual conditional distributions $\mathcal{P}_{Y_{\gamma}\mid Y,X,U}^{*}$ is the set of all conditional distributions $P_{Y_{\gamma}\mid Y,X,U}$ satisfying:
\begin{align}
P_{Y_{\gamma}\mid Y,X,U}\left(Y_{\gamma} = \mathbbm{1}\{x' \theta \geq U \}\mid Y=y, X=x,U=u\right)=1, 
\end{align}
$P_{Y,X,U}-$a.s. for some $(P_{U\mid Y,X},\theta)\in \mathcal{I}_{Y,X}^{*}$. That is, the identified set contains all (and only those) conditional counterfactual distributions $P_{Y_{\gamma}\mid Y,X,U}$ consistent with the support restriction $Y_{\gamma} = \mathbbm{1}\{\varphi(x',U,\theta) \geq 0  \}$ a.s. for some pair $(P_{U\mid Y,X},\theta)$ belonging to the (joint) identified set of conditional latent variable distributions and fixed coefficients.
\end{example}

The following result connects the definitions and assumptions in this section.
\begin{theorem}\label{thm_counterfactual_identified_set}
Suppose that Assumptions \ref{assumption_basic} and \ref{assumption_counterfactual_domain} hold. Then a counterfactual conditional distribution $P_{Y_{\gamma}\mid Y,X}$ satisfies $P_{Y_{\gamma}\mid Y,X} \in \mathcal{P}_{Y_{\gamma}\mid Y,X}^{*}$ if and only if there exists a pair $(P_{U\mid Y,X},\theta) \in \mathcal{I}_{Y,X}^{*}$ satisfying:
\begin{align}
P_{Y_{\gamma}\mid Y,X}\left(Y_{\gamma} = 1\mid Y=y,X=x\right)= P_{U\mid Y,X}\left( \varphi(\gamma(X),U,\theta) \geq 0 \mid Y=y,X=x\right),\label{eq_theorem_counterfactual}
\end{align}
$P_{Y,X}-$a.s.
\end{theorem}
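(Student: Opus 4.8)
The plan is to prove the ``if and only if'' by unpacking the two relevant definitions and integrating out the latent variable $U$. The key observation is that Theorem \ref{thm_counterfactual_identified_set} connects the counterfactual conditional distribution $P_{Y_{\gamma}\mid Y,X}$ (which integrates over $U$) to the ``pointwise in $u$'' characterization in Definition \ref{definition_identified_set_counterfactual} (which conditions on $U=u$). The bridge between them is the law of total probability: $P_{Y_{\gamma}\mid Y,X}$ is obtained from $P_{Y_{\gamma}\mid Y,X,U}$ by integrating against the conditional latent variable distribution $P_{U\mid Y,X}$. I would therefore begin by writing, for each $(y,x)$ in the support of $(Y,X)$,
\begin{align*}
P_{Y_{\gamma}\mid Y,X}\left(Y_{\gamma}=1\mid Y=y,X=x\right)=\int P_{Y_{\gamma}\mid Y,X,U}\left(Y_{\gamma}=1\mid Y=y,X=x,U=u\right)\,dP_{U\mid Y,X}(u\mid y,x).
\end{align*}

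For the forward direction, suppose $P_{Y_{\gamma}\mid Y,X}\in\mathcal{P}_{Y_{\gamma}\mid Y,X}^{*}$. By the definition of this (projected) identified set, there is an element $P_{Y_{\gamma}\mid Y,X,U}\in\mathcal{P}_{Y_{\gamma}\mid Y,X,U}^{*}$ inducing it, and by Definition \ref{definition_identified_set_counterfactual} this element satisfies \eqref{eq_cf_xzt} for some $(P_{U\mid Y,X},\theta)\in\mathcal{I}_{Y,X}^{*}$. The relation \eqref{eq_cf_xzt} says precisely that $P_{Y_{\gamma}\mid Y,X,U}(Y_{\gamma}=1\mid y,x,u)=\mathbbm{1}\{\varphi(\gamma(x),u,\theta)\geq 0\}$ for $P_{Y,X,U}$-almost every $(y,x,u)$. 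Substituting this indicator into the integral above collapses it to $P_{U\mid Y,X}(\varphi(\gamma(X),U,\theta)\geq 0\mid Y=y,X=x)$, which is exactly \eqref{eq_theorem_counterfactual}. The same pair $(P_{U\mid Y,X},\theta)$ then witnesses the right-hand side of the theorem.

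For the reverse direction, suppose there is a pair $(P_{U\mid Y,X},\theta)\in\mathcal{I}_{Y,X}^{*}$ for which \eqref{eq_theorem_counterfactual} holds. I would construct an explicit element of $\mathcal{P}_{Y_{\gamma}\mid Y,X,U}^{*}$ by \emph{defining} the degenerate conditional law $P_{Y_{\gamma}\mid Y,X,U}(Y_{\gamma}=1\mid y,x,u):=\mathbbm{1}\{\varphi(\gamma(x),u,\theta)\geq 0\}$; by construction this satisfies \eqref{eq_cf_xzt} against the chosen $(P_{U\mid Y,X},\theta)\in\mathcal{I}_{Y,X}^{*}$, so it lies in $\mathcal{P}_{Y_{\gamma}\mid Y,X,U}^{*}$. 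Integrating it against $P_{U\mid Y,X}$ as above yields a counterfactual conditional distribution whose value at $Y_{\gamma}=1$ equals the right-hand side of \eqref{eq_theorem_counterfactual}, hence equals the given $P_{Y_{\gamma}\mid Y,X}$; this exhibits $P_{Y_{\gamma}\mid Y,X}$ as the $U$-marginalization of an admissible element and so $P_{Y_{\gamma}\mid Y,X}\in\mathcal{P}_{Y_{\gamma}\mid Y,X}^{*}$.

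The main obstacle is measure-theoretic bookkeeping rather than any deep idea: I must be careful that the almost-sure qualifiers are handled consistently when passing between the $P_{Y,X,U}$-a.s. statement in \eqref{eq_cf_xzt} and the $P_{Y,X}$-a.s. statement in \eqref{eq_theorem_counterfactual}, using the disintegration $P_{Y,X,U}=P_{U\mid Y,X}P_{Y,X}$ and Fubini/Tonelli to justify that a property holding $P_{U\mid Y,X}(\cdot\mid y,x)$-a.s. for $P_{Y,X}$-a.e. $(y,x)$ is equivalent to holding $P_{Y,X,U}$-a.s. I would also need to confirm that integrating a measurable indicator of the form $\mathbbm{1}\{\varphi(\gamma(x),u,\theta)\geq 0\}$ against the regular conditional probability $P_{U\mid Y,X}$ is well defined and yields $P_{U\mid Y,X}(\varphi(\gamma(X),U,\theta)\geq 0\mid Y=y,X=x)$, which follows from measurability of $\varphi(\gamma(\cdot),\cdot,\theta)$ under Assumption \ref{assumption_basic} and the definition of a regular conditional probability stated in the Notation paragraph.
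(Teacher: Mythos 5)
Your proposal is correct and follows essentially the same route as the paper: the forward direction integrates the support restriction \eqref{eq_cf_xzt} against $P_{U\mid Y,X}$ exactly as in the paper's proof, and the reverse direction exhibits the same degenerate conditional law $P_{Y_{\gamma}\mid Y,X,U}(Y_{\gamma}=1\mid y,x,u)=\mathbbm{1}\{\varphi(\gamma(x),u,\theta)\geq 0\}$ that the paper arrives at (the paper phrases its existence via a Radon--Nikodym/regular-conditional-distribution argument, but the object it verifies is the same one you construct directly). Your explicit construction is, if anything, slightly cleaner, and your closing remarks on the a.s.\ bookkeeping via the disintegration $P_{Y,X,U}=P_{U\mid Y,X}P_{Y,X}$ cover the only genuinely delicate point.
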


Theorem \ref{thm_counterfactual_identified_set} provides the theoretical link between the identified set of counterfactual conditional distributions, and the identified set for the pair $(P_{U\mid Y,X},\theta)$. %In particular, a given counterfactual conditional choice probability is a member of the identified set if and only if there exists a conditional distribution $P_{U\mid Y,X,Z}$ and parameter $\theta$ satisfying \eqref{eq_identified_set} that can rationalize the counterfactual conditional choice probabilities through \eqref{eq_theorem_counterfactual}. The distribution of the observed random variables is informative about the distribution of latent variables, which in turn are informative about counterfactual choices. 
While theoretically straightforward, it hides some important practical difficulties. In particular, verifying the existence of a pair $(P_{U\mid Y,X},\theta)$ that satisfies the conditions from Definition \ref{definition_identified_set} is a nontrivial task. This is at least partly due to the fact that $P_{U\mid Y,X}$ is an infinite dimensional object, even in the case when $X$ has finite support. This \textit{infinite dimensional existence problem} is exacerbated in practice by the fact that $P_{U\mid Y,X}$ must satisfy a number of constraints to ensure it is consistent with the binary response model through \eqref{eq_identified_set}, and to ensure it is a proper conditional probability measure. We consider these practical difficulties in detail in the next section. 

\section{General Framework: Practical Considerations}\label{section_practical_considerations}

To make progress, define the following vector-valued function:
\begin{align}
r(u,\theta) := \begin{bmatrix} \mathbbm{1}\{\varphi(x_{1},u,\theta)\geq 0\} &
\mathbbm{1}\{\varphi(x_{2},u,\theta) \geq 0\} &
 \ldots &
 %\mathbbm{1}\{\varphi(x_{1},z_{m_{z}},\theta,\theta) \geq 0\}\\
% \mathbbm{1}\{\varphi(x_{2},z_{1},\theta,\theta)\geq 0\}\\
%\mathbbm{1}\{\varphi(x_{2},z_{2},\theta,\theta) \geq 0\}\\
%\vdots\\
\mathbbm{1}\{\varphi(x_{m},u,\theta) \geq 0\} \end{bmatrix}^\top,\label{eq_response_type}
\end{align}
and for a fixed binary vector $s \in \{0,1\}^{m}$ define the set:
\begin{align}
\mathcal{U}(s,\theta):= \left\{ u \in \mathcal{U} : r(u,\theta) = s \right\}.\label{eq_theta_partition}
\end{align}
The sets from \eqref{eq_theta_partition} partition the space $\mathcal{U}$ into at most $2^{m}$ sets, with each set uniquely associated with a binary vector $s \in \{0,1\}^{m}$. The binary vectors $r(u,\theta)$ represent \textit{response types} (c.f. \cite{balke1994counterfactual} and \cite{heckman2018unordered}).\footnote{The collection of sets defining response types appears to be similar to the ``minimal relevant partition'' in \cite{tebaldi2019nonparametric}, as well as the partition described in \cite{chesher2014instrumental} Appendix B. } 
In the discrete choice setting, these response types tell us the choices that an individual with type indexed by $(u,\theta)$ \textit{would have made} had they been assigned an alternate value of $x$. Any two individuals characterized by values of $u$ from the same set $\mathcal{U}(s,\theta)$ make identical choices under every possible counterfactual assignment $\gamma: \mathcal{X} \to \mathcal{X}$, making this a natural grouping of latent types. %We will see shortly that response types represent the basic building blocks of all of our counterfactual objects of interest. Indeed, they generate the coarsest partition of the unobservable space $\mathcal{U}$ needed to bound a variety of counterfactual quantities while still retaining all information from Assumptions \ref{assumption_basic} and \ref{assumption_counterfactual_domain}.

After partitioning the space of latent variables using response types, various counterfactual objects of interest can be written as a disjoint union of the sets $\mathcal{U}(s,\theta)$ from \eqref{eq_theta_partition} that comprise the partition. For the sake of illustration, consider the binary vectors:
\begin{align}
S_{j} = \{ s \in \{0,1\}^{m} : s_{j} =1 \},\label{eq_sj}
\end{align}
for $j=1,\ldots,m$. Note that each set $S_{j}$ is comprised of all binary vectors that have a $j^{th}$ entry equal to $1$, and thus contain exactly $2^{m-1}$ elements.\footnote{It is useful to note that the sets $\{S_{j}\}_{j=1}^{m}$ are not disjoint; indeed, it is easy to show that $S_{j} \cap S_{k} \neq \emptyset$ and $S_{j} \cap S_{k}^{c} \neq \emptyset$ for every $j\neq k$.} By definition of the sets $\mathcal{U}(s,\theta)$ and $S_{j}$ we have:
\begin{align}
\{u \in \mathcal{U} : \varphi(x_{j},u,\theta) \geq 0\} = \bigcup_{s \in S_{j}} \mathcal{U}(s,\theta). 
\end{align}
Furthermore, for $s' \neq s$  we have $\mathcal{U}(s',\theta) \cap \mathcal{U}(s,\theta) = \emptyset$, so that this union is a disjoint union. Thus, we have the following decomposition:
\begin{align}
P_{U\mid Y,X}\left( \varphi(x_{j},u,\theta) \geq 0 \mid Y=y,X=x\right)= \sum_{s \in S_{j}}  P_{U\mid Y, X}\left( \mathcal{U}(s,\theta) \mid Y=y, X=x\right).\label{eq_specific_counterfactual}
\end{align}
Such a decomposition holds for any $j=1,\ldots,m$. When the conditioning value $x$ differs from the value $x_{j}$ in the structural function, an application of Theorem \ref{thm_counterfactual_identified_set} shows that the left hand side of this display represents a counterfactual conditional distribution, illustrating the connection between response types and counterfactual choices.

The next result shows that, in order to rationalize a given collection of counterfactual conditional distribution, for each $\theta$ it is both necessary and sufficient to construct a probability measure on sets of the form $\mathcal{U}(s,\theta)$ from \eqref{eq_theta_partition} that satisfies the constraints of Theorem \ref{thm_counterfactual_identified_set}. In the statement of the result we redefine $\gamma:\mathbb{N}\to \mathbb{N}$ to denote the index of the point in $\{x_{1},,\ldots,x_{m}\}$ assigned under counterfactual $\gamma$, and we set $S_{\gamma(j)} := \{ s \in \{0,1\}^{m} : s_{\gamma(j)} =1 \}$, the analog of $S_{j}$ from \eqref{eq_sj}. %\todo{``Furthermore, it is not necessary to use any ``finer'' collection of sets, and is not sufficient to use any ``coarser'' collection of sets.'' }
\begin{theorem}\label{theorem_infinite_to_finite}
Suppose Assumptions \ref{assumption_basic} and \ref{assumption_counterfactual_domain} hold. Fix some $\theta \in \Theta$ and consider the collection of sets:
\begin{align}
\mathcal{A}(\theta):= \{\text{int}(\mathcal{U}(s,\theta)) : s \in \{0,1\}^{m}\}. \label{eq_sufficient_sets}
\end{align}
Then for any collection of counterfactual conditional distributions $P_{Y_{\gamma} \mid Y,X}$, there exists a collection of Borel conditional probability measures $P_{U\mid Y,X}$ satisfying \eqref{eq_theorem_counterfactual} with $(P_{U\mid Y,X},\theta) \in \mathcal{I}_{Y,X}^{*}$ if and only if there exists a collection $P_{U \mid Y,X}$ of probability measures on the sets in $\mathcal{A}(\theta)$ satisfying:
\begin{align}
\sum_{s \in S_{j}}  P_{U\mid Y,X}\left( \text{int}(\mathcal{U}(s,\theta)) \mid Y=1,X=x_{j}\right)&=1,\label{eq_finite_existence1}\\
\sum_{s \in S_{j}^{c}}  P_{U\mid Y,X}\left( \text{int}(\mathcal{U}(s,\theta)) \mid Y=0,X=x_{j}\right)&=1,\label{eq_finite_existence2}\\
\sum_{s \in S_{\gamma(j)}}  P_{U\mid Y,X}\left( \text{int}(\mathcal{U}(s,\theta)) \mid Y=y,X=x_{j}\right)&=P_{Y_{\gamma}\mid Y,X}\left( Y_{\gamma}=1 \mid Y=y,X=x_{j}\right),\label{eq_finite_existence3}
\end{align}
for all $y \in \{0,1\}$ and $j \in \{1,\ldots,m\}$ assigned positive probability.
\end{theorem}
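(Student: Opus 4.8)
The plan is to prove the two implications separately, exploiting throughout the assumption from Definition \ref{definition_identified_set} (and Assumption \ref{assumption_basic}) that admissible latent distributions place zero mass on every hyperplane $\{u : \varphi(x_{j},u,\theta)=0\}$. The union $H:=\bigcup_{j=1}^{m} \{u:\varphi(x_{j},u,\theta)=0\}$ of these hyperplanes is exactly the complement of the disjoint family $\{\text{int}(\mathcal{U}(s,\theta))\}_{s}$, so the open cells in $\mathcal{A}(\theta)$ are pairwise disjoint and cover $\mathcal{U}$ up to the $P_{U}$-null set $H$. This single observation lets me pass freely between each response-type cell $\mathcal{U}(s,\theta)$ and its interior, and it is what reduces the infinite-dimensional object $P_{U\mid Y,X}$ to the finite vector of cell weights $w_{y,x_{j}}(s):=P_{U\mid Y,X}(\text{int}(\mathcal{U}(s,\theta))\mid Y=y,X=x_{j})$. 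Since $(Y,X)$ has finite support, a conditional distribution here is just a finite family of Borel probability measures indexed by the $(y,x_{j})$ with $P(Y=y,X=x_{j})>0$, so no measurability or regularity issues arise beyond those already handled under Assumption \ref{assumption_basic}.

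For necessity, suppose $(P_{U\mid Y,X},\theta)\in\mathcal{I}_{Y,X}^{*}$ satisfies \eqref{eq_theorem_counterfactual}, and define the weights $w_{y,x_{j}}(s)$ as above. Because $H$ is $P_{U}$-null, these weights sum to one over $s$ and hence define a probability measure on the sets of $\mathcal{A}(\theta)$. The identified-set restriction \eqref{eq_identified_set} forces $U\in\mathcal{U}(1,x_{j},\theta)=\bigcup_{s\in S_{j}}\mathcal{U}(s,\theta)$ almost surely given $Y=1,X=x_{j}$; discarding the null boundary gives \eqref{eq_finite_existence1}, and the symmetric argument with $\mathcal{U}(0,x_{j},\theta)=\bigcup_{s\in S_{j}^{c}}\mathcal{U}(s,\theta)$ gives \eqref{eq_finite_existence2}. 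For \eqref{eq_finite_existence3}, I apply Theorem \ref{thm_counterfactual_identified_set} with $\gamma(x_{j})=x_{\gamma(j)}$ and use the disjoint decomposition $\{u:\varphi(x_{\gamma(j)},u,\theta)\geq 0\}=\bigcup_{s\in S_{\gamma(j)}}\mathcal{U}(s,\theta)$ together with \eqref{eq_specific_counterfactual}, once more swapping cells for their interiors.

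For sufficiency, suppose the cell weights $w_{y,x_{j}}(s)$ satisfy \eqref{eq_finite_existence1}--\eqref{eq_finite_existence3}. Whenever $w_{y,x_{j}}(s)>0$ the set $\text{int}(\mathcal{U}(s,\theta))$ is nonempty, since a probability measure cannot assign positive mass to $\emptyset$, so I may fix a representative $u_{s}\in\text{int}(\mathcal{U}(s,\theta))$ and define $P_{U\mid Y,X}(\,\cdot\mid Y=y,X=x_{j}):=\sum_{s} w_{y,x_{j}}(s)\,\delta_{u_{s}}$. This is a Borel probability measure, and since every $u_{s}$ lies strictly off all the hyperplanes, it assigns zero mass to each $\{u:\varphi(x_{j},u,\theta)=0\}$, so the induced $P_{U}$ is admissible. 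Condition \eqref{eq_finite_existence1} (resp.\ \eqref{eq_finite_existence2}) places all mass on atoms with $\varphi(x_{j},u_{s},\theta)>0$ (resp.\ $<0$), which verifies the support relation \eqref{eq_identified_set}; and evaluating the Dirac mixture on $\{u:\varphi(x_{\gamma(j)},u,\theta)\geq 0\}$ yields $\sum_{s\in S_{\gamma(j)}}w_{y,x_{j}}(s)$, which by \eqref{eq_finite_existence3} equals the prescribed counterfactual probability, giving \eqref{eq_theorem_counterfactual} through Theorem \ref{thm_counterfactual_identified_set}. Hence $(P_{U\mid Y,X},\theta)\in\mathcal{I}_{Y,X}^{*}$.

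I expect the sufficiency direction to be the main obstacle, as it requires actually \emph{constructing} an infinite-dimensional Borel measure from finitely many numbers rather than merely restricting an existing one. The delicate points are (i) justifying nonemptiness of the interiors carrying positive weight, and (ii) ensuring the lifted measure simultaneously respects the zero-boundary condition and the support restriction \eqref{eq_identified_set}; both are resolved by placing atoms at interior points, which automatically avoid $H$ and lie inside the correct half-spaces. The necessity direction, by contrast, is essentially bookkeeping once the $P_{U}$-nullity of $H$ is invoked.
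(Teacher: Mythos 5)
Your proposal is correct and follows essentially the same route as the paper: necessity is the same bookkeeping via the $P_{U}$-nullity of the hyperplane union (using $P_{U\mid Y,X}\ll P_{U}$ on positive-probability cells), and your sufficiency construction---a Dirac mixture with atoms at interior points of the positively weighted cells---is exactly the paper's extension $\mu_{y,x}(B)=\sum_{s}\mathbbm{1}\{u(s,\theta)\in B\cap \text{int}(\mathcal{U}(s,\theta))\}\,P_{U\mid Y,X}(\text{int}(\mathcal{U}(s,\theta))\mid Y=y,X=x)$ written in different notation. The points you flag as delicate (nonemptiness of positively weighted interiors, atoms avoiding the boundary set) are handled the same way in the paper's proof.
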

Theorem \ref{theorem_infinite_to_finite} reduces our infinite dimensional existence problem to a finite dimensional existence problem. For a fixed $\theta \in \Theta$, instead of verifying there exists a Borel probability measure $P_{U \mid Y, X}$ satisfying \eqref{eq_identified_set} and \eqref{eq_theorem_counterfactual} (an infinite-dimensional object), Theorem \ref{theorem_infinite_to_finite} shows it suffices to verify there exists a finite dimensional probability vector with typical element $P_{U\mid Y,X}\left( \text{int}(\mathcal{U}(s,\theta)) \mid Y=y, X=x\right)$ satisfying \eqref{eq_finite_existence1} - \eqref{eq_finite_existence3}. Note that this result relies crucially on the finiteness of $\mathcal{X}$. There is a close connection between Theorem \ref{theorem_infinite_to_finite} and the bounding approach based on Artstein's inequalities (e.g. \cite{chesher2017generalized}) and optimal transportation (e.g. \cite{galichon2011set}).\footnote{In Appendix \ref{appendix_artstein_comparison} we show that Theorem \ref{theorem_infinite_to_finite} is equivalent to a characterization based on Artstein's inequalities after conditioning on the value of the endogenous variables. This conditioning allows us to obtain a much smaller number of \textit{equality} constraints when compared to the full set of unconditional constraints arising from Artstein's inequalities. However, Artstein's inequalities can handle continuous instruments.  It is also well known that Artstein's inequalities are equivalent to the existence of a certain zero-cost optimal transport problem (see \cite{galichon2009test}, \cite{galichon2011set}, and \cite{galichon2016optimal}). } Related results have also appeared in \cite{laffers2019identification} and \cite{torgovitsky2019partial}. The finite number of linear constraints from Theorem \ref{theorem_infinite_to_finite} leads naturally to a linear programming approach to bounds on counterfactual quantities.

\subsection{Optimization Formulation}\label{section_optimization_formulation}
%The linear constraints defining the identified set of counterfactual conditional choice probabilities lead naturally to bounding counterfactual choice probabilities by solving optimization problems. %For each fixed value of the parameter $\theta$, we can determine the set of counterfactual choice probabilities that can be rationalized by the binary response model through the constraints in Theorem \ref{theorem_infinite_to_finite}. Since the constraints in Theorem \ref{theorem_infinite_to_finite} are linear for each $\theta$, the resulting set of counterfactual choice probabilities will be characterized by an interval whose endpoints are determined by the value of two linear programming problems: one for the upper bound and one for the lower bound. By taking the union over $\theta$ of all such intervals, the resulting set will represents the closed convex hull of the identified set of counterfactual conditional choice probabilities.  After establishing our baseline result for bounding counterfactual choice probabilities using linear programs, we will then address how additional assumptions can be used to tighten the identified set in the next section.
For simplicity, we suppose throughout this subsection that our objective is to bound the counterfactual probability:
\begin{align}
P_{Y_{\gamma}\mid Y,X}(Y_{\gamma} = 1\mid Y=y, X=x_{j}),\label{eq_counterfactual_choice_probability}
\end{align}
for some $j \in \{1,\ldots,m\}$. All the results in this section are immediately applicable to the case when we wish to bound some linear function of these counterfactual probabilities. %, including probabilities of the form:
%\begin{align}
%P_{Y_{\gamma}\mid X,Z}(Y_{\gamma} = 1\mid X=x_{j},Z=z_{j}).\label{eq_counterfactual_choice_probability2}
%\end{align}
From Theorem \ref{theorem_infinite_to_finite} our counterfactual probability can be written as:
\begin{align}
P_{Y_{\gamma}\mid Y,X}(Y_{\gamma} = 1\mid Y=y, X=x_{j}) = \sum_{s \in S_{\gamma(j)}}  P_{U\mid Y, X}\left( \text{int}(\mathcal{U}(s,\theta)) \mid Y=y, X=x_{j}\right),
\end{align}
where $\gamma(j)$ is the index in $\{1,\ldots,m\}$ assigned to $j$ under counterfactual $\gamma$. Define the parameter:
\begin{align}
\pi(y,x,s,\theta)=P_{U\mid Y,X}\left( \text{int}(\mathcal{U}(s,\theta)) \mid Y=y,X=x\right).\label{eq_nu_parameter_vector}
\end{align}
Furthermore, define:
\begin{align}
\pi(y,s,\theta) &:= 
\begin{bmatrix}
\pi(y,x_{1},s,\theta) &
\pi(y,x_{2},s,\theta) &
\ldots &
\pi(y,x_{m},s,\theta)
\end{bmatrix}^\top,
\end{align}
\begin{align}
\pi(y,\theta):=  \begin{bmatrix}
\pi(y,s_{1},\theta)^\top & \pi(y,s_{2},\theta)^\top & \ldots & \pi(y,s_{2^{m}},\theta)^\top
\end{bmatrix}^\top, &&\pi(\theta):=  \begin{bmatrix} \pi(0,\theta)^\top & \pi(1,\theta)^\top\end{bmatrix}^\top.
\end{align}
The vector of parameters $\pi(\theta)$ represents the variable over which we optimize in our result ahead, and has dimension $d_{\pi} = 2m2^m$. Without loss of generality, we suppose that each $(y,x)$ is assigned positive probability by the observed distribution. From conditions \eqref{eq_finite_existence1} and \eqref{eq_finite_existence2} in Theorem \ref{theorem_infinite_to_finite}, we have the constraints:
\begin{align}
\sum_{s \in S_{j}}  \pi(1,x_{j},s,\theta)=1,&&\sum_{s \in S_{j}^{c}}  \pi(0,x_{j},s,\theta)=1,\label{eq_moment_conditions_finite_case}
\end{align}
for $j=1,\ldots,m$. Finally, we require the constraints:
\begin{align}
\pi(y,x_{j},s,\theta) \in \begin{cases}
	\{0\}, &\text{ if }\text{int}(\mathcal{U}(s,\theta))=\emptyset,\\
	[0,1], &\text{ otherwise},
\end{cases}\label{eq_non_negative}
\end{align}
for all $y\in\{0,1\}$ and $j=1,\ldots,m$ and $s \in \{0,1\}^{m}$, and:
\begin{align}
\sum_{s \in \{0,1\}^{m}} \pi(y,x_{j},s,\theta) =1,\label{eq_adding_up}
\end{align}
for all $y \in \{0,1\}$ and $j=1,\ldots,m$. %The constraints in \eqref{eq_non_negative} imply that positive probability can only be assigned to nonempty sets of the form $\mathcal{U}(s,\theta)$, and the constraints in \eqref{eq_adding_up} ensure that each conditional probability assigns probability $1$ to the entire space $\mathcal{U}$.\footnote{Note that constraints of the form \eqref{eq_adding_up} are not implied by constraints of the form \eqref{eq_moment_conditions_finite_case}. } 
Note that to impose the constraints from \eqref{eq_non_negative}, the researcher must first determine which sets $\mathcal{U}(s,\theta)$ have nonempty interior. We return to this point in the next subsection. We are now ready to state one of the main results. 
\begin{theorem}\label{thm_linprog_finite_case}
Under Assumptions \ref{assumption_basic} and \ref{assumption_counterfactual_domain}, the identified set for the counterfactual conditional probability $P_{Y_{\gamma}\mid Y,X}(Y_{\gamma} = 1\mid Y=y, X=x_{j})$ is given by:
\begin{align}
\bigcup_{\theta \in \Theta} [\pi_{\ell b}(y,x_{j},\theta),\pi_{u b}(y,x_{j},\theta)], \label{eq_thm_linprog_finite_union}
\end{align}
where $\pi_{\ell b}(y,x_{j},\theta)$ and $\pi_{u b}(y,x_{j},\theta)$ are determined by the optimization problems:
\begin{align}
\pi_{\ell b}(y,x_{j},\theta) &:= \min_{\pi(\theta)\in \mathbb{R}^{d_{\pi}}} \sum_{s \in S_{\gamma(j)}} \pi(y,x_{j},s,\theta), \text{ subject to \eqref{eq_moment_conditions_finite_case}, \eqref{eq_non_negative}, and \eqref{eq_adding_up},}\label{eq_thm_linprog_finite_LB}\\
\pi_{u b}(y,x_{j},\theta) &:=  \max_{\pi(\theta)\in \mathbb{R}^{d_{\pi}}} \sum_{s \in S_{\gamma(j)}} \pi(y,x_{j},s,\theta), \text{ subject to \eqref{eq_moment_conditions_finite_case}, \eqref{eq_non_negative}, and \eqref{eq_adding_up}. }\label{eq_thm_linprog_finite_UB}
\end{align}
\end{theorem}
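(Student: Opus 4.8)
The plan is to treat Theorem~\ref{thm_linprog_finite_case} as essentially a corollary of Theorem~\ref{theorem_infinite_to_finite} combined with elementary convexity and linear-programming arguments, so that the bulk of the work is the bookkeeping that converts the characterization of the identified set into the range of a linear functional over a polytope. First I would fix $\theta \in \Theta$ and determine the collection of values the target $P_{Y_{\gamma}\mid Y,X}(Y_{\gamma} = 1\mid Y=y, X=x_{j})$ can take among counterfactual distributions that are rationalizable with this particular $\theta$. By the definition of $\mathcal{P}_{Y_{\gamma}\mid Y,X}^{*}$ and Theorem~\ref{thm_counterfactual_identified_set}, a counterfactual distribution is rationalizable with $\theta$ precisely when there is a pair $(P_{U\mid Y,X},\theta) \in \mathcal{I}_{Y,X}^{*}$ satisfying \eqref{eq_theorem_counterfactual}; Theorem~\ref{theorem_infinite_to_finite} then replaces the infinite-dimensional object $P_{U\mid Y,X}$ by a finite-dimensional probability vector $\pi(\theta)$ subject to \eqref{eq_finite_existence1}--\eqref{eq_finite_existence3}. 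Using the decomposition \eqref{eq_specific_counterfactual} together with the assumption that $U$ places no mass on the boundary sets $\{u:\varphi(x,u,\theta)=0\}$, the target equals the linear functional $\sum_{s \in S_{\gamma(j)}} \pi(y,x_{j},s,\theta)$, while conditions \eqref{eq_finite_existence1}--\eqref{eq_finite_existence2} (model consistency) together with the requirement that each $\pi(y,x_{j},\cdot,\theta)$ be a genuine probability distribution supported on the nonempty-interior cells are exactly the constraints \eqref{eq_moment_conditions_finite_case}, \eqref{eq_non_negative} and \eqref{eq_adding_up}.

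The key conceptual step is recognizing that, because we bound a single coordinate of the counterfactual distribution rather than the entire collection, condition \eqref{eq_finite_existence3} should be moved from the list of constraints into the objective. Concretely, the $\theta$-specific identified set for the target is the projection onto the $(y,x_{j})$ coordinate of the set of all rationalizable counterfactual distributions, and since the feasibility constraints on $\pi(\theta)$ do not couple this coordinate to the others, that projection equals the image of the feasible set $\mathcal{F}(\theta)$ defined by \eqref{eq_moment_conditions_finite_case}, \eqref{eq_non_negative} and \eqref{eq_adding_up} under the linear map $\pi(\theta) \mapsto \sum_{s \in S_{\gamma(j)}} \pi(y,x_{j},s,\theta)$. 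I would then observe that $\mathcal{F}(\theta)$ is a compact convex polytope---a closed subset of the box $[0,1]^{d_{\pi}}$ cut out by finitely many linear equalities---so that when it is nonempty its image under a linear functional is a compact convex subset of $\mathbb{R}$, i.e.\ a closed bounded interval whose endpoints are attained and equal the optimal values $\pi_{\ell b}(y,x_{j},\theta)$ and $\pi_{u b}(y,x_{j},\theta)$ of \eqref{eq_thm_linprog_finite_LB}--\eqref{eq_thm_linprog_finite_UB}. Convexity of the image (image of a convex set under an affine map) delivers that every intermediate value is also achievable, giving exactly $[\pi_{\ell b}(y,x_{j},\theta),\pi_{u b}(y,x_{j},\theta)]$; when $\mathcal{F}(\theta)=\emptyset$ the two programs are infeasible and $\theta$ contributes nothing to the union.

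Finally, since membership in $\mathcal{P}_{Y_{\gamma}\mid Y,X}^{*}$ quantifies existentially over $\theta \in \Theta$, the full identified set for the target is the union over $\theta$ of the $\theta$-specific identified sets, which is exactly the claimed representation $\bigcup_{\theta \in \Theta}[\pi_{\ell b}(y,x_{j},\theta),\pi_{u b}(y,x_{j},\theta)]$. The main obstacle I anticipate is not any single hard estimate but the careful justification of the projection argument in the second paragraph---verifying that optimizing the single functional from \eqref{eq_finite_existence3} over the feasibility constraints correctly recovers the marginal identified set for one counterfactual probability, and that dropping the matching conditions for the remaining coordinates neither enlarges nor shrinks the resulting set---together with the routine but necessary check that the equivalence supplied by Theorem~\ref{theorem_infinite_to_finite} lines up constraint-for-constraint with \eqref{eq_moment_conditions_finite_case}, \eqref{eq_non_negative} and \eqref{eq_adding_up}.
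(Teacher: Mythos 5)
Your proposal is correct and follows essentially the same route as the paper's proof: reduce to the finite-dimensional characterization via Theorems \ref{thm_counterfactual_identified_set} and \ref{theorem_infinite_to_finite}, observe that \eqref{eq_moment_conditions_finite_case}, \eqref{eq_non_negative} and \eqref{eq_adding_up} encode \eqref{eq_finite_existence1}--\eqref{eq_finite_existence2} while the objective absorbs \eqref{eq_finite_existence3}, and then take the union over $\theta$. If anything, you are more explicit than the paper about why the $\theta$-specific set is exactly the closed interval $[\pi_{\ell b},\pi_{ub}]$ (image of a compact convex polytope under a linear map), a step the paper leaves implicit.
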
 
\begin{remark}
If $\Theta^{*}$ is the identified set for $\theta$, then the interval $[\pi_{\ell b}(y,x_{j},\theta),\pi_{u b}(y,x_{j},\theta)]$ is empty for all values of $\theta \in \Theta\setminus \Theta^{*}$. Thus, it is equivalent to take the union in \eqref{eq_thm_linprog_finite_union} over $\Theta$ or $\Theta^{*}$. The statement of Theorem \ref{thm_linprog_finite_case} does not assume the identified set $\Theta^{*}$ is known. 
\end{remark}
In one direction, Theorem \ref{thm_linprog_finite_case} implies that any counterfactual conditional probability of the form \eqref{eq_counterfactual_choice_probability} belonging to the identified set can be written as:
\begin{align}
P_{Y_{\gamma}\mid Y,X}\left(Y_{\gamma} = 1\mid Y=y, X=x_{j}\right) = \sum_{s \in S_{\gamma(j)}} \pi(y,x_{j},\theta,s),
\end{align}
for some $\theta$ and some vector $\pi(\theta)$ satisfying the constraints \eqref{eq_moment_conditions_finite_case}, \eqref{eq_non_negative}, and \eqref{eq_adding_up}. In the opposite direction, the Theorem implies that if for some $\theta$ the vector $\pi(\theta)$ satisfies the constraints \eqref{eq_moment_conditions_finite_case}, \eqref{eq_non_negative}, and \eqref{eq_adding_up} then the conditional probability measure on $\mathcal{U}$ represented by $\pi(\theta)$ can be extended to a (not necessarily unique) Borel probability measure on all of $\mathfrak{B}(\mathcal{U})$ that satisfies the conditions of Theorem \ref{thm_counterfactual_identified_set}. This result can be easily modified to bound any linear function of counterfactual conditional probabilities by simply modifying the objective function in Theorem \ref{thm_linprog_finite_case}. We make use of this fact in the application section. 

After determining which of the sets $\mathcal{U}(s,\theta)$ have nonempty interior, all the constraints in \eqref{eq_thm_linprog_finite_LB} and \eqref{eq_thm_linprog_finite_UB} can be written as linear equality/inequality constraints, so that the optimization problems in \eqref{eq_thm_linprog_finite_LB} and \eqref{eq_thm_linprog_finite_UB} are linear programming problems. This is very beneficial, since linear programs can be efficiently solved even in cases with thousands of parameters and constraints. 

Interestingly, the proofs for Theorems \ref{theorem_infinite_to_finite} and \ref{thm_linprog_finite_case} do not require linearity of the index function in $U$ from Assumption \ref{assumption_basic}, although this assumption is used starting in the next subsection. This implies that Theorem \ref{thm_linprog_finite_case} can be used to bound counterfactual parameters for models of the form:
\begin{align}
Y = \mathbbm{1}\{\varphi(X,U,\theta) \geq 0 \},\label{eq_nonparametric_case}
\end{align}
without imposing any restrictions on the index function. Without any restrictions, it is always possible to construct a function $\varphi$ such that all regions $\mathcal{U}(s,\theta)$ have nonempty interior, implying that there are $2^m$ response types. In this case, constraint \eqref{eq_non_negative} only imposes that all probabilities are bounded between zero and one, and the rest of Theorem \ref{thm_linprog_finite_case} remains unchanged. We illustrate our procedure using the model \eqref{eq_nonparametric_case} in the application section. 

In the general case, elements of $\pi(\theta)$ corresponding to sets $\mathcal{U}(s,\theta)$ with empty interior can be removed from the parameter vector $\pi(\theta)$ without altering the optimal solutions to the linear programs in \eqref{eq_thm_linprog_finite_LB} and \eqref{eq_thm_linprog_finite_UB}. This allows for further reduction of the dimension of these linear programs. % Following Theorem \ref{thm_linprog_finite_case}, these linear programs are used to construct an interval for each value of $\theta \in \Theta$, and then the full identified set is constructed by taking the union of these intervals over all values of $\theta$. 
Although the number of sets $\mathcal{U}(s,\theta)$ appear to grow exponentially in $m$, in the subsections ahead we show that under linearity of the index function in $U$ the number of sets $\mathcal{U}(s,\theta)$ that have nonempty interior grows at a rate that is polynomial in $m$, substantially reducing the computational burden. 

\subsection{Hyperplane Arrangements and Cell Enumeration}

To use Theorem \ref{thm_linprog_finite_case}, we must first determine which set $\mathcal{U}(s,\theta)$ have nonempty interior in order to impose the constraints from \eqref{eq_non_negative}. This section explains how this can be done in our setting by using linearity of the index function in the latent variables in combination with the cell enumeration algorithm of \cite{gu2020nonparametric}.

First we demonstrate that linearity of the index function in the latent variables imposes restrictions on the model by limiting the number of sets $\mathcal{U}(s,\theta)$ that can be assigned positive probability. Constraining sets of the form $\mathcal{U}(s,\theta)$ to be assigned zero probability is called \textit{eliminating response types}. Response types corresponding to sets $\mathcal{U}(s,\theta)$ that survive elimination are called \textit{admissible}, and response types corresponding to sets $\mathcal{U}(s,\theta)$ that are eliminated are called \textit{inadmissible}. The following simple example shows how we can eliminate response types under Assumption \ref{assumption_basic}.  

\begin{example}\label{example_linear}
Suppose we have a variable $X\in\{0.5,1,2\}$ and latent variables $U \in \mathbb{R}^{2}$. Assume there are no fixed coefficients $\theta$. Then the index function from \eqref{eq_model} can be written as $\varphi(X,U)$ and the binary response vector $r(u,\theta)$ can be written as $r(u)$, where:
\begin{align*}
r(u) = 
\begin{bmatrix}
\mathbbm{1}\{\varphi(0.5,u) \geq 0 \}\\
\mathbbm{1}\{\varphi(1,u) \geq 0 \}\\
\mathbbm{1}\{\varphi(2,u) \geq 0 \}
\end{bmatrix}.
\end{align*}
Without any additional restrictions on $\varphi$ there are a total of $2^{|\mathcal{X}|}=8$ possible response types.\footnote{For example, if $U=(U_{1},U_{2})$ take $\varphi(X,U)=\sin(U_{1}X+U_{2})$ and fix $U_{2}=0$. Then it is straightforward to find eight values of the parameter $U_{1} \in [-1,1]$ to rationalize each of the $8$ response types.} That is, $r(u) \in \{s_{1}, \ldots,s_{8}\}$, where:
\begin{align*}
s_{1} = \begin{bmatrix}
0\\
0\\
0
\end{bmatrix}, &&s_{2} = \begin{bmatrix}
1\\
0\\
0
\end{bmatrix},&&s_{3} = \begin{bmatrix}
0\\
1\\
0
\end{bmatrix},&&s_{4} = \begin{bmatrix}
1\\
1\\
0
\end{bmatrix},&&s_{5} = \begin{bmatrix}
0\\
0\\
1
\end{bmatrix}, &&s_{6} = \begin{bmatrix}
1\\
0\\
1
\end{bmatrix},&&s_{7} = \begin{bmatrix}
0\\
1\\
1
\end{bmatrix},&&s_{8} = \begin{bmatrix}
1\\
1\\
1
\end{bmatrix}.
\end{align*}
Now suppose instead that the structural function from \eqref{eq_model} can be written as:
\begin{align}
\varphi(X,U) = XU_{1} - U_{2}.  \label{eq_example_functional_form}
\end{align}
Then the binary response vector $r(u_{1},u_{2})$ is given by:
\begin{align*}
r(u_{1},u_{2}) = 
\begin{bmatrix}
\mathbbm{1}\{u_{1} \geq 2u_{2} \}\\
\mathbbm{1}\{u_{1} \geq u_{2} \}\\
\mathbbm{1}\{2u_{1} \geq u_{2} \}
\end{bmatrix}.
\end{align*}
As is illustrated in Figure \ref{fig_example_linear_arrangement}, now only $6$ response types are admissible. For a distribution of the latent variables to be admissible in this context, it must assign zero probability to the sets:
\begin{align*}
\mathcal{U}(\theta,s_{3}) &= \left\{(u_{1},u_{2}) : r(u_{1},u_{2}) = s_{3} \right\},\\
\mathcal{U}(\theta,s_{6})&=\left\{(u_{1},u_{2}) : r(u_{1},u_{2}) = s_{6} \right\}.
\end{align*}
These additional constraints must be imposed in our optimization problems from Theorem \ref{thm_linprog_finite_case}. 
\end{example}   

\begin{figure}[!t]
\centering
\includegraphics[scale=0.7]{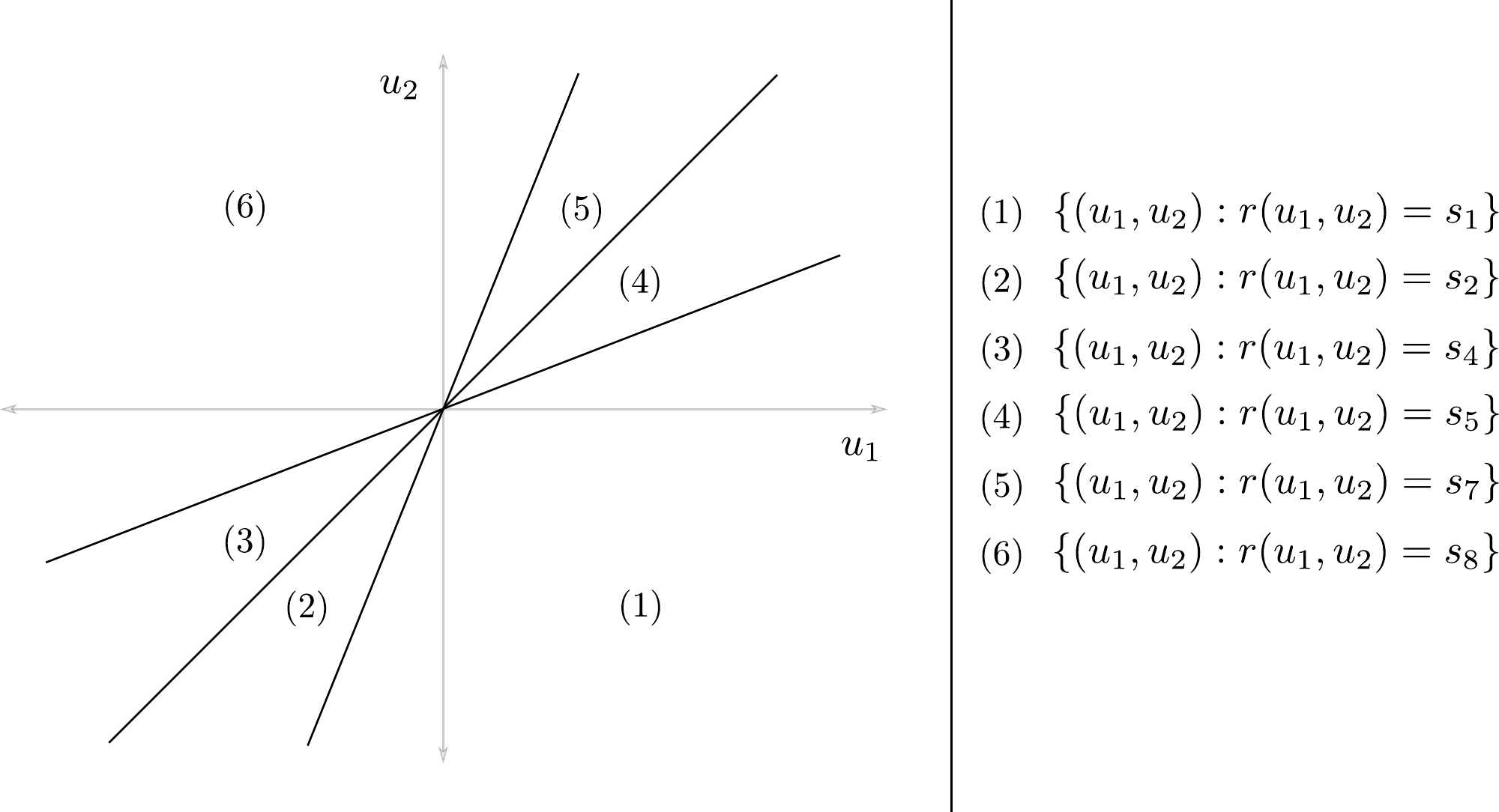}
\caption{A figure corresponding to Example \ref{example_linear} illustrating the partition of the latent variable space according to response types in the case when the index function is linear. Without functional form restrictions, Example \ref{example_linear} shows $8$ response types are possible; however, when the index function is linear in the latent variables there are only $6$ possible response types, as illustrated in the figure. In particular, the response types corresponding to binary vectors $s_{3}$ and $s_{6}$ from Example \ref{example_linear} are not possible.} \label{fig_example_linear_arrangement}
\end{figure}

In the general case, it can be shown that when $\varphi$ is restricted to be linear in $U$, there is an upper bound on the number of nonempty sets $\mathcal{U}(s,\theta)$ that grows at a rate that is polynomial in $m$ rather than exponential, which is the case when $\varphi$ is unrestricted.  
\begin{proposition}\label{proposition_vc}
Suppose that Assumption \ref{assumption_basic} is satisfied. Then for each $\theta \in \Theta$, there are at most $\sum_{j=0}^{d_{u}} \binom{m}{j}$ admissible response types.
\end{proposition}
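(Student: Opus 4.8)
The plan is to recognize Proposition \ref{proposition_vc} as an instance of the classical problem of counting the chambers of an arrangement of hyperplanes, where the linearity of the index function in $U$ (equation \eqref{eq_model2} of Assumption \ref{assumption_basic}) is precisely what guarantees that the relevant loci are genuine hyperplanes. First I would fix $\theta \in \Theta$ and, for each $j \in \{1,\ldots,m\}$, define the affine set
\[
H_{j} := \left\{u \in \mathcal{U} : \tilde{\varphi}_{1}(x_{j},\theta)^\top u + \tilde{\varphi}_{2}(x_{j},\theta) = 0\right\}.
\]
By \eqref{eq_model2}, $\varphi(x_{j},\,\cdot\,,\theta)$ is affine in $u$, so whenever $\tilde{\varphi}_{1}(x_{j},\theta)\neq 0$ the set $H_{j}$ is a hyperplane in $\mathbb{R}^{d_{u}}$ separating the two open halfspaces on which $\mathbbm{1}\{\varphi(x_{j},u,\theta)\geq 0\}$ equals $1$ and $0$. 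The response type $r(u,\theta)$ records, coordinate by coordinate, on which side of each $H_{j}$ the point $u$ lies, and the admissible response types are exactly those $s$ for which $\mathcal{U}(s,\theta)$ has nonempty interior.

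Second, I would show that the map sending each admissible $s$ to the chamber (full-dimensional region) of the arrangement $\{H_{j}\}_{j=1}^{m}$ that contains $\text{int}(\mathcal{U}(s,\theta))$ is injective. The key observation, using that each $\varphi(x_{j},\,\cdot\,,\theta)$ is continuous and affine, is that if an open ball $B$ is contained in $\mathcal{U}(s,\theta)$ then for each $j$ the function $\varphi(x_{j},\,\cdot\,,\theta)$ has constant nonzero sign on $B$: an affine function that is $\geq 0$ throughout an open ball yet vanishes at an interior point must have zero gradient, so for a genuine $H_{j}$ the function cannot vanish on $B$ and $B$ lies strictly inside one open halfspace. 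Hence $B$ sits inside a single chamber whose sign vector is determined by $s$. Degenerate indices $j$ with $\tilde{\varphi}_{1}(x_{j},\theta)=0$ make $\varphi(x_{j},\,\cdot\,,\theta)$ constant, so the corresponding coordinate $s_{j}$ takes the same value across \emph{all} admissible types; such coordinates neither contribute hyperplanes nor let two admissible types collide. Consequently distinct admissible types differ in a genuine coordinate, map to chambers with distinct sign vectors, and thus to distinct chambers, giving the injection.

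Finally I would bound the number of chambers. Writing $\rho(m,d)$ for the maximum number of chambers of an arrangement of $m$ hyperplanes in $\mathbb{R}^{d}$, a standard incremental argument—adding hyperplanes one at a time and noting that the $m$-th hyperplane is cut by the previous $m-1$ into at most $\rho(m-1,d-1)$ pieces, each piece splitting one existing chamber into two—yields the recursion $\rho(m,d)\leq \rho(m-1,d)+\rho(m-1,d-1)$ with boundary values $\rho(0,d)=\rho(m,0)=1$. Combined with the Pascal identity $\sum_{j=0}^{d}\binom{m}{j} = \sum_{j=0}^{d}\binom{m-1}{j} + \sum_{j=0}^{d-1}\binom{m-1}{j}$, induction gives $\rho(m,d)\leq \sum_{j=0}^{d}\binom{m}{j}$. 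Taking $d=d_{u}$, and noting that any degeneracies only lower the effective number of genuine hyperplanes $m'\leq m$ (so the count is at most $\sum_{j=0}^{d_{u}}\binom{m'}{j}\leq \sum_{j=0}^{d_{u}}\binom{m}{j}$ since $\binom{m'}{j}\leq\binom{m}{j}$), delivers the claimed bound.

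The main obstacle is the careful bookkeeping of the second step: rigorously arguing the equivalence between ``$\mathcal{U}(s,\theta)$ has nonempty interior'' (stated through the closed condition $\varphi\geq 0$) and ``lies in an open chamber of the arrangement,'' and handling non-generic configurations (parallel, coincident, or degenerate hyperplanes) so that the bound survives as an inequality. The combinatorial region count in the last step is classical and can simply be cited through the cell-enumeration references already invoked in the paper (e.g.\ \cite{gu2020nonparametric} and \cite{rada2018new}).
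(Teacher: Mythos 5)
Your proposal is correct and follows essentially the same route as the paper, which simply cites the classical cell-counting result of Buck (1943) for arrangements of $m$ hyperplanes in $\mathbb{R}^{d_{u}}$; you have filled in the details of exactly that argument, including the injection from admissible response types to chambers and the standard recursion yielding $\sum_{j=0}^{d_{u}}\binom{m}{j}$. Your careful treatment of the half-open cells $\mathcal{U}(s,\theta)$ and of degenerate indices with $\tilde{\varphi}_{1}(x_{j},\theta)=0$ is sound and goes beyond what the paper records.
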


This result is implied by results in the literature on combinatorial geometry. In particular, linearity of the function $\varphi(\,\cdot\,,u)$ means that for each instance of $(x,\theta)$ the function $\varphi(x,u,\theta)$ defines a hyperplane in $d_{u}-$dimensional space. In the case when the vectors defining these hyperplanes are in \textit{general position} the upper bound in Proposition \ref{proposition_vc} is obtained.\footnote{ A collection of $m$ hyperplanes in $d-$dimensional space are considered to be in general position when any collection of $k$ out of the $m$ hyperplanes intersect in a $d-k$ dimensional space for $1 < k \leq d$, and any collection of $k$ out of $m$ hyperplanes has an empty intersection for $k > d$.} This latter result was initially proven by \cite{Buck}. Straightforward calculation shows that, for $m>d_{u}+1$, the function $\sum_{j=0}^{d_{u}} \binom{m}{j}$ is bounded above by $(e \cdot m/d_{u})^{d_{u}} = O(m^{d_{u}})$, a polynomial in the number of hyperplanes.  

\begin{remark}
We conjecture that proposition \ref{proposition_vc} is a special case of a more general result. In particular, define the class of functions:
\begin{align}
\Phi_{\theta} := \left\{ \phi:\mathcal{X} \to \mathbb{R} : \phi(x):= \phi(x,u,\theta),   u \in \mathcal{U}\right\}.\label{eq_phi_class}
\end{align}
Now define the function:
\begin{align*}
\Pi_{\Phi_{\theta}}(m):=\sup_{x_{1},x_{2},\ldots,x_{m}} \left| \left\{\begin{bmatrix} \mathbbm{1}\{\phi(x_{1}) \geq 0 \} \\ \mathbbm{1}\{\phi(x_{2}) \geq 0 \} \\ \vdots \\ \mathbbm{1}\{\phi(x_{m}) \geq 0 \}\end{bmatrix} : \phi \in \Phi_{\theta} \right\} \right|.
\end{align*}
If $\mathcal{X}=\{x_{1},\ldots,x_{m}\}$, then $\Pi_{\Phi_{\theta}}(m)$ delivers the largest (over all inputs $\mathcal{X}$ with $|\mathcal{X}|=m$) number of admissible response types consistent with the class of index functions $\Phi_{\theta}$. The Shelah-Sauer Lemma (c.f. \cite{anthony1999neural} Theorem 3.6) then shows that:
\begin{align*}
\Pi_{\Phi_{\theta}}(m) \leq \sum_{i=0}^{v} \binom{m}{i}, 
\end{align*}
where $v$ is the Vapnik-Chervonenkis (VC) dimension of $\Phi_{\theta}$.\footnote{The definition of the VC dimension varies depending on the reference; for instance, the definition of the VC dimension (index) in \cite{van1996weak} is $+1$ larger than the definition of the VC dimension in either \cite{anthony1999neural} or \cite{mohri2012foundations}. See the beginning of Section 3.3 in \cite{anthony1999neural} for the definition of the VC dimension used here.  } In the special case when each function $\phi(x,u,\theta)$ from $\Phi_{\theta}$ is linear in $u \in \mathbb{R}^{d_{u}}$, the VC-dimension of $\Phi_{\theta}$ is $d_{u}$, suggesting Proposition \ref{proposition_vc} is a special case of the Shelah-Sauer Lemma.\footnote{C.f. Theorem 3.4 in \cite{anthony1999neural} (note Theorem 3.4 in \cite{anthony1999neural} is applicable to affine functions rather than linear functions, which increases the VC dimension by 1 relative to our case).}
\end{remark}
Define the collection:
\begin{align}
S_{\varphi}(\theta) := \{ s \in \{0,1\}^{m} : \text{int}(\mathcal{U}(s,\theta)) \neq \emptyset \}.\label{eq_s_theta}
\end{align}
To impose linearity in the latent variables we must determine which sets $\mathcal{U}(s,\theta)$ have nonempty interior, and then ensure that any distribution of the latent variables assigns zero probability to these sets. To compute the collection $S_{\varphi}(\theta)$ we propose to use the enumeration algorithm of \cite{gu2020nonparametric}. 

When the index function $\varphi$ is linear in $U$, for each fixed $\theta$ and $s \in \{0,1\}^{m}$ the set $\mathcal{U}(s,\theta)$ is a convex polyhedron formed by the intersection of at most $m$ halfspaces whose boundaries are hyperplanes of the form $\{u \in \mathcal{U} : \varphi(x,u,\theta)=0\}$. The hyperplane arrangement algorithm of \cite{gu2020nonparametric} accepts $m$ hyperplanes as an input, and outputs the binary vectors $s$ corresponding to the sets $\mathcal{U}(s,\theta)$ that have nonempty interior, as well as a point from each of these sets. \cite{avis1996reverse} were the first to provide an enumeration algorithm that runs in a time proportional to the maximum number of sets with nonempty interior. Improvements to the enumeration algorithm were made by \cite{sleumer1999output} and \cite{rada2018new}. The algorithm of \cite{gu2020nonparametric} was developed for the problem of nonparametric maximum likelihood in a linear index random coefficient model. It runs in a time proportional to $O(m^{d_{u}})$, which is near-optimal when the hyperplanes are in general position, according to Proposition \ref{proposition_vc}. 

To understand the algorithm, note that for each $s \in \{0, 1\}^m$ and fixed $\theta$, we can verify using a linear program whether there exists a point in the interior of $\mathcal{U}(s,\theta)$. Consider the following problem:
\begin{align}
\max_{u,\varepsilon} \,\,\varepsilon \qquad \text{ s.t. } \qquad (2 s_j -1) \left(\tilde{\varphi}_{1}(x,\theta)^\top u + \tilde{\varphi}_{2}(x,\theta)\right)  \geq \varepsilon, \quad j = 1, \dots, m,\label{eq_linprog_feasible}
\end{align}
where $s_{j}$ is the $j^{th}$ element of our fixed binary vector $s$. If $\varepsilon^{*}$ and $u^{*}$ are the optimal values of the program \eqref{eq_linprog_feasible}, then a value $\varepsilon^{*}>0$ indicates that $u^{*}$ is an interior point to the polyhedron $\mathcal{U}(s,\theta)$. 

Since the linear program \eqref{eq_linprog_feasible} must be solved for each $s \in \{0, 1\}^m$, checking whether each $\mathcal{U}(s,\theta)$ has nonempty interior requires solving $2^{m}$ linear programs, despite the fact that we know the number of nonempty subsets $\mathcal{U}(s,\theta)$ is polynomial in $m$. To address this issue, the algorithm proposed in \cite{gu2020nonparametric} builds upon the algorithm in \cite{rada2018new}. The idea is to add one hyperplane at a time. At step $k$ we start with a collection of $k-1$ hyperplanes from the previous steps, as well as all existing response types found up to step $k-1$. We then introduce a new hyperplane into the arrangement, and determine all newly created response types by solving a collection of linear programs. When a new hyperplane is added the only new cells are those that are created when the existing cells are crossed by the last hyperplane. By efficiently locating those crossed cells at each step, all cells in the full arrangement can be enumerated in polynomial time.

In summary, the hyperplane arrangement algorithm is used as a pre-processing step under Assumption \ref{assumption_basic} to determine which sets $\mathcal{U}(s,\theta)$ have nonempty interior in a given application. Eliminating the inadmissible sets $\mathcal{U}(s,\theta)$ can also dramatically reduce the dimension of the parameter vector $\pi(\theta)$ in the bounding optimization problems.  In particular, under Assumption \ref{assumption_basic} we need only consider a parameter vector $\pi(\theta)$ with typical element $\pi(y,x,s,\theta)$ defined for $s$ corresponding to subsets $\mathcal{U}(s,\theta)$ with nonempty interior. The dimension of the revised parameter vector $\pi(\theta)$ constructed in this way is always upper-bounded by a polynomial in $m$ under Assumption \ref{assumption_basic}. 
%After completing the pre-processing of our problem using the hyperplane arrangement algorithm, Corollary \ref{corollary_linprog_finite_case_linearity} becomes directly applicable. 
In the next subsection we show how the assumption of linearity in parameters $\theta \in \Theta$ can be combined with the hyperplane arrangement algorithm to dramatically simplify the bounding procedure suggested by Theorem \ref{thm_linprog_finite_case}. 

\subsection{Profiling Under Linearity in the Fixed Coefficients}\label{section_profiling}

%Some thought reveals that $\theta$ has no effect on the bounding problem in Theorem \ref{thm_linprog_finite_case} other than through its effect on determining which of the sets $\mathcal{U}(s,\theta)$ are nonempty. 
Constructing bounds on counterfactual probabilities using Theorem \ref{thm_linprog_finite_case} requires evaluating the linear programs \eqref{eq_thm_linprog_finite_LB} and \eqref{eq_thm_linprog_finite_UB} at all values of $\theta \in \Theta$ in the parameter space. In practice this procedure is infeasible, and instead the identified set must be constructed using Theorem \ref{thm_linprog_finite_case} by establishing a grid over $\Theta$. The following proposition demonstrates that, theoretically speaking, the researcher need only consider a finite grid. 
\begin{proposition}\label{proposiTion_fInite_B}
Suppose that Assumptions \ref{assumption_basic} and \ref{assumption_counterfactual_domain} hold. Then there exists a (not necessarily unique) finite subset $\Theta' \subset \Theta$ such that:
\begin{align*}
&\left\{ \overline{\pi} \in \mathbb{R}^{d_{\pi}} : \exists \theta \in \Theta \text{ s.t. } \pi(\theta) \text{ satisfies  \eqref{eq_moment_conditions_finite_case}, \eqref{eq_non_negative}, \eqref{eq_adding_up}, and }\overline{\pi}=\pi(\theta) \right\}\\
&\qquad\qquad\qquad= \left\{ \overline{\pi} \in \mathbb{R}^{d_{\pi}} : \exists \theta \in \Theta' \text{ s.t. } \pi(\theta) \text{ satisfies  \eqref{eq_moment_conditions_finite_case}, \eqref{eq_non_negative}, \eqref{eq_adding_up}, and }\overline{\pi}=\pi(\theta) \right\}.
\end{align*}
\end{proposition}

We call the points in $\Theta'$ the \textit{representative points}, although these points are generally not unique. If the representative points can be determined by the researcher, Proposition \ref{proposiTion_fInite_B} implies that the union over $\theta \in \Theta$ in \eqref{eq_thm_linprog_finite_union} can be replaced with a union over $\theta \in \Theta'$. That is, the linear programs in \eqref{eq_thm_linprog_finite_LB} and \eqref{eq_thm_linprog_finite_UB} need only be solved at the representative points.\footnote{See the discussion at the end of Section 3 in \cite{torgovitsky2019partial} for the same idea. Proposition \ref{proposiTion_fInite_B} also implies that the identified set for counterfactual conditional distributions in Theorem \ref{thm_linprog_finite_case} is always a closed (but possibly disconnected) set.} %Unfortunately, when the researcher cannot determine the representative points Proposition \ref{proposiTion_fInite_B} has limited practical value. In these cases, griding over the parameter space will often at best lead to an inner approximation to the identified set, and may be computationally prohibitive. %Later on we will provide a tractable way to construct $\Theta'$ in the case when $\varphi$ is linear in $\theta$. %The proof of Proposition \ref{proposiTion_fInite_B} reveals that the researcher will need to solve up to $2^{2^{m}}$ versions of the linear programming problems in \eqref{eq_thm_linprog_finite_LB} and \eqref{eq_thm_linprog_finite_UB} in order to compute the identified set.\footnote{To see this, note there are $2^{m}$ binary vectors $s \in \{0,1\}^{m}$. }

In the case when $\varphi$ is also linear in $\theta$, we provide a polynomial-time algorithm for finding a collection of representative points.\footnote{Linearity in $\theta$ is not necessarily restrictive. Since $X$ is discrete, the researcher can construct a saturated specification with an index function $\sum_{x \in \mathcal{X}} \mathbbm{1}\{X=x\}\theta_{x} + U$. Defining $\theta :=(\theta_{x})_{x \in \mathcal{X}}$, this index function is linear in $(\theta,U)$, and so all of our computational results are applicable.} To introduce the approach, recall the set $S_{\varphi}(\theta)$ from \eqref{eq_s_theta}. Note that for any two values of $\theta, \theta' \in \Theta$ with $\theta \neq \theta'$, if $S_{\varphi}(\theta) = S_{\varphi}(\theta')$ then the linear programming problems in Theorem \ref{thm_linprog_finite_case} at $\theta$ and $\theta'$ are identical, since they have an identical set of constraints. The points $\theta$ and $\theta'$ are thus equivalent in the sense that we only need to solve the linear programming problems for one of them. Extending this idea, we can define an equivalence class by the set of all $\theta \in \Theta$ delivering the same collection $S_{\varphi}(\theta)$. We then only need to solve the linear programming problems at one value of $\theta$ belonging to each equivalence class. These values of $\theta$ selected from each equivalence class are exactly what we call representative points.

To see how to find the representative points, partition $x = (x_{r},x_{f})$ and suppose the index function is of the form:
\begin{align*}
\varphi(x,u,\theta) = x_{r}^\top u +  x_{f}^\top \theta,
\end{align*}
where $x_{r}$ has dimension $d_{u}$ and $x_{f}$ has dimension $d_{f}=d_{x} - d_{u}$. Here $x_{r}$ is a subvector of $x$ associated with a random coefficient $u \in \mathcal{U}$, and $x_{f}$ is a subvector of $x$ associated with a fixed coefficient $\theta \in \Theta$. For any binary vector $s \in \{ 0,1\}^{m}$, define:
\begin{align}
\mathcal{R}(s) := \left\{ (u,\theta) : \begin{bmatrix} \mathbbm{1}\{x_{1r}^\top u +  x_{1f}^\top \theta  \geq 0\}\\  \mathbbm{1}\{x_{2r}^\top u +  x_{2f}^\top \theta  \geq 0\}\\ \vdots\\ \mathbbm{1}\{x_{mr}^\top u + x_{mf}^\top \theta  \geq 0\} \end{bmatrix} = s \right\}.\label{eq_cell}
\end{align}
These sets form a unique partition of the space $\mathcal{U}\times \Theta$ into cells defined by $m$ hyperplanes of the form:
\begin{align}
x_{ir}^\top u + x_{if}^\top \theta = 0. \label{eq_hyperplane}
\end{align}
To find the representative points, we project the sets $\mathcal{R}(s)$ onto the parameter space $\Theta$, and intersect the projections of each set $\mathcal{R}(s)$ across $s$. This produces a collection of sets on $\Theta$ corresponding exactly to the equivalence classes discussed above. The main challenge of the procedure is to obtain a tractable characterization of the projected sets. 

Define the set:
\begin{align*}
S_{\varphi} := \{ s \in \{0,1\}^{m} : \exists \theta \in \Theta \text{ s.t. }\text{int}(\mathcal{U}(s,\theta)) \neq \emptyset \}.
\end{align*}
That is, $S_{\varphi}$ collects all vectors $s \in \{0,1\}^{m}$ corresponding to sets $\mathcal{R}(s)$ with nonempty interior. We begin by determining all vectors $s \in S_{\varphi}$ by running the hyperplane arrangement algorithm of \cite{gu2020nonparametric} on the $m$ hyperplanes of the form \eqref{eq_hyperplane}. %By linearity of $\varphi$ in $(u,\theta)$ restricts the number of sets in the collection $S_{\varphi}$ to be polynomial in $m$.\footnote{Note that in this context, all the hyperplanes of the form \eqref{eq_hyperplane} can be viewed as hyperplanes through the origin in $\mathcal{U}\times \Theta$. In this case, the upper bound on the number of cells formed by this collection of hyperplanes is of smaller order than that presented in Proposition \ref{proposition_vc}. \cite{cover1965geometrical} shows the upper bound is given by:
% \begin{align*}
% C(m,d_{u}):=2 \sum_{j=0}^{d_{u}-1} \binom{m-1}{j}.
% \end{align*}
% }
These $m$ hyperplanes can be combined as $X_{r} u + X_{f} \theta =0$, where $X_{r}$ is an $m \times d_{u}$ matrix and $X_{f}$ is an $m\times d_{f}$ matrix. Now fix any $s \in S_{\varphi}$, and let $D(s) = \text{diag}(2s-1)$ denote the $m\times m$ diagonal matrix with the sign vector $2s-1$ along its main diagonal. Define $X_{r}(s) := D(s) X_{r}$ and $X_{f}(s):= D(s) X_{f}$. Then $\mathcal{R}(s)$ can be rewritten as: 
\begin{align}
\mathcal{R}(s) = \{ (u,\theta) : X_{r}(s)u + X_{f}(s)\theta \geq 0 \}.
\end{align}
%Thus, in practice it is useful to first remove redundant inequalities among those that define $\mathcal{R}(s)$ before proceeding to the next step. Elimination of redundant inequalities from this system can be achieved in polynomial time with a sequence of linear programs, and the resulting set of nonredundant inequalities that define the polyhedral cone $\mathcal{R}(s)$ is typically much smaller than $m$.\footnote{In particular, not all the hyperplanes that define $\mathcal{R}(s)$ are relevant, in the sense that some of them are implied by the rest of the inequalities in the system. Removing these redundant inequalities does not change the cone $\mathcal{R}(s)$. We can remove them before continuing to the projection step of our procedure by conducting a redundancy test. For example, suppose we have system of $j+1$ inequalities of the form $Ax \leq b$ and $s^\top x \leq t$. Then to check whether the last inequality is binding (and thus nonredundant), we can solve the linear programming problem $f^* = \text{max } s^\top x$ s.t. $Ax \leq b, s^\top x \leq t+1$. The inequality $s^\top x \leq t$ is redundant if and only if $f^* \leq t$. To eliminate all redundant inequalities from a system of $m$ inequalities results in solving $m$ linear programs; hence, it can be computed in polynomial time. There are a few strategies to speed up the removal of redundant inequalities, as discussed in Section 2.21 in \cite{FukudaFAQ}. We use the implementation in the package \texttt{Rcdd} with the function \texttt{redundant}.\label{footnote_redundant}} 
The projection of $\mathcal{R}(s)$ on $\Theta$ is:
\begin{align}
\Theta(s) := \{ \theta \in \Theta : \exists u \in \mathcal{U} \text{ s.t. } X_{r}(s)u + X_{f}(s)\theta \geq 0\}.\label{eq_projected_beta_set}
\end{align}
The objective is now to write $\Theta(s)$ only in terms of linear inequality constraints in $\theta$; in other words, to ``eliminate'' the latent variables $u$ from the system of inequalities in \eqref{eq_projected_beta_set}.\footnote{It is possible to use Fourier-Motzkin elimination for this purpose, which was explored in a similar context in Section 8.2 of \cite{chesher2019generalized}. In practice we find that Fourier-Motzkin elimination leads to a prohibitively large number of constraints defining \eqref{eq_projected_beta_set}, most of which are redundant. This remains true even when the number of inequalities defining the set \eqref{eq_projected_beta_set} is small.}  %A natural method of doing so is to use Fourier-Motzkin elimination.\footnote{The idea of using Fourier-Motzkin elimination to determine the inequality constraints defining projected regions in partial identification was also explored in Section 8.2 of \cite{chesher2019generalized}. } Recall that the Fourier-Motzkin algorithm eliminates variables from a system of linear inequalities by taking linear combinations of the inequalities in the system. In particular, Fourier-Motzkin elimination can be viewed as applying a sequence of matrix operators $M_{1}, M_{2}, \ldots, M_{d_{r}}$ to the system of inequalities in \eqref{eq_projected_beta_set}, where the matrix $M_{k}M_{k-1}\ldots M_{1}$ eliminates the first $k$ elements of the vector $U$ from the inequalities. Let us denote $M_{r}^{*}= M_{d_{r}}M_{d_{r}-1}\ldots M_{1}$. Then as a result of Fourier-Motzkin elimination we would have the equivalent system:
%\begin{align}
%\Theta(s):= \left\{ \theta \in \Theta : M_{r}^{*} W_{f}(s)\theta \geq 0 \right\},\label{eq_projected_beta_set2} 
%\end{align}
%since $M_{r}^{*}W_{r}(s)=0$ by construction of $M_{r}^{*}$. The set in \eqref{eq_projected_beta_set2} then gives us inequality constraints only in %terms of $\theta$ that define the projection of $\mathcal{R}(s)$ on $\Theta$. 
To this end, consider the set:
\begin{align}
\mathcal{C}(s):= \{ c \in \mathbb{R}^{m}: c^\top X_{r}(s) =0,\,\,c\geq 0\}. \label{eq_projected_Hrepresentation} 
\end{align}
\cite{kohler1967projections} showed that the projected set \eqref{eq_projected_beta_set} can be rewritten as: 
\begin{align*}
\Theta(s)=\left\{ \theta \in \Theta : c^\top  X_{r}(s)u + c^\top  X_{f}(s)\theta \geq 0, \text{ $\forall c \in \mathcal{C}(s)$}\right\}.
\end{align*}
Furthermore, the Minkowski-Weyl Theorem also allows us to re-write the set $\mathcal{C}(s)$ as:
\begin{align}
\mathcal{C}(s)= \left\{ c \in \mathbb{R}^{m}: c= R(s)a,\text{ for some $a\geq 0$} \right\},  \label{eq_projected_Vrepresentation} 
\end{align}
where $R(s)$ is some matrix.\footnote{For a general convex set defined by $\Lambda = \{\lambda \in \mathbb{R}^d: A\lambda \leq b\}$, the Minkowski-Weyl Theorem states that every vector $\lambda \in \Lambda$ can be written as $\lambda = \lambda_{1}+\lambda_{2}$, where $\lambda_{1} \in \text{conv} \{ v_1, \dots, v_k\}$ and $\lambda_{2} \in \text{cone}\{v_{k+1}, \dots, v_n\}$. Here $v_1, \dots, v_k$ are called vertices of $\Lambda$ and $v_{k+1}, \dots, v_n$ are the extreme rays of $\Lambda$. In the special case of $b = 0$, where all hyperplanes are through the origin, then $\Lambda$ becomes a polyhedral cone and $k=0$, so that $\Lambda = \text{cone}\{v_1, \dots, v_n\}$. This latter case is what is relevant for us, and the columns of the matrix $R(s)$ are the collections of these extreme rays.  } That is, every element belonging to the polyhedral cone $\mathcal{C}(s)$ can be written as a nonnegative linear combination of the columns of $R(s)$. The matrix $R(s)$ is called the \textit{generating matrix} of the polyhedral cone $\mathcal{C}(s)$, and the problem of finding the minimal generating matrix $R(s)$ is called the \textit{extreme ray enumeration problem}.\footnote{A minimal generating matrix for $\mathcal{C}(s)$ is a generating matrix with the property that no proper submatrix also generates $\mathcal{C}(s)$ (c.f. \cite{Fukuda}). Note the minimal generating matrix is unique only up to multiplication by a positive scalar.} After obtaining the matrix $R(s)$, we have the following representation of the projection of $\mathcal{R}(s)$:
\begin{align}
\Theta(s)=\left\{ \theta \in \Theta : R(s)^\top X_f(s) \theta \geq 0\right\}.\label{eq_H_representation}
\end{align}
The two representations of the cone $\mathcal{C}(s)$ from \eqref{eq_projected_Hrepresentation} and \eqref{eq_projected_Vrepresentation} are called its H-representation and its V-representation, respectively. Converting from one representation of a convex polyhedron to another is called the \textit{double description problem} in computational geometry. An efficient double description algorithm was proposed by \cite{Fukuda}, and we use the \texttt{R} implementation in the package \texttt{Rcdd} by \cite{Geyer}. \cite{avis1997good} provides a comparison of different algorithms. The projection procedure is then repeated for all $s \in S_{\varphi}$, resulting in the collection $\{ \Theta(s) : s \in S_{\varphi}\}$. To obtain the final representative points, we stack $R(s)^\top X_f(s) \theta = 0$ for all $s \in S_{\varphi}$, and then run the enumeration algorithm of \cite{gu2020nonparametric} on this final collection of hyperplanes. 

This procedure also sheds light on how to construct the identified set for $\theta$. Since our final arrangement involves only hyperplanes through the origin, each representative point is selected from a polyhedral cone. For some of these representative points, the linear programming problems in our bounding procedure from Theorem \ref{thm_linprog_finite_case} may have an empty feasible region. In these case, the representative points---and all points belonging the corresponding cone---lie outside of the identified set $\Theta^{*}$. Thus, the identified set $\Theta^*$ is the union of the polyhedral cones with representative points associated with nonempty feasible regions in our linear programming problems from Theorem \ref{thm_linprog_finite_case}.\footnote{This implies that the identified set $\Theta^*$ may not be connected, and for any $\theta \in \Theta^*$, we also have $\lambda \theta \in \Theta^*$ for all $\lambda \geq 0$.  An appropriate normalization---for example, fixing $||\theta||=1$---leads to a bounded identified set $\Theta^*$.}

\section{Additional Assumptions}\label{section_additional_assumptions}

In this section we describe how to impose additional independence and monotonicity assumptions in our framework. 

\subsection{Independence Assumptions}

In some cases the researcher may have access to an observed variable that is independent of the latent variables. If such a variable enters as an argument in the index function, it induces variation in the observed conditional probabilities without affecting the distribution of the latent variables. We refer to such variables as \textit{exogenous covariates}. A similar intuition applies if the variable is independent of the distribution of latent variables, does not enter as an argument in the structural function, but has nontrivial dependence with the variables that do enter the index function. We refer to such variables as \textit{instruments}. Both exogenous covariates and instruments can be used to produce a smaller identified set for counterfactual parameters. To introduce our independence assumption, we partition $X=(W,Z)$, where $W \in \mathcal{W} \subset \mathbb{R}^{d_{w}}$ and $Z \in \mathcal{Z} \subset \mathbb{R}^{d_{z}}$ and $d_{w}+d_{z}=d_{x}$. 

\begin{assumption}[Independence]\label{assumption_independence}
For all $A \in \mathfrak{B}(\mathcal{U})$ we have $P_{U\mid Z}(A\mid Z=z) = P_{U}(A)$, $P_{Z}-$a.s.
\end{assumption} 
The independence assumption constrains the set of admissible latent variable distributions, and links the conditional distributions of $U \mid Z=z$ across values of $z \in \mathcal{Z}$. Although Assumption \ref{assumption_independence} posits full independence between $Z$ and the vector of latent variables $U$, the assumption can be easily modified for the case when a subvector of $Z$, say $Z_{1}$, is conditionally independent of $U$ given some other subvector of $Z$, say $Z_{2}$.  %Without loss of generality, we continue to write the structural function $\varphi$ as a function of $Z$.

%\begin{remark}
%By assumption, $\varphi(x,z,\theta,\cdot):\mathcal{U} \to \mathbb{R}$ is continuous for each $(x,z,\theta)$, and so the sets of the form $\{\theta : \varphi(x,z,\theta,\theta) \geq 0\}$ are Borel sets.
%\end{remark} 

Definition \ref{definition_identified_set_independence} in Appendix \ref{subappendix_independence} provides the extension of Definition \ref{definition_identified_set} to the case when Assumption \ref{assumption_independence} also holds, and Corollary \ref{corollary_infinite_to_finite_independence} in Appendix \ref{subappendix_independence} provides the analogous extension of Theorem \ref{theorem_infinite_to_finite}. To extend the linear programming result of Theorem \ref{thm_linprog_finite_case} we must include additional constraints. Without loss of generality we assume all values of $(y,w,z)$ are assigned positive probability. Then the appropriate constraints on the vector $\pi(\theta)$ are given by:
\begin{align}
&\sum_{y \in \{0,1\}} \sum_{w \in \mathcal{W}} \pi(y,w,z_{k},s,\theta)P(Y=y,W=w \mid Z=z_{k})\nonumber\\
&\qquad\qquad\qquad\qquad= \sum_{y \in \{0,1\}}\sum_{w \in \mathcal{W}} \pi(y,w,z_{k+1},s,\theta)P(Y=y,W=w \mid Z=z_{k+1}),\label{eq_independence_constraint}
\end{align}
for $k=1,\ldots,m_{z}-1$, where $m_{z} = |\mathcal{Z}|$. A formal statement of the extension of Theorem \ref{thm_linprog_finite_case} to the case when the constraints \eqref{eq_independence_constraint} are also imposed is provided by Corollary \ref{corollary_linprog_finite_case_independence} in Appendix \ref{subappendix_independence}. 

%The independence assumptions provide additional information by constraining the set of admissible latent variable distributions to be those that are independent of the vector $Z$. In the case when $Z$ is an instrument (in the terminology of the previous section), the structural function $\varphi$ does not depend directly on $Z$, but only indirectly on $Z$ through its effect on $X$. This imposes a form of the \textit{exclusion restriction}, which can have a substantial effect in the bounding problem by eliminating response types, just as in the previous subsection on functional form assumptions.\footnote{In particular, even though the independence assumption does not affect the number of response types, the exclusion restriction reduces response types by reducing the number of variables entering the structural function. } In the next subsection we show how monotonicity assumptions lead to the elimination of response types. 

\subsection{Monotonicity Assumptions}

Let $\mathcal{M} \subset \{1,\ldots,m\}\times \{1,\ldots,m\}$ denote any collection of integer tuples $(j,k)$, where $1 \leq j,k \leq m$. 

\begin{assumption}[Monotonicity]\label{assumption_monotonicity}
For each $\theta \in \Theta$ and each tuple $(j,k)$ in the set $\mathcal{M}$, we have $\varphi(x_{j},u,\theta) \leq \varphi(x_{k},u,\theta)$.
\end{assumption} 
This monotonicity assumption states that, when comparing two points $x_{j}$ and $x_{k}$, the value of the structural function can be ordered by the researcher. For instance, if the order determined by the researcher's monotonicity assumption for the points $x_{j}$ and $x_{k}$ is $\varphi(x_{j},u,\theta) \leq \varphi(x_{k},u,\theta)$, then the researcher automatically rules out response types with $\mathbbm{1}\{\varphi(x_{j},u,\theta) \geq 0\} > \mathbbm{1}\{\varphi(x_{k},u,\theta)\geq 0\}$. %The following example illustrates how this idea leads to elimination of response types.

\begin{example}\label{example_monotonicity}
Suppose again that we have a binary variable $X\in\{0,1\}$ a latent variable $U$, and no fixed coefficients. Then the structural function from \eqref{eq_model} can be written as $\varphi(X,U)$ and the binary response vector $r(u,\theta)$ can be written as $r(u)$, where:
\begin{align*}
r(u) = 
\begin{bmatrix}
\mathbbm{1}\{\varphi(0,u) \geq 0 \}\\
\mathbbm{1}\{\varphi(1,u) \geq 0 \}
\end{bmatrix}.
\end{align*}
Note that there are only four response types; that is, $r(u) \in \{s_{1}, s_{2}, s_{3}, s_{4}\}$ where:
\begin{align*}
s_{1} = \begin{bmatrix}
1\\
1\\
\end{bmatrix}, &&s_{2} = \begin{bmatrix}
1\\
0\\
\end{bmatrix},&&s_{3} = \begin{bmatrix}
0\\
1\\
\end{bmatrix},&&s_{4} = \begin{bmatrix}
0\\
0\\
\end{bmatrix}.
\end{align*}
Without any additional restrictions, all response types---and thus all sets of the form $\mathcal{U}(s,\theta)$ for $s \in \{0,1\}^{2}$---can be assigned positive probability by the optimization problems in Theorem \ref{thm_linprog_finite_case}. Now suppose we entertain the monotonicity assumption $\varphi(0,u) \leq \varphi(1,u)$. Imposing this constraint rules out $r(u) = s_{2}$, so the set $\mathcal{U}(\theta, s_{2}) = \{u : r(u) = s_{2}\}$ must be assigned probability zero in any solution to the optimization problems in Theorem \ref{thm_linprog_finite_case}. \
\end{example}   

Similar monotonicity assumptions in triangular systems have been extensively explored by \cite{heckman2018unordered}. In particular, \cite{heckman2018unordered} explore how choice theory can be used to impose monotonicity assumptions and to eliminate response types, and many of their insights are applicable here. 

Let $S_{M}$ be the collection of vectors $s \in \{0,1\}^{m}$ that respect the monotonicity relations from Assumption \ref{assumption_monotonicity}. Definition \ref{definition_identified_set_monotonicity} in Appendix \ref{subappendix_monotonicity} provides the extension of Definition \ref{definition_identified_set} to the case when Assumption \ref{assumption_monotonicity} is also imposed.  The extension of Theorem \ref{theorem_infinite_to_finite} to the case when Assumption \ref{assumption_monotonicity} is imposed is provided by Corollary \ref{corollary_infinite_to_finite_monotonicity} in Appendix \ref{subappendix_monotonicity}. To extend the results of Theorem \ref{thm_linprog_finite_case} we must simply include the set of constraints imposed by Assumption \ref{assumption_monotonicity} in our optimization problems. These constraints are provided in Corollary \ref{corollary_infinite_to_finite_monotonicity}, and can be written in terms of the parameter vector $\pi(\theta)$ as:
\begin{align}
\sum_{s \in S_{M}^{c}} \pi(y,x_{j},s,\theta) = 0, \label{eq_monotonicity_constraint}
\end{align}
for all $y\in \{0,1\}$ and $j=1,\ldots,m$ occurring with positive probability. Corollary \ref{corollary_linprog_finite_case_monotonicity} in Appendix \ref{subappendix_monotonicity} then shows the extension of Theorem \ref{thm_linprog_finite_case} to the case when Assumption \ref{assumption_monotonicity} is imposed using the constraints \eqref{eq_monotonicity_constraint}. 

%Combining all the results seen in this section, any combination of Assumption \ref{assumption_basic}, Assumption \ref{assumption_independence} and Assumption \ref{assumption_monotonicity} can be imposed on the optimization problems in Theorem \ref{thm_linprog_finite_case} by simply adding the corresponding combination of constraints \eqref{eq_independence_constraint} and/or \eqref{eq_monotonicity_constraint}.

\section{Consistency, Inference, and Bias Correction}\label{section_consistency_bias_correction_inference}

The previous section focused on identification and computation issues, and set up the main optimization procedure for bounding counterfactual quantities. In this section we briefly describe estimation and inference in our setting. Our inference procedure is adapted from the procedure of \cite{cho2021simple}. 

In most applications of our procedure, we expected both the constraints and objective function in the linear programming problems \eqref{eq_thm_linprog_finite_LB} and \eqref{eq_thm_linprog_finite_UB} to be data dependent, and our results in this section are designed with this case in mind. We make use of this in the application section. Although motivated by the previous sections, the results in this section do not require any of our previous assumptions to hold. In this sense this section is independent of the previous sections, and may be of interest to researchers working with a similar class of problems. 

Let $\mathscr{P}$ denote the set of all probability measures on $\mathcal{Y}\times \mathcal{X}$, and let $\Pi:=\{\pi \in \mathbb{R}^{d_{\pi}} : A \pi \leq b\}$ denote a compact and convex polytope defined by some matrix $A$ and some vector $b$. In our context, $\Pi$ represents the parameter space for the conditional probability vector $\pi$ from Theorem \ref{thm_linprog_finite_case}. To introduce our results, we convert each of the equality constraints in the linear programs from Theorem \ref{thm_linprog_finite_case} to two equivalent inequality constraints. Then for a fixed $\theta \in \Theta$, the constraints in the programs \eqref{eq_thm_linprog_finite_LB} and \eqref{eq_thm_linprog_finite_UB} can be written as a finite number of moment inequality constraints:
\begin{align*}
\E_{P}[m_{j}(Y_{i},X_{i},\pi,\theta)] \leq 0, \,\,\forall j \in \mathcal{J}(\theta),  
\end{align*}
where $\mathcal{J}(\theta)$ is an index set that may depend on $\theta$.\footnote{These moment functions should also include parameter space constraints.} Note this formulation also allows the use of data-dependent constraints in the programs \eqref{eq_thm_linprog_finite_LB} and \eqref{eq_thm_linprog_finite_UB}, so long as these constraints can be written as a collection of moment inequalities. Furthermore, the objective function in the programs \eqref{eq_thm_linprog_finite_LB} and \eqref{eq_thm_linprog_finite_UB} can be written as $\E_{P}[\psi(Y_{i},X_{i},\pi,\theta)]$, which also permits the use of a data-dependent objective function.  We require the following assumption. 
\begin{assumption}\label{assumption_consistency_in_text}
The parameter space $(\Theta, \Pi,\mathcal{P})$ satisfies the following: (i) for each $\theta \in \Theta$, the function $\psi(\,\cdot\,,\theta): \mathcal{Y}\times\mathcal{X} \times \Pi \to \mathbb{R}$ is measurable in $(Y,X)$ and linear in $\pi$ with a (possibly data-dependent) Lipschitz constant $C(\theta)$ satisfying $\sup_{\theta \in \Theta} C(\theta)<\infty$ a.s. (ii) For each $\theta \in \Theta$ the set $\mathcal{J}(\theta)$ is finite, and for each $j\in \mathcal{J}(\theta)$ the functions $m_{j}(\,\cdot\,,\theta): \mathcal{Y}\times \mathcal{X} \times \Pi \to \mathbb{R}$ are measurable in $(Y,X)$ and linear in $\pi \in \Pi$. (iii) For every $P \in \mathcal{P}\subset \mathscr{P}$ there exists $(\pi,\theta) \in \Pi\times \Theta$ such that $\E_{P}[m_{j}(Y,X,\pi,\theta)] \leq 0$ for all $j\in \mathcal{J}(\theta)$. (iv) There exists a $B<\infty$ and a value $\delta>0$ such that:
\begin{align*}
\sup_{P \in \mathcal{P}} \int ||(Y,X)||^{2+\delta} \,dP \leq B. 
\end{align*}
(v) There exists some constant $C<\infty$ such that, for each $\theta \in \Theta$:
\begin{align*}
\sup_{P \in \mathcal{P}} \max\left\{\E_{P}||\nabla_{\pi} \psi(Y,X,\pi,\theta)||^{2},\max_{j \in \mathcal{J}(\theta)} \E_{P}||\nabla_{\pi} m_{j}(Y,X,\pi,\theta)||^{2} \right\} \leq C; 
\end{align*}
(vi) There exists a finite subset $\Theta' \subset \Theta$ such that:
\begin{align*}
&\left\{ \pi \in \Pi : \exists \theta \in \Theta \text{ s.t. } \E_{P}[m_{j}(Y,X,\pi,\theta)] \leq 0 \text{ for $j\in \mathcal{J}(\theta)$} \right\}\\
&\qquad\qquad\qquad\qquad\qquad= \left\{ \pi \in \Pi : \exists \theta \in \Theta' \text{ s.t. } \E_{P}[m_{j}(Y,X,\pi,\theta)] \leq 0 \text{ for $j\in \mathcal{J}(\theta)$} \right\}. 
\end{align*}
(vii) For each $\theta \in \Theta'$ and for some $\varepsilon>0$, define the vector-valued class of functions:
\begin{align*}
\mathcal{F}(\theta)&:= \left\{\left(\psi(\,\cdot\,,\pi,\theta),\left( m_{j}(\,\cdot\,,\pi,\theta)\right)_{j \in \mathcal{J}(\theta)} \right) : \pi \in \Pi_{\varepsilon} \right\},\\
\Pi_{\varepsilon} &:= \left\{\pi \in \mathbb{R}^{d_{\pi}} : A \pi \leq b + \varepsilon  \right\},
\end{align*}
where $A\pi \leq b$ are the constraints defining $\Pi$. Then there exists an element-wise measurable envelope $F_{\theta}:\mathcal{Y}\times \mathcal{X} \to \mathbb{R}$ for $\mathcal{F}(\theta)$ that is bounded on $\mathcal{Y}\times \mathcal{X}$. 
(viii) The researcher has a sample $\{(Y_{i},X_{i})\}_{i=1}^{n}$, with each $(Y_{i},X_{i})$ an i.i.d. draw from some $P \in \mathcal{P}$.
\end{assumption}
Part (i) restricts the objective function to be Lipschitz continuous in $\pi$ for each $\theta$. This assumption is satisfied, for instance, when bounding counterfactual probabilities or average treatment effects, as in the application section. Part (ii) requires each moment function to be linear in $\pi$, an assumption which is also satisfied for the model and assumptions in this paper. Assumption (iii) is standard in the moment inequalities literature, and constrains $\mathcal{P}$ to be the collection of data generating processes that satisfy the moment conditions. Part (iv) imposes a uniform integrability requirement on the vector $(Y,X)$ needed for the procedure of \cite{cho2021simple}, and part (v) imposes a uniform integrability condition on the gradients. These are both easily satisfied with discrete random variables $Y$ and $X$. Part (vi) allows us to replace $\Theta$ with a finite subset $\Theta'$ without impacting the bounding problem. Proposition \ref{proposiTion_fInite_B} shows this is the case in our setting. Part (vii) assumes the existence of a bounded envelope function on a slight expansion of $\Pi$. This is also easily verified, for instance, when bounding counterfactual probabilities or average treatment effects. Finally part (viii) assumes that the sample under consideration is i.i.d. from some $P \in \mathcal{P}$ satisfying the other conditions. 

Using these assumptions we will prove a consistency result, and demonstrate how to do inference in our class of problems. For any $c \in \mathbb{R}_{+}$ and any $P \in \mathscr{P}$ let us define:
\begin{align*}
\Pi^{*}(\theta,P,c)&:= \left\{ \pi \in \Pi : \E_{P}[m_{j}(Y_{i},X_{i},\pi,\theta)] \leq c \text{ $\forall j \in \mathcal{J}(\theta)$} \right\},&&\Psi^{*}(P,c) := \bigcup_{\theta \in \Theta'} [\Psi_{\ell b}(\theta,P,c), \Psi_{u b}(\theta,P,c)],
\end{align*}
where:
\begin{align*}
\Psi_{\ell b}(\theta,P,c):= \min_{\pi \in \Pi^{*}(\theta,P,c)} \E_{P}[\psi(Y_{i},X_{i},\pi,\theta)], && \Psi_{u b}(\theta,P,c):= \max_{\pi \in \Pi^{*}(\theta,P,c)} \E_{P}[\psi(Y_{i},X_{i},\pi,\theta)].
\end{align*}
The value functions $\Psi_{\ell b}(\theta,P,0)$ and $\Psi_{u b}(\theta,P,0)$ are analogous to the value functions $\pi_{\ell b}(y,x,\theta)$ and $\pi_{u b}(y,x,\theta)$ from Theorem \ref{thm_linprog_finite_case}. Furthermore, let $\Psi^{*}(P):=\Psi^{*}(P,0)$, and denote the empirical measure as $\mathbb{P}_{n} \in \mathscr{P}$. The following theorem shows that a slight (shrinking) enlargement of the set $\Psi^{*}(\P_{n})$ is a consistent estimator for the set $\Psi^{*}(P)$, where consistency is defined using the Hausdorff metric.\footnote{For two sets $A, B \subset \mathbb{R}^{d}$, the Hausdorff metric is defined as:
\begin{align*}
d_{H}(A,B) := \max\left\{\sup_{a \in A}\inf_{b \in B} ||a-b||, \sup_{b \in B}\inf_{a \in A} ||a-b|| \right\}. 
\end{align*}
  }
\begin{proposition}\label{proposition_consistency_in_text}
Suppose that Assumption \ref{assumption_consistency_in_text} holds. Then $d_{H}(\Psi^{*}(\P_{n},b_{n}),\Psi^{*}(P)) = o_{P}(1)$, where $b_{n}$ is any positive user-specified sequence satisfying $b_{n} = O(1/\sqrt{\log(n)})$.
\end{proposition}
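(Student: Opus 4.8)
The plan is to adapt the set-estimation consistency argument of \cite{cho2021simple} to the present linear-programming formulation. Since Assumption \ref{assumption_consistency_in_text}(vi) (cf.\ Proposition \ref{proposiTion_fInite_B}) lets us replace $\Theta$ by the finite set $\Theta'$, the two sets in question are finite unions of intervals indexed by $\theta \in \Theta'$, so it suffices to control each interval's endpoints and then assemble the union. Write
\begin{align*}
\eta_{n} := \sup_{\theta \in \Theta'}\sup_{j \in \mathcal{J}(\theta)}\sup_{\pi \in \Pi_{\varepsilon}} \left| \E_{\mathbb{P}_{n}}[m_{j}(Y,X,\pi,\theta)] - \E_{P}[m_{j}(Y,X,\pi,\theta)] \right|, \quad \delta_{n} := \sup_{\theta \in \Theta'}\sup_{\pi \in \Pi_{\varepsilon}} \left| \E_{\mathbb{P}_{n}}[\psi(Y,X,\pi,\theta)] - \E_{P}[\psi(Y,X,\pi,\theta)] \right|.
\end{align*}
For each fixed $\theta$ the functions $m_{j}(\,\cdot\,,\pi,\theta)$ and $\psi(\,\cdot\,,\pi,\theta)$ are linear in $\pi$ over the compact polytope $\Pi_{\varepsilon}$ (Assumption \ref{assumption_consistency_in_text}(i),(ii)), so the class $\mathcal{F}(\theta)$ is a finite-dimensional VC-subgraph class equipped with the bounded measurable envelope $F_{\theta}$ from Assumption \ref{assumption_consistency_in_text}(vii); it is therefore $P$-Donsker, giving $\eta_{n}=O_{P}(n^{-1/2})$ and $\delta_{n}=O_{P}(n^{-1/2})$ after maximizing over the finite set $\Theta'$. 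The enlargement $b_{n}$ must vanish yet dominate this $O_{P}(n^{-1/2})$ error; the prescribed order $b_{n}=O(1/\sqrt{\log n})$ supplies both, so that $\eta_{n}=o_{P}(b_{n})$ and the event $G_{n}:=\{\eta_{n}\le b_{n}/2\}$ has probability tending to one.

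The next step sandwiches the empirical feasible correspondence. On $G_{n}$, comparing empirical and population moments yields, for every $\theta$,
\begin{align*}
\Pi^{*}(\theta,P,b_{n}/2) \subseteq \Pi^{*}(\theta,\mathbb{P}_{n},b_{n}) \subseteq \Pi^{*}(\theta,P,3b_{n}/2).
\end{align*}
Two consequences follow. First, any $\theta$ with $\Pi^{*}(\theta,P,0)=\emptyset$ is \emph{robustly} infeasible: because $\Pi$ is compact and $\pi \mapsto \max_{j}\E_{P}[m_{j}(\,\cdot\,,\pi,\theta)]$ is continuous, the minimum $c^{*}(\theta):=\min_{\pi\in\Pi}\max_{j}\E_{P}[m_{j}(\,\cdot\,,\pi,\theta)]$ is attained, and emptiness forces $c^{*}(\theta)>0$. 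Minimizing over the finitely many such $\theta$ gives a gap $c_{\min}>0$, so for $n$ large $3b_{n}/2<c_{\min}$ and these $\theta$ contribute empty intervals to both $\Psi^{*}(\mathbb{P}_{n},b_{n})$ and $\Psi^{*}(P)$. Hence on $G_{n}$ and for large $n$ both unions run over the common index set $\Theta'_{+}:=\{\theta\in\Theta':\Pi^{*}(\theta,P,0)\neq\emptyset\}$, with all component intervals nonempty.

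It remains to show endpoint convergence for each $\theta\in\Theta'_{+}$; I treat $\Psi_{ub}$, the case of $\Psi_{\ell b}$ being symmetric. On $G_{n}$ the sandwich and $\delta_{n}$ give $\Psi_{ub}(\theta,\mathbb{P}_{n},b_{n})\le\Psi_{ub}(\theta,P,3b_{n}/2)+\delta_{n}$, while feasibility of the population maximizer $\pi_{0}^{*}\in\Pi^{*}(\theta,P,0)\subseteq\Pi^{*}(\theta,\mathbb{P}_{n},b_{n})$ gives $\Psi_{ub}(\theta,\mathbb{P}_{n},b_{n})\ge\Psi_{ub}(\theta,P,0)-\delta_{n}$. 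What then remains is continuity under constraint relaxation, $\Psi_{ub}(\theta,P,c)\to\Psi_{ub}(\theta,P,0)$ as $c\downarrow 0$. The value is monotone in $c$; taking $c_{k}\downarrow 0$ with near-maximizers $\pi_{k}$, compactness of $\Pi$ extracts a limit $\bar{\pi}$ with $\E_{P}[m_{j}(\,\cdot\,,\bar{\pi},\theta)]\le 0$ for all $j$ (continuity of the moments) and objective value $\lim_{k}\Psi_{ub}(\theta,P,c_{k})$, so $\bar{\pi}\in\Pi^{*}(\theta,P,0)$ and the limit equals $\Psi_{ub}(\theta,P,0)$. Combining the three displays yields $|\Psi_{ub}(\theta,\mathbb{P}_{n},b_{n})-\Psi_{ub}(\theta,P,0)|=o_{P}(1)$, uniformly over the finite set $\Theta'_{+}$.

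Finally, the elementary bound $d_{H}\!\left(\bigcup_{i}A_{i},\bigcup_{i}B_{i}\right)\le\max_{i}d_{H}(A_{i},B_{i})$ for finite unions of nonempty sets, together with $d_{H}([\ell^{n},u^{n}],[\ell,u])\le\max(|\ell^{n}-\ell|,|u^{n}-u|)$, delivers $d_{H}(\Psi^{*}(\mathbb{P}_{n},b_{n}),\Psi^{*}(P))\le\max_{\theta\in\Theta'_{+}}\max\{|\Psi_{\ell b}(\theta,\mathbb{P}_{n},b_{n})-\Psi_{\ell b}(\theta,P,0)|,|\Psi_{ub}(\theta,\mathbb{P}_{n},b_{n})-\Psi_{ub}(\theta,P,0)|\}=o_{P}(1)$. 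The step I expect to require the most care is the interplay between Step 2 and Step 3: one must rule out \emph{spurious} intervals generated by nearly-infeasible $\theta$ (handled by the attainment argument forcing $c^{*}(\theta)>0$ on the finite set $\Theta'$) while simultaneously calibrating $b_{n}$ to be large enough to absorb the $O_{P}(n^{-1/2})$ sampling error (retaining the true feasible set, hence the lower Hausdorff bound) yet small enough to vanish (for the upper Hausdorff bound). This two-sided requirement is precisely the role played by the slowly-shrinking rate $b_{n}=O(1/\sqrt{\log n})$.
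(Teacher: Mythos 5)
Your proof is correct, but it takes a different route from the paper's. The paper proves this proposition by verifying the hypotheses of a general-purpose consistency result (Theorem \ref{theorem_consistency} in Appendix \ref{appendix_consistency}), whose proof decomposes the Hausdorff distance over the finite set $\Theta'$, bounds each endpoint discrepancy by $C(\theta)\, d_{H}\bigl(\Pi^{*}(\theta,\mathbb{P}_{n},b_{n}),\Pi^{*}(\theta,P,0)\bigr)$ plus a uniform objective error using the Lipschitz constant from Assumption \ref{assumption_consistency_in_text}(i), and then establishes Hausdorff consistency of the feasible sets themselves via the Molchanov-style right-continuity of $\rho(\varepsilon):=d_{H}\bigl(\Pi^{*}(\theta,P,\varepsilon),\Pi^{*}(\theta,P,0)\bigr)$. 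You instead bypass set-valued convergence entirely: your sandwich $\Pi^{*}(\theta,P,b_{n}/2)\subseteq\Pi^{*}(\theta,\mathbb{P}_{n},b_{n})\subseteq\Pi^{*}(\theta,P,3b_{n}/2)$ traps the empirical value function between population value functions at relaxed constraint levels, and you close the argument with right-continuity at zero of the scalar map $c\mapsto\Psi_{ub}(\theta,P,c)$, proved by extracting a convergent subsequence of near-maximizers. Your route needs only continuity (not Lipschitz continuity) of $\pi\mapsto\E_{P}[\psi]$ and is more elementary; the paper's route delivers the stronger intermediate conclusion that the feasible sets converge in Hausdorff distance, which is reusable beyond linear objectives. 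Your explicit treatment of robustly infeasible $\theta$ (showing $c^{*}(\theta)>0$ so that empty intervals stay empty on both sides) is a welcome refinement: the paper's summation bound $d_{H}(\cup_{\theta}A_{\theta},\cup_{\theta}B_{\theta})\leq\sum_{\theta}d_{H}(A_{\theta},B_{\theta})$ is not meaningful when one member of a pair is empty, a case the paper does not address. One shared imprecision worth flagging: both you and the paper read $b_{n}=O(1/\sqrt{\log n})$ as guaranteeing $b_{n}\gg n^{-1/2}$, but an upper bound of order $1/\sqrt{\log n}$ does not by itself prevent $b_{n}$ from shrinking faster than $n^{-1/2}$; the argument really requires $b_{n}\sqrt{n}\to\infty$ (satisfied by the paper's choice $b_{n}=b/\sqrt{\log n}$ in the application), so this is a defect of the statement rather than of your proof.
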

Proposition \ref{proposition_consistency_in_text} is related to results and discussions found in \cite{molchanov1998limit}, \cite{manski2002inference}, and \cite{chernozhukov2007estimation}. It is also a special case of a more general consistency result presented in Appendix \ref{appendix_consistency}, and suggests a consistent estimator for the identified set from Theorem \ref{thm_linprog_finite_case}. In the application section we take $b_{n} = b/\sqrt{\log(n)}$ for some $b>0$, and call $\Psi^{*}(\P_{n},b_{n})$ the ``plug-in'' estimate of the identified set. 

In Section \ref{section_application} we also use the inference method of \cite{cho2021simple}, designed for uniform inference on value functions in stochastic linear programming problems. In general, this inference problem is highly irregular, but \cite{cho2021simple} show that regularity can be restored by introducing infinitesimal random perturbations to the constraints and objective function. After perturbing the problem, they prove consistency of a simple and fast nonparametric bootstrap procedure to construct a confidence set. However, a modification of the procedure of \cite{cho2021simple} is needed to fit our setting to allow for the profiling points $\theta \in \Theta'$. 

Let $\xi\sim P_{\xi}$ denote the random perturbation vector from the procedure of \cite{cho2021simple}. By Proposition \ref{proposiTion_fInite_B}, there exists a finite set $\Theta' \subset \Theta$ of representative points satisfying:
\begin{align*}
\Psi^{*}(P) = \bigcup_{\theta \in \Theta'} \Psi^{*}(\theta,P,0).
\end{align*}
To apply the procedure of \cite{cho2021simple}, we take each $\theta \in \Theta'$, and use the procedure of \cite{cho2021simple} to construct a set $C_{n}(1-\alpha,\theta)$ satisfying:
\begin{align}
\liminf_{n\to \infty}  \inf_{\{(\psi,P): \psi\in \Psi^{*}(\theta,P,0), P \in \mathcal{P}\}} (\text{Pr}_{P}\times P_{\xi}) \left(\psi \in CS_{n}(1-\alpha,\theta)  \right) \geq 1-\alpha.\label{eq_sub_CS}
\end{align}
Here, probability is taken with respect to the product measure $\text{Pr}_{P}\times P_{\xi}$ to account for the random perturbations $\xi \sim P_{\xi}$ introduced to restore regularity. We refer to \cite{cho2021simple} for additional discussion. We then set:
\begin{align*}
CS_{n}(1-\alpha) = \bigcup_{\theta \in \Theta'} CS_{n}(1-\alpha,\theta),
\end{align*}
The following result shows that this confidence set is uniformly valid over $\mathcal{P}$. The proof of the result proceeds by verifying the assumptions of \cite{cho2021simple} for \eqref{eq_sub_CS}, and then showing that the modified confidence set $CS_{n}(1-\alpha)$ has the correct coverage. 
\begin{proposition}\label{proposition_inference_in_text}
Suppose that Assumption \ref{assumption_consistency_in_text} holds. Then:
\begin{align}
\liminf_{n\to \infty}  \inf_{\{(\psi,P): \psi\in \Psi^{*}(P), P \in \mathcal{P}\}} (\text{Pr}_{P}\times P_{\xi}) \left(\psi \in CS_{n}(1-\alpha)  \right) \geq 1-\alpha.
\end{align}
\end{proposition}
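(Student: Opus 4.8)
The plan is to reduce the uniform coverage statement for the union $CS_{n}(1-\alpha) = \bigcup_{\theta \in \Theta'} CS_{n}(1-\alpha,\theta)$ to the component-wise coverage statement \eqref{eq_sub_CS}, which is delivered by the procedure of \cite{cho2021simple}. The first step is to establish \eqref{eq_sub_CS} itself: for each fixed $\theta \in \Theta'$, I would verify that the stochastic linear programming problem defining $\Psi^{*}(\theta,P,0)$ satisfies the regularity conditions required by \cite{cho2021simple}. Here Assumption \ref{assumption_consistency_in_text} does the work---parts (i) and (ii) supply linearity of the objective and moment functions in $\pi$, part (iii) guarantees a nonempty feasible region for every $P \in \mathcal{P}$, parts (iv) and (v) provide the uniform integrability of $(Y,X)$ and of the gradients needed for the Donsker-type arguments, and part (vii) supplies the bounded envelope on the enlarged constraint set $\Pi_{\varepsilon}$. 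With these in hand, the random-perturbation construction of \cite{cho2021simple} restores regularity and their nonparametric bootstrap yields a set $CS_{n}(1-\alpha,\theta)$ satisfying \eqref{eq_sub_CS}. I expect this verification to be the main obstacle, since it requires carefully matching our objects to the framework of \cite{cho2021simple}; the remaining aggregation argument is essentially bookkeeping.

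The second step is the aggregation itself. By Proposition \ref{proposiTion_fInite_B} the identified set decomposes as $\Psi^{*}(P) = \bigcup_{\theta \in \Theta'}\Psi^{*}(\theta,P,0)$, so the index set of pairs over which we take the infimum also decomposes:
\begin{align*}
\{(\psi,P): \psi \in \Psi^{*}(P),\, P \in \mathcal{P}\} = \bigcup_{\theta \in \Theta'} \{(\psi,P): \psi \in \Psi^{*}(\theta,P,0),\, P \in \mathcal{P}\}.
\end{align*}
Since the infimum of a function over a finite union of sets equals the minimum over those sets of the infima, the coverage functional $g_{n}(\psi,P) := (\text{Pr}_{P} \times P_{\xi})(\psi \in CS_{n}(1-\alpha))$ satisfies
\begin{align*}
\inf_{\{(\psi,P): \psi \in \Psi^{*}(P)\}} g_{n}(\psi,P) = \min_{\theta \in \Theta'} \inf_{\{(\psi,P): \psi \in \Psi^{*}(\theta,P,0)\}} g_{n}(\psi,P).
\end{align*}
For any $(\psi,P)$ with $\psi \in \Psi^{*}(\theta,P,0)$, the set inclusion $CS_{n}(1-\alpha,\theta) \subseteq CS_{n}(1-\alpha)$ gives $g_{n}(\psi,P) \geq (\text{Pr}_{P} \times P_{\xi})(\psi \in CS_{n}(1-\alpha,\theta))$, so each inner infimum is bounded below by the corresponding quantity appearing in \eqref{eq_sub_CS}.

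The final step handles the interchange of the liminf with the minimum over the finite set $\Theta'$. Writing $c_{n}(\theta)$ for the inner infimum above, the monotonicity just described together with \eqref{eq_sub_CS} gives $\liminf_{n} c_{n}(\theta) \geq 1-\alpha$ for every $\theta \in \Theta'$. Because $\Theta'$ is finite, a standard $\varepsilon$-argument---for each $\varepsilon>0$ choose $N_{\theta}$ with $c_{n}(\theta) \geq 1-\alpha-\varepsilon$ for $n \geq N_{\theta}$, then take the common threshold $N = \max_{\theta \in \Theta'} N_{\theta}$---yields $\liminf_{n} \min_{\theta \in \Theta'} c_{n}(\theta) \geq \min_{\theta \in \Theta'} \liminf_{n} c_{n}(\theta) \geq 1-\alpha$. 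Chaining the three displays then delivers
\begin{align*}
\liminf_{n\to\infty} \inf_{\{(\psi,P): \psi \in \Psi^{*}(P),\, P \in \mathcal{P}\}} (\text{Pr}_{P} \times P_{\xi})\left(\psi \in CS_{n}(1-\alpha)\right) \geq 1-\alpha,
\end{align*}
which is the claim. The only place finiteness of $\Theta'$ is essential is this last interchange: without it the outer minimum would become an infimum and the uniform threshold $N$ might fail to exist, so Proposition \ref{proposiTion_fInite_B} is what makes the argument go through.
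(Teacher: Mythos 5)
Your proposal is correct and follows essentially the same route as the paper's proof: verify the regularity conditions of \cite{cho2021simple} for each fixed $\theta \in \Theta'$ using Assumption \ref{assumption_consistency_in_text} to obtain \eqref{eq_sub_CS}, then decompose $\Psi^{*}(P)$ via Proposition \ref{proposiTion_fInite_B}, lower-bound the coverage using $CS_{n}(1-\alpha,\theta) \subseteq CS_{n}(1-\alpha)$, and interchange the $\liminf$ with the minimum over the finite set $\Theta'$. The only cosmetic difference is that the paper's verification of the \cite{cho2021simple} conditions is slightly more explicit (e.g., using the discrete topology on $\mathcal{Y}\times\mathcal{X}$ to get continuity and invoking part (viii) for the i.i.d. sampling), but this does not change the argument.
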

In our application ahead, we also report bias-corrected estimates of the lower and upper endpoints of the identified set. Convexity of the minimum in combination with Jensen's inequality shows that the sample analog lower bound is biased upward. Similarly, the sample analog upper bound is biased downward. This leads to an identified set that is on average too narrow. In response, \cite{chernozhukov2013intersection} proposed the use of half-median unbiased estimators. Half-median unbiased estimates $\hat{\Psi}_{\ell b}$ and $\hat{\Psi}_{ub}$ satisfy $\hat{\Psi}_{\ell b} \leq \Psi_{\ell b}(P)$ and $\Psi_{ub}(P)\leq \hat{\Psi}_{u b}$, each holding with probability at least $1/2$. We construct a bias-corrected estimate of $\Psi_{\ell b}(P)$ ($\Psi_{u b}(P)$) using the inference procedure of \cite{cho2021simple} by setting $\alpha=1/2$ and by taking our estimate to be the endpoint of the $1-\alpha$ lower (upper) confidence set for $\Psi_{\ell b}(P)$ ($\Psi_{u b}(P)$). In our application we report both the plug-in estimates of our bounds based on Theorem \ref{proposition_consistency_in_text} and the half-median unbiased estimates.

\section{Application}\label{section_application}

In this section we apply our method to study the impact of private health insurance on an individual's decision to visit a doctor. In general, insurance markets are plagued by problems arising from asymmetric information between consumers and insurance providers (c.f. \cite{rothschild1978equilibrium}). For example, adverse selection occurs in the health insurance market when individuals have more information about their latent health determinants than the providers of health insurance. A robust prediction of the classical theory of asymmetric information is that those who are more likely to purchase insurance are also those who are more likely to experience the insured risk.\footnote{The ``insured risk'' refers to the event for which insurance was purchased. In our context, it is any event that would typically require a visit to the doctor.} On the other hand, there has been little and mixed empirical evidence of adverse selection in health insurance markets (see \cite{cardon2001asymmetric} for a discussion). %Others have suggested that those who purchase insurance may be more risk averse, and so less likely to engage in activities that might cause them to experience the insured risk. Evidence of this is found in \cite{finkelstein2006multiple}, who demonstrate that wealthier and more cautious individuals are more likely to have long-term care insurance, but less likely to ever use their insurance. However, in many cases the opposite is equally plausible. For example, \cite{bajari2014moral} explore the effect of moral hazard in health insurance markets, which occurs when those who purchase health insurance are more likely to experience the insured risk given that they no longer bear the full cost of health care. 

In this section we compute various counterfactual parameters while remaining agnostic on the exact nature of the latent variables linking health insurance and health care utilization decisions. We take the decision to visit a doctor as our binary outcome variable of interest, and consider the individuals' private health insurance status as an endogenous explanatory variable. This is consistent with the idea that private insurance status may be dependent with individual-specific latent factors---most importantly, unobserved health determinants and attitudes towards risk---that influence an individual's propensity to visit a doctor. We use data from the 2010 wave of the Medical Expenditure Panel Survey (MEPS), which has also been recently analyzed by \cite{han2019estimation} and \cite{acerenza2021testing}. We focus on the same sub-sample considered in these papers. In particular, we focus on the month of January 2010, consider only individuals between ages $25$ and $64$, and drop individuals who obtain either federal or state insurance in 2010 and individuals who are self-employed or unemployed. These restrictions leave us with a sample of $7555$ individuals. 

In all specifications $W$ is a binary endogenous variable representing an individual's private insurance status, and we consider a binary health status variable ($Z_{1}$) and a binary marital status variable ($Z_{2}$) as regressors.\footnote{The MEPS data includes information on self-reported health status on a scale from $1- 5$, and we consider values less than or equal to $2$ as being ``unhealthy.''} Finally, we use the number of employees working for the individual's firm ($Z_{3}$) as an instrument. This variable provides a measure of the size of a firm and has discrete support in the range $[1,500]$, which we further discretize into 11 bins.\footnote{Variable $Z_3$ is supported on the range $[1, 500]$ and is clearly top-coded. We notice that there is bunching of observations at firm sizes in multiples of five, and some regions of the support of $Z_3$ contain very few observations. In order to get reliable estimates of the conditional choice probabilities, we further discretize the firm size into 11 bins. The bins are respectively $[1, 5]$, $(5, 10]$, $(10,20]$, $(20,30]$, $(30,40]$, $(40,50]$, $(50,60]$, $(60,70]$, $(70,100]$, $(100,200]$ and $(200,500]$.} Using firm size as an instrument is consistent with the evidence that larger firms are more likely to provide health insurance benefits, but do not directly influence an individual's decision to visit a doctor.\footnote{From \cite{cardon2001asymmetric} p.408: ``Another observed symptom, consistent with the theoretical predictions, is that the uninsured tend to work for small employers. Large employers can overcome adverse selection by risk pooling.'' }  

A possible concern with using firm size as an instrument is that risk averse individuals may be more likely to select into a job with a larger firm size. In an attempt to address this issue, we include an alternate independence assumption that assumes the firm size $Z_3$ is conditionally independent of $U$ given $(Z_1, Z_2)$ only when $Z_{3}$ lies within a certain range. The idea is that once we condition on a particular range of firm size, the remaining variation in firm size is independent of $U$ conditional on $(Z_1, Z_2)$. We consider four ranges, given by $(1, 10]$, $(10, 50]$, $(50, 100]$ and $(100,500]$, and impose our conditional independence assumption for each range separately.

The first parameter we consider is the average treatment effect, defined as:
\begin{align}
	\mu_{ate} & := \sum_{(y, w, z) \in \{0,1\} \times \mathcal{W} \times \mathcal{Z}}P_{U\mid Y, W, Z}(\varphi(1, z, U, \theta) \geq 0\mid  Y = y, W = w, Z = z) P(Y =y, W = w, Z = z) \label{eq_mu_ate}\\ 
	&\qquad\qquad- \sum_{(y, w, z) \in \{0,1\} \times \mathcal{W} \times \mathcal{Z}}P_{U\mid Y, W, Z}(\varphi(0, z, U, \theta) \geq 0\mid  Y = y, W = w, Z = z) P(Y =y, W = w, Z = z). \nonumber
\end{align}
This parameter provides the average causal effect of obtaining health insurance on the decision to visit a doctor. Second, we consider the counterfactual choice probability: 
\begin{align*}
\mu_{ccp}(y)  & := \sum_{z \in \mathcal{Z}} P_{U\mid Y,W,Z}(\varphi(1, z, u, \theta) \geq 0 \mid  Y = y, W = 0, Z = z) P(Z = z\mid Y = y, W = 0),
\end{align*}
for $y \in \{0,1\}$. We focus on the parameter $\mu_{ccp}(0)$ for simplicity, which represents the counterfactual choice probability of visiting a doctor when given private health insurance for the set of individuals who have no insurance and who have chosen not to visit a doctor, averaged across health and marital status. We construct our bounds under the following set of assumptions: 
\begin{enumerate}[label=(A\arabic*)]
	\item Only Assumptions \ref{assumption_basic} and \ref{assumption_counterfactual_domain}.
	\item (A1) and monotonicity (Assumption \ref{assumption_monotonicity}). See below for further details.
	\item (A1) and independence between $(Z_{1},Z_{2})$ and $U$ (Assumption \ref{assumption_independence}).
	\item (A1), (A2) and (A3) together. 
	\item (A1) and independence between $(Z_{1},Z_{2}, Z_{3})$ and $U$ (Assumption \ref{assumption_independence}). 
	\item (A1), (A2) and (A5) together. 
	\item (A1) and $U \independent (Z_{1}, Z_{2},Z_{3}) \mid Z_{3} \in [a,b]$ for various intervals $[a,b]$ (Assumption \ref{assumption_independence}). See the discussion above. 
	\item (A1), (A2) and (A7) together. 
\end{enumerate}
Note that the general index function takes the form $\varphi(w,z_{1},z_{2},u,\theta)$. When monotonicity is imposed in (A2), we impose:
\begin{align*}
	\varphi(1, 0, z_{2},u,\theta) \geq \varphi(0, 0, z_{2},u,\theta), 
\end{align*}
for each $z_{2} \in \{0,1\}$. This implies that for an unhealthy individual, the propensity to visit a doctor when the person has private insurance is always weakly greater than without insurance, regardless of marital status. Finally we consider three different models for the binary outcome variable $Y$:
\begin{align}
	Y &= 1\{ \varphi(W, Z_{1},Z_{2}, U)\geq 0\},\tag{M1} \label{eq_app_nonseparable}\\
	Y &= 1 \{ W U_{1} + Z_{1}\theta_{1}+Z_{2}\theta_{2} \geq U_{2}\},\tag{M2} \label{eq_app_linear1}\\
	Y &= 1 \{ W \theta_{1} + Z_{1}\theta_{2} + Z_{2}\theta_{3} \geq U\}.\tag{M3} \label{eq_app_linear2}
\end{align}
Recall that the extension of our procedure to cover model \eqref{eq_app_nonseparable} was discussed briefly at the end of Section \ref{section_optimization_formulation}. Indeed, under model \eqref{eq_app_nonseparable} the index function $\varphi$ need not be explicitly specified and it may not satisfy the linearity assumption made under Assumption \eqref{assumption_basic}. This makes model \eqref{eq_app_nonseparable} the most flexible.  Models \eqref{eq_app_linear1} and \eqref{eq_app_linear2} impose linearity of $\varphi$ in the latent variables and in the parameters. Here we distinguish two cases. In the first case, \eqref{eq_app_linear1} regards $(U_{1},U_{2})$ as the latent variables in the model. Model \eqref{eq_app_linear2} is the same as \eqref{eq_app_linear1} except that we have replaced the random slope coefficient $U_{1}$ from \eqref{eq_app_linear1} with a fixed coefficient. Model \eqref{eq_app_linear2} represents the additively separable linear index model that is commonly used in the empirical literature, except for the fact that we do not assume a parametric distribution for $U$ and do not have a model for the endogenous variable $W$. %Finally, note that model \eqref{eq_app_nonseparable} admits at most 256 response types, model \eqref{eq_app_weakly_separable} and \eqref{eq_app_linear2}---both having only a scalar latent variable---admit at most 9 response types, and model \eqref{eq_app_linear1} admits at most 25 response types. 

The identified sets for $\mu_{ate}$ under assumptions (A1) - (A8) and models \eqref{eq_app_nonseparable} - \eqref{eq_app_linear2} are reported in Table \ref{table_ate_results_cv}. For simplicity, we report the convex hull of the estimated identified set for each specification. Table \ref{table_ate_results_cv} also reports our modified plug-in estimator (see Section \ref{section_consistency_bias_correction_inference}) as well as half-median unbiased estimators and $90\%$ confidence sets constructed using the modified procedure of \cite{cho2021simple}. Due to a confluence of factors---including the dimension of the empirical choice probability vector, the large number of constraints, and the sample size---we find that the bootstrap standard errors are small, resulting in half-median unbiased estimates that are only slightly more narrow than the 90\% confidence sets. 

Unsurprisingly, the plug-in bounds on $\mu_{ate}$ shrink as the strength of our assumptions increase. The most flexible model is \eqref{eq_app_nonseparable} under assumption (A1), in which case the length of the bound on $\mu_{ate}$ is one.\footnote{Note that in a potential outcome framework with a binary treatment and binary outcome (and no other additional assumptions), the worst-case bound on the average treatment effect always has a length of one (c.f. \cite{manski1990nonparametric}). } The identified set for $\mu_{ate}$ also always overlaps zero for model \eqref{eq_app_nonseparable}. Results in Table \ref{table_ate_results_cv} suggest that full independence of $Z_3$ (A5 and A6) produces more informative bounds for $\mu_{ate}$ than our alternate conditional independence assumption (A7 and A8). In fact, our alternate conditional independence assumption does not provide much identifying power (compare the results under Assumptions (A3) and (A7)). On the other hand, full independence of $Z_3$ does induce a noticeable narrowing of the identified set for $\mu_{ate}$ (compare the results under Assumptions (A3) and (A5)). The results for this model are a useful benchmark to compare with cases where we impose linearity on the index function.  

\begin{table}[!tbp]
	\begin{center}
		\resizebox{\textwidth}{!}{\begin{tabular}{lcccccccc}
			\hline\hline
			\multicolumn{1}{l}{}&\multicolumn{1}{c}{(A1)}&\multicolumn{1}{c}{(A2)}&\multicolumn{1}{c}{(A3)}&\multicolumn{1}{c}{(A4)}&\multicolumn{1}{c}{(A5)}&\multicolumn{1}{c}{(A6)}&\multicolumn{1}{c}{(A7)}&\multicolumn{1}{c}{(A8)}\tabularnewline
			\hline
		&	\multicolumn{6}{c}{\eqref{eq_app_nonseparable}: Nonseparability of $\varphi$}\\
		\hline 
			Plug-in&$[-0.57, 0.43]$&$[-0.37, 0.43]$&$[-0.57, 0.43]$&$ [-0.36, 0.43]$&$[-0.28, 0.26]$&$[-0.13, 0.26]$&$[-0.49, 0.38]$&$[-0.29, 0.38]$\tabularnewline
			Half-Median&$[-0.57, 0.43]$&$[-0.37, 0.43]$&$[-0.57, 0.43]$&$[-0.36, 0.43]$&$[-0.30, 0.28]$&$[-0.14, 0.27]$&$[-0.73, 0.52]$&$[-0.48, 0.52]$\tabularnewline
			90\% c.s.&$[-0.58, 0.44]$&$[-0.37, 0.44]$&$[-0.58, 0.44]$&$[-0.37, 0.44]$&$[-0.33, 0.30]$&$[-0.17, 0.30]$&$[-0.76, 0.55]$&$[-0.51, 0.54]$\tabularnewline
			 									\hline 
			& \multicolumn{6}{c}{\eqref{eq_app_linear1}: Linearity of $\varphi$ (with random coefficients)}\\
			\hline
			Plug-in&$[-0.57, 0.43]$&$[-0.37, 0.43]$&$[-0.54, 0.43]$&$ [-0.26, 0.43]$&$[-0.17, 0.25]$&$[0.17, 0.25]$&$[-0.42, 0.37]$&$[-0.18, 0.37]$\tabularnewline
			Half-Median&$[-0.57, 0.43]$&$[-0.37, 0.43]$&$[-0.54, 0.43]$&$[-0.25, 0.43]$&$[-0.20, 0.28]$&$[0.14, 0.27]$&$[-0.39, 0.36]$&$[-0.16, 0.36]$\tabularnewline
			90\% c.s.&$[-0.58, 0.44]$&$[-0.37, 0.44]$&$[-0.56, 0.44]$&$[-0.27, 0.44]$&$[-0.24, 0.31]$&$[0.11, 0.29]$&$[-0.49, 0.61]$&$[-0.19, 0.37]$\tabularnewline
			 									\hline 
			& \multicolumn{6}{c}{\eqref{eq_app_linear2}: Linearity of $\varphi$ (with fixed coefficients)}\\
			\hline 
			Plug-in&$[-0.57, 0.43]$&$[-0.37, 0.43]$&$[-0.53, 0.43]$&$ [0.00, 0.43]$&$[0.17, 0.25]$&$[0.17, 0.25]$&$[0.07, 0.37]$&$[0.07, 0.37]$\tabularnewline
			Half-Median&$[-0.57, 0.43]$&$[-0.37, 0.43]$&$[-0.54, 0.43]$&$ [0.01, 0.43]$&$[0.15, 0.27]$&$[0.15, 0.27]$&$[0.10, 0.36]$&$[0.10, 0.36]$\tabularnewline
			90\% c.s.&$[-0.58, 0.44]$&$[-0.37, 0.44]$&$[-0.55, 0.44]$&$[0.00, 0.44]$&$[0.12, 0.29]$&$[0.12, 0.29]$&$[0.07, 0.37]$&$[0.06, 0.38]$\tabularnewline
			\hline
		\end{tabular}}
		\caption{Identified sets for the average treatment effect under different specifications and under various assumptions. For the plug-in estimates, we convert all equality constraints to two inequality constraints and introduce a small slackness $b_n = 0.0001/\sqrt{\log(n)}$ which is needed for consistency (see Section \ref{section_consistency_bias_correction_inference}). Half-median unbiased estimates and a 90\% confidence set are also reported. These sets are computed using 999 bootstrap samples using the inference approach in \cite{cho2021simple}.\label{table_ate_results_cv}}\end{center}
\end{table}

Next, we see in Table \ref{table_ate_results_cv} that the linear models from \eqref{eq_app_linear1} and \eqref{eq_app_linear2} narrow the bounds relative to the case of the general index function under some of the assumptions. Unsurprisingly, the smallest interval for $\mu_{ate}$ for model \eqref{eq_app_linear1} is obtained under Assumption (A6), in which case the sign of $\mu_{ate}$ is identified. For models \eqref{eq_app_linear1} and \eqref{eq_app_linear2} we make use of our method for profiling $\theta$, as described in Section \ref{section_profiling}. In model \eqref{eq_app_linear1} we must profile on $\theta \in \mathbb{R}^2$ and there are 8 representative points. %Figure \ref{fig_profile} plots various regions of $\Theta$ corresponding to the points $\theta$ that deliver the same collection of sets $\mathcal{U}(s,\theta)$ with nonempty interior. The figure also shows the associated representative points. 
% \begin{figure}[t!]
% 	\centering 
% 	\includegraphics[scale=0.35]{profile_1x2z}
% 	\caption{Profiling of $\theta$: there are 8 representative points, each representing one of the eight sets determined by $4$ hyperplanes in $\mathbb{R}^2$. }
% 	\label{fig_profile}
% \end{figure}
Interestingly, we find that under Assumptions (A1) - (A4) and (A7) - (A8), the identified set for $\theta$ is the entire Euclidean space $\mathbb{R}^2$. This illustrates that non-trivial bounds on $\mu_{ate}$ are possible even when the structural parameters are unidentified. Figure \ref{ATE_hole1} in Appendix \ref{appendix_figure} shows the intervals computed using the linear programs of the form \eqref{eq_thm_linprog_finite_LB} and \eqref{eq_thm_linprog_finite_UB} for each representative point of $\theta$ under our various assumptions. %The results in Table \ref{table_ate_results_cv} for model  \eqref{eq_app_linear1} represent the (convex hull of the) union of the intervals in Figure \ref{ATE_hole1}. 

In the second linear model \eqref{eq_app_linear2}, all coefficients are fixed. Thus, we need to profile on a parameter vector $\theta\in \mathbb{R}^3$. Our profiling procedure from Section \ref{section_profiling} returns $96$ representative points, each associated with a polyhedral cone in $\mathbb{R}^3$. %A visual representation is provided in Figure \ref{fig_profile2}.  
Under Assumptions (A1) and (A2), the identified set for $\theta$ is $\mathbb{R}^3$, while for all other assumptions (A3) - (A8) we get a more informative identified set for $\theta$. In Figure \ref{ATE_hole2} in Appendix \ref{appendix_figure} we also show the intervals computed using the linear programs of the form \eqref{eq_thm_linprog_finite_LB} and \eqref{eq_thm_linprog_finite_UB} for each representative point of $\theta$ under our various assumptions.  Interestingly, the ATE bounds under (A1) for model (M2) and (M3) are the same as those under model (M1), which suggests that the functional form restrictions only become informative when combined with other modelling assumptions. The sign of $\mu_{ate}$ is identified for model \eqref{eq_app_linear2} under assumptions (A4) - (A8). The narrowest bounds for $\mu_{ate}$ under model \eqref{eq_app_linear2} are obtained under Assumption (A6), where the $90\%$ confidence interval is $[0.12,0.29]$. For comparison, using the same data but a slightly different model, \cite{acerenza2021testing} also bound the ATE and obtain a $95\%$ confidence interval of $[0.02,0.29]$.\footnote{In addition to using a different model, \cite{acerenza2021testing} also use the number of employees (without discretization) as their instrument, and they use inference procedure of \cite{chernozhukov2013intersection} combined with a sample splitting procedure, which is valid under a very different set of assumptions than those presented in the current paper.  }

Next we consider the counterfactual choice probability $\mu_{ccp}(0)$. Table \ref{table_ccp_results_cv} reports the convex hull of the estimated identified set for $\mu_{ccp}(0)$ under various model specifications and under various assumptions. Similar to the bounds for $\mu_{ate}$, the half-median unbiased estimates are only slightly more narrow than the 90\% confidence sets. We also see that the bounds on counterfactual choice probabilities tend to be wide and uninformative for most assumptions. Note that under Assumption (A1) to (A4) we always obtain the interval $[0,1]$ for the estimated identified set. %, providing empirical confirmation of our impossibility result from Corollary \ref{corollary_to_theorem_infinite_to_finite}. 
The narrowest bounds are found in model \eqref{eq_app_linear2} under Assumptions (A5) and (A6). These bounds allow us to conclude that the probability an individual visits a doctor when provided private health insurance, given that they have no private health insurance and did not visit a doctor increases noticeably, with a magnitude in the interval $[0.37, 0.55]$ and a 90\% confidence set of $[0.22, 0.62]$.

\begin{table}[!tbp]
	\begin{center}
		\resizebox{\textwidth}{!}{	\begin{tabular}{lcccccccc}
			\hline\hline
			\multicolumn{1}{l}{}&\multicolumn{1}{c}{(A1)}&\multicolumn{1}{c}{(A2)}&\multicolumn{1}{c}{(A3)}&\multicolumn{1}{c}{(A4)}&\multicolumn{1}{c}{(A5)}&\multicolumn{1}{c}{(A6)}&\multicolumn{1}{c}{(A7)}&\multicolumn{1}{c}{(A8)}\tabularnewline
			\hline
&	\multicolumn{6}{c}{\eqref{eq_app_nonseparable}: Nonseparability of $\varphi$}\\
\hline 
			Plug-in &$[0.00,1.00]$&$[0.00,1.00]$&$[0.00,1.00]$&$[0.00, 1.00]$&$[0.10, 0.70]$&$[0.18, 0.62]$&$[0.04, 0.95]$&$[0.07, 0.91]$\tabularnewline
			Half-Median&$[0.00,1.00]$&$[0.00,1.00]$&$[0.00,1.00]$&$[0.00, 1.00]$&$[0.06, 0.71]$&$[0.14, 0.64]$&$[0.00,1.00]$&$[0.00,1.00]$\tabularnewline
			90\% c.s.&$[0.00,1.00]$&$[0.00,1.00]$&$[0.00, 1.00]$&$[0.00, 1.00]$&$[0.00, 0.76]$&$[0.07, 0.68]$&$[0.00, 1.00]$&$[0.00, 1.00]$\tabularnewline
		 									\hline 
	& \multicolumn{6}{c}{\eqref{eq_app_linear1}: Linearity of $\varphi$ (with random coefficients)}\\
	\hline
			Plug-in &$[0.00,1.00]$&$[0.00,1.00]$&$[0.00,1.00]$&$[0.00, 1.00]$&$[0.13, 0.68]$&$[0.37, 0.56]$&$[0.04, 0.94]$&$[0.08, 0.91]$\tabularnewline
			Half-Median&$[0.00,1.00]$&$[0.00,1.00]$&$[0.00,1.00]$&$[0.00, 1.00]$&$[0.06, 0.67]$&$[0.29, 0.58]$&$[0.05, 0.93]$&$[0.10, 0.89]$\tabularnewline
			90\% c.s.&$[0.00,1.00]$&$[0.00,1.00]$&$[0.00, 1.00$&$[0.00, 1.00]$&$[0.00, 0.73]$&$[0.22, 0.63]$&$[0.00, 1.00]$&$[0.07, 0.92]$\tabularnewline
						 									\hline 
			& \multicolumn{6}{c}{\eqref{eq_app_linear2}: Linearity of $\varphi$ (with fixed coefficients)}\\
			\hline 
			Plug-in &$[0.00,1.00]$&$[0.00,1.00]$&$[0.00,1.00]$&$[0.00, 1.00]$&$[0.37, 0.55]$&$[0.37, 0.55]$&$[0.10, 0.86]$&$[0.10, 0.86]$\tabularnewline
			Half-Median&$[0.00,1.00]$&$[0.00,1.00]$&$[0.00, 1.00]$&$[0.00, 1.00]$&$[0.30, 0.58]$&$[0.30, 0.58]$&$[0.12, 0.83]$&$[0.12, 0.83]$\tabularnewline
			90\% c.s.&$[0.00,1.00]$&$[0.00,1.00]$&$[0.00,1.00]$&$[0.00, 1.00]$&$[0.22, 0.62]$&$[0.22, 0.62]$&$[0.08, 0.86]$&$[0.06, 0.87]$\tabularnewline
			\hline
		\end{tabular}}
		\caption{This table reports the convex hull of the estimated bounds on $\mu_{ccp}(0)$, the counterfactual choice probability of visiting doctor when granted insurance for those who chose not to visit a doctor without insurance. For the plug-in estimates, we convert all equality constraints into two inequality constraints and introduce a small slackness $b_n = 0.0001/\sqrt{\log(n)}$, which is needed for consistency (see Section \ref{section_consistency_bias_correction_inference}). Half-median unbiased estimates and a 90\% confidence set are also reported. These sets are computed using 999 bootstrap samples using the inference approach in \cite{cho2021simple}.\label{table_ccp_results_cv}}\end{center}
\end{table}

%To summarize, Table \ref{table_ate_results_cv} shows that most specifications do not identify the sign of $\mu_{ate}$, and Table \ref{table_ccp_results_cv} shows that most bounds on counterfactual choice probabilities are not informative. Exceptions typically occur only under the strongest independence assumptions, given by assumptions (A5) and (A6), and the strongest functional form assumptions, given in model \eqref{eq_app_linear2}. However, even the strongest set of assumptions considered here are much weaker than the typical assumptions employed in empirical work. 
For the sake of comparison, we estimate the following bivariate probit:
\begin{align*}
	Y &= 1\{W \theta_{1} + Z_1 \theta_2 + Z_2 \theta_3 \geq \varepsilon_1\},\\
	W &= 1\{ Z_1 \gamma_{1} + Z_2 \gamma_{2}+ Z_{3}\gamma_{3} \geq \varepsilon_2\},
\end{align*}
where $(Z_1, Z_2, Z_3)$ are assumed to be independent from $(\varepsilon_1, \varepsilon_2)$, which are bivariate normal with mean zero, unit variance and correlation $\rho$. This model was estimated with our data using maximum likelihood, and $\mu_{ate}$ was estimated as $0.16$ with a bootstrapped confidence interval of $[0.11, 0.20]$. This value for $\mu_{ate}$ is not in the plug-in bounds under (M3) and (A5) and (A6), but it does lie within all of the 90\% confidence sets in Table \ref{table_ate_results_cv}, and seems to suggest strong evidence of a positive causal effect of health insurance on the decision to visit the doctor.\footnote{\cite{han2019estimation} obtain a similar result in a model allowing for $\varepsilon_{1}$ and $\varepsilon_{2}$ to have unrestricted marginals, and a flexible dependence structure. However, they consider a different model from us, and the average treatment effect in \cite{han2019estimation} is different from ours; we consider the average treatment effect averaged over all values of $(w,z)$, while they report the average treatment effect at the average value of their conditioning variables. They also report the average treatment effect at various quantiles of their conditioning variables.}  However, the bivariate probit model is highly parameterized, and the results from Table \ref{table_ate_results_cv} suggest that under weaker assumptions the sign of $\mu_{ate}$ may not be identified.\footnote{In fact, \cite{acerenza2021testing} reject the assumptions of the bivariate probit model using the same data but with a slightly different specification.} 

The previous literature studying the effects of health insurance on the utilization of health care services is full of mixed results, and Table \ref{table_ate_results_cv} suggests that highly parameterized models may give highly significant, but possibly misleading results relative to models that make weaker assumptions.

\section{Conclusion}\label{section_conclusion}

This paper considers (partial) identification of a variety of counterfactual parameters in binary response models with possibly endogenous regressors. Importantly, our class of models allows for nonseparability of the index function in latent variables, and does not require any parametric distributional assumptions. Our specific partition of the latent variable space is key to our procedure, and we show how to enumerate the sets in this partition using results from the literature on computational geometry and hyperplane arrangements. In doing so, we provide a feasible method of constructing bounds on counterfactual quantities under a variety of different assumptions with multi-dimensional and nonseparable latent variables. We also thoroughly study the special case when the index function is linear in parameters, and show how to compute exact (i.e. not approximate) sharp bounds on counterfactual quantities. We also show how to adapt a recent inference procedure to the setting in this paper in order to construct confidence sets and bias-corrected estimates of the identified set.  Finally, we show how to impose independence and monotonicity assumptions, and we present an application of our method to study the effects of private health insurance on the utilization of health care services. 

The consideration of multinomial choice models, triangular systems, or general simultaneous discrete choice models (e.g. games, network formation, or models of social interactions) are all natural future extensions of the framework presented here which we intend to pursue. In addition, this paper emphasizes computational issues that arise in models that are partially identified. We believe exploring applications of state-of-the-art algorithms in computer science to problems in econometrics---as we have attempted here---is a fruitful avenue of future research.

\bibliographystyle{apa}
\bibliography{bibfile}
\pagebreak
\begin{appendix}
\section{Proofs}\label{appendix_proofs}

\subsection{Proofs of Results in the Main Text}

\begin{proof}[Proof of Theorem \ref{thm_counterfactual_identified_set}]
Let $\mathcal{P}_{Y_{\gamma}\mid Y,X}^{**}$ denote the set of all conditional distributions $P_{Y_{\gamma}\mid Y,X}$ such that there exists a pair $(P_{U\mid Y,X},\theta) \in \mathcal{I}_{Y,X}^{*}$ satisfying:
\begin{align}
P_{Y_{\gamma}\mid Y,X}\left(Y_{\gamma} = 1\mid Y=y, X=x\right)= P_{U\mid Y,X}\left( \varphi(\gamma(X),U,\theta) \geq 0 \mid Y=y, X=x\right),
\end{align}
$P_{Y,X}-$a.s. To prove the result it suffices to show $\mathcal{P}_{Y_{\gamma}\mid Y, X}^{*}=\mathcal{P}_{Y_{\gamma}\mid Y, X}^{**}$. To do this, we show that $\mathcal{P}_{Y_{\gamma}\mid Y, X}^{*}\subset\mathcal{P}_{Y_{\gamma}\mid Y, X}^{**}$ and $\mathcal{P}_{Y_{\gamma}\mid Y, X}^{**}\subset\mathcal{P}_{Y_{\gamma}\mid Y,X}^{*}$. To this end, begin by fixing an arbitrary $P_{Y_{\gamma}\mid Y, X} \in \mathcal{P}_{Y_{\gamma}\mid Y, X}^{*}$. By Definition \ref{definition_identified_set_counterfactual} we have:
\begin{align}
P_{Y_{\gamma}\mid Y,X,U}\left(Y_{\gamma} = \mathbbm{1}\{\varphi(\gamma(X),U,\theta) \geq 0  \}\mid Y=y,X=x,U=u\right)=1, \label{eq_thm_cf1}
\end{align}
$P_{Y,X,U}-$a.s. for some $(P_{U\mid Y,X},\theta)\in \mathcal{I}_{Y,X}^{*}$. For this pair $(P_{U\mid Y,X},\theta)$ we have:
\begin{align*}
&P_{Y_{\gamma}\mid Y,X,U}\left(Y_{\gamma} = 1\mid Y=y,X=x,U=u\right)\\
&= P_{Y_{\gamma}\mid Y,X,U}\left(Y_{\gamma} = 1,Y_{\gamma} = \mathbbm{1}\{\varphi(\gamma(X),U,\theta) \geq 0  \}\mid Y=y,X=x, U=u\right),
\end{align*}
$P_{Y,X,U}-$a.s., which follows from \eqref{eq_thm_cf1}. Now note:
\begin{align*}
P_{Y_{\gamma}\mid Y,X,U}\left(Y_{\gamma} =1, Y_{\gamma} = \mathbbm{1}\{\varphi(\gamma(X),U,\theta) \geq 0  \}\mid Y=y,X=x,U=u\right)= \mathbbm{1}\{\varphi(\gamma(x),u,\theta) \geq 0 \},
\end{align*}
$P_{Y,X,U}-$a.s. Thus we have:
\begin{align*}
P_{Y_{\gamma}\mid Y,X}\left(Y_{\gamma} = 1\mid Y=y,X=x\right) &= \int P_{Y_{\gamma}\mid Y,X,U}\left(Y_{\gamma} = 1\mid Y=y,=x,U=u\right) \, dP_{U\mid Y,X}\\
&= \int \mathbbm{1}\{\varphi(\gamma(x),u,\theta) \geq 0 \} \, dP_{U\mid Y,X}\\
&= P_{U\mid Y,X}(\varphi(\gamma(X),U,\theta) \geq 0\mid Y=y,X=x),
\end{align*}
$P_{Y,X}-$a.s. This proves $P_{Y_{\gamma}\mid Y,X} \in \mathcal{P}_{Y_{\gamma}\mid Y,X}^{**}$, and since $P_{Y_{\gamma}\mid Y,X} \in \mathcal{P}_{Y_{\gamma}\mid Y,X}^{*}$ was arbitrary we conclude that $\mathcal{P}_{Y_{\gamma}\mid Y,X}^{*} \subset \mathcal{P}_{Y_{\gamma}\mid Y,X}^{**}$. 

For the reverse inclusion, fix any arbitrary $P_{Y_{\gamma}\mid Y,X} \in \mathcal{P}_{Y_{\gamma}\mid Y,X}^{**}$. Then by definition there exists a pair $(P_{U\mid Y,X},\theta) \in \mathcal{I}_{Y,X}^{*}$ satisfying:
\begin{align}
P_{Y_{\gamma}\mid Y,X}\left(Y_{\gamma} = 1\mid Y=y,X=x\right)= P_{U\mid Y,X}\left( \varphi(\gamma(X),U,\theta) \geq 0 \mid Y=y, X=x\right),\label{eq_cond_prob}
\end{align}
$P_{Y,X}-$a.s. It suffices to show that for this pair $(P_{U\mid Y,X},\theta)$ there exists $P_{Y_{\gamma}\mid Y,X,U}$ satisfying:
\begin{align}
P_{Y_{\gamma}\mid Y,X,U}\left(Y_{\gamma} = \mathbbm{1}\{\varphi(\gamma(X),U,\theta) \geq 0  \}\mid Y=y,X=x,U=u\right)=1,\label{eq_conclusion2}
\end{align}
$P_{Y,X,U}-$a.s. By the Radon-Nikodym Theorem, the existence of a (version of) $P_{Y_{\gamma}\mid Y,X,U}$ is guaranteed by the fact that $P_{Y_{\gamma},U\mid Y,X}\ll P_{U \mid Y, X}$ for all $(y,x)$ occurring with positive probability. Since all spaces involved are Euclidean, we can choose the version to be an almost surely unique regular conditional distribution (c.f. \cite{durrett2010probability} Theorem 5.1.9). By construction this $P_{Y_{\gamma}\mid Y,X,U}$ satisfies:
\begin{align*}
&P_{Y_{\gamma},U\mid Y,X}(Y_{\gamma} \in A, U \in B \mid Y=y,X=x)\\
&\qquad\qquad= \int_{B} P_{Y_{\gamma}\mid Y,X,U}(Y_{\gamma} \in A \mid Y=y,X=x,U=u) \, dP_{U \mid Y,X},
\end{align*}
$P_{Y,X}-$a.s. for every $A \subset \{0,1\}$ and $B \in \mathfrak{B}(\mathcal{U})$. Now note that:
\begin{align*}
P_{Y_{\gamma}\mid Y,X,U}(Y_{\gamma}=1, Y_{\gamma}= \mathbbm{1}\{\varphi(\gamma(X),U,\theta) \geq 0  \} \mid Y=y,X=x,U=u) = \mathbbm{1}\{\varphi(\gamma(x),u,\theta) \geq 0  \},\\
P_{Y_{\gamma}\mid Y,X,U}(Y_{\gamma}=0, Y_{\gamma}= \mathbbm{1}\{\varphi(\gamma(X),U,\theta) \geq 0  \} \mid Y=y,X=x,U=u) = \mathbbm{1}\{\varphi(\gamma(x),u,\theta) < 0  \}.
\end{align*}
$P_{Y,X}-$a.s. Thus:
\begin{align*}
&P_{Y_{\gamma}\mid Y,X}\left(Y_{\gamma} = \mathbbm{1}\{\varphi(\gamma(X),U,\theta) \geq 0  \}\mid Y=y,X=x\right)\\
&=  \int_{\mathcal{U}} P_{Y_{\gamma}\mid Y,X,U}(Y_{\gamma} = \mathbbm{1}\{\varphi(\gamma(X),U,\theta) \geq 0  \} \mid Y=y,X=x,U=u) \, dP_{U \mid Y,X}\\
&=  \int_{\mathcal{U}} P_{Y_{\gamma}\mid Y,X,U}(Y_{\gamma}=1, Y_{\gamma}= \mathbbm{1}\{\varphi(\gamma(X),U,\theta) \geq 0  \} \mid Y=y,X=x,U=u) \, dP_{U \mid Y,X}\\
&\qquad\qquad\qquad+  \int_{\mathcal{U}} P_{Y_{\gamma}\mid Y,X,U}(Y_{\gamma}=0, Y_{\gamma}= \mathbbm{1}\{\varphi(\gamma(X),U,\theta) \geq 0  \} \mid Y=y,X=x,U=u) \, dP_{U \mid Y,X}\\
&=  \int_{\mathcal{U}} \mathbbm{1}\{\varphi(\gamma(x),u,\theta) \geq 0  \} \, dP_{U \mid Y,X}+  \int_{\mathcal{U}} \mathbbm{1}\{\varphi(\gamma(x),u,\theta) < 0  \} \, dP_{U \mid Y,X}\\
&= P_{U\mid Y,X}(\varphi(\gamma(x),u,\theta) \geq 0 \mid Y=y, X=x) + P_{U\mid Y,X}(\varphi(\gamma(x),u,\theta) < 0 \mid Y=y, X=x)\\
&=1,
\end{align*}
$P_{Y,X}-$a.s. This proves \eqref{eq_conclusion2} and thus shows $P_{Y_{\gamma}\mid Y,X} \in \mathcal{P}_{Y_{\gamma}\mid Y,X}^{*}$. Since $P_{Y_{\gamma}\mid Y,X} \in \mathcal{P}_{Y_{\gamma}\mid Y,X}^{**}$ was arbitrary we can conclude that $\mathcal{P}_{Y_{\gamma}\mid Y,X}^{**} \subset \mathcal{P}_{Y_{\gamma}\mid Y,X}^{*}$. Combining the two inclusions, we have $\mathcal{P}_{Y_{\gamma}\mid Y,X}^{*} =\mathcal{P}_{Y_{\gamma}\mid Y,X}^{**}$. This completes the proof.

\end{proof}

\begin{proof}[Proof of Theorem \ref{theorem_infinite_to_finite}]
Let $P_{Y_{\gamma} \mid Y,X}$ be a collection of conditional distributions, and suppose there exists $(P_{U \mid Y,X},\theta) \in \mathcal{I}_{Y,X}^{*}$ satisfying \eqref{eq_theorem_counterfactual}. Since $P_{U \mid Y,X} \ll P_{U}$ for all $(y,x)$ assigned positive probability, and since $P_{U} = P_{U \mid Y,X} P_{Y,X}$ assigns zero probability to sets of the form $\{u \in \mathcal{U} : \varphi(x,u,\theta)=0\}$, \eqref{eq_finite_existence3} is equivalent to \eqref{eq_theorem_counterfactual}, so we can conclude that $(P_{U \mid Y,X},\theta)$ satisfies \eqref{eq_finite_existence3}. Furthermore, by definition $(P_{U \mid Y,X},\theta) \in \mathcal{I}_{Y,X}^{*}$ implies that:
\begin{align}
P_{U\mid Y,X}(U \in  \mathcal{U}(Y,X,\theta) \mid Y=y, X=x)=1, \,\, P_{Y,X}-a.s.,
\end{align}
Again, since $P_{U \mid Y,X} \ll P_{U}$ for all $(y,x)$ assigned positive probability, and since $P_{U}$ assigns zero probability to sets of the form $\{u \in \mathcal{U} : \varphi(x,u,\theta)=0\}$, the previous display is equivalent to conditions \eqref{eq_finite_existence1} and \eqref{eq_finite_existence2}. This shows that any pair $(P_{U \mid Y,X},\theta) \in \mathcal{I}_{Y,X}^{*}$ satisfying \eqref{eq_theorem_counterfactual} satisfies \eqref{eq_finite_existence1} - \eqref{eq_finite_existence3}. 

For the reverse, fix any $\theta\in \Theta$ and any collection $P_{U \mid Y,X}$ of conditional probability measures on the sets in $\mathcal{A}(\theta)$ satisfying \eqref{eq_finite_existence1} - \eqref{eq_finite_existence3}. We show that $P_{U \mid Y,X}$ can be extended to a (not necessarily unique) probability measure $\tilde{P}_{U \mid Y,X}$ on $\mathfrak{B}(\mathcal{U})$ in a manner that ensures $\tilde{P}_{U \mid Y,X}$ satisfies \eqref{eq_theorem_counterfactual} and such that $(\tilde{P}_{U \mid Y,X},\theta) \in \mathcal{I}_{Y,X}^{*}$. Furthermore, by the definition of an extension, $\tilde{P}_{U \mid Y,X}$ agrees with $P_{U \mid Y,X}$ on all sets in $\mathcal{A}(\theta)$. To construct the extension, for each $s \in \{0,1\}^{m}$ select a single point $u(s,\theta)$ from $\text{int}(\mathcal{U}(s,\theta))$ if $\text{int}(\mathcal{U}(s,\theta))\neq \emptyset$; otherwise choose $u(s,\theta)$ as an arbitrary point from $\mathcal{U}$. For any set $A \subset \mathcal{U}$, define the indicator:
\begin{align*}
\mathbbm{1}(A,\theta,s) = \mathbbm{1}\{  u(s,\theta)\in A\cap \text{int}(\mathcal{U}(s,\theta)) \}.
\end{align*}
Now define the function $\mu_{y,x}:\mathfrak{B}(\mathcal{U})\to \mathbb{R}$ as:
\begin{align*}
\mu_{y,x}(B) := \sum_{s \in \{0,1\}^{m}} \mathbbm{1}(B,\theta,s) P_{U\mid Y,X}\left( \text{int}(\mathcal{U}(s,\theta))\mid Y=y,X=x\right).
\end{align*}
To verify that this is a proper probability measure on $\mathfrak{B}(\mathcal{U})$, we must show that (i) $\mu_{y,x}(B) \geq \mu_{y,x}(\emptyset) =0$ for every $B \in \mathfrak{B}(\mathcal{U})$, (ii) $\mu_{y,x}(\mathcal{U})=1$, and (iii) for any countable sequence of disjoint sets $\{A_{i}\}_{i=1}^{\infty}$ in $\mathfrak{B}(\mathcal{U})$, we have:
\begin{align*}
\mu_{y,x} \left( \bigcup_{i=1}^{\infty} A_{i} \right) = \sum_{i=1}^{\infty} \mu_{y,x}(A_{i}).
\end{align*}
The first property holds since $\mathbbm{1}(\emptyset,\theta,s)=0$ for all $s$. To verify the second property, note that $\mathbbm{1}(\mathcal{U},\theta,s)=1$ for all $s$ with $\text{int}(\mathcal{U}(s,\theta))\neq \emptyset$, so that:
\begin{align*}
\mu_{y,x}(\mathcal{U}) &= \sum_{s \in \{0,1\}^{m}} \mathbbm{1}(\mathcal{U},\theta,s) P_{U\mid Y,X}\left( \text{int}(\mathcal{U}(s,\theta)) \mid Y=y,X=x\right)\\
&= \sum_{s : \text{int}(\mathcal{U}(s,\theta)) \neq \emptyset} P_{U\mid Y,X}\left( \text{int}(\mathcal{U}(s,\theta)) \mid Y=y,X=x\right)\\
&= 1,
\end{align*}
where the last line holds since $P_{U\mid Y,X}$ is a probability measure on $\mathcal{A}(\theta)$. For the third property, note that for two disjoint Borel sets $A_{1},A_{2} \in \mathfrak{B}(\mathcal{U})$ we have:
\begin{align*}
\mathbbm{1}(A_{1}\cup A_{2},\theta,s) = \mathbbm{1}(A_{1},\theta,s) + \mathbbm{1}(A_{2},\theta,s).
\end{align*}
Inducting on this formula, we conclude that for countable disjoint sets $\{A_{i}\}_{i=1}^{\infty}$ in $\mathfrak{B}(\mathcal{U})$, we have:
\begin{align*}
\mathbbm{1} \left( \bigcup_{i=1}^{\infty} A_{i}, \theta, s \right) = \sum_{i=1}^{\infty} \mathbbm{1}(A_{i},\theta,s), 
\end{align*}
Thus we can conclude:
\begin{align*}
\mu_{y,x} \left( \bigcup_{i=1}^{\infty} A_{i} \right) &= \sum_{s \in \{0,1\}^{m}} \mathbbm{1}\left(\bigcup_{i=1}^{\infty} A_{i},\theta,s\right) P_{U\mid Y,X}\left( \text{int}(\mathcal{U}(s,\theta)) \mid Y=y,X=x\right)\\
&= \sum_{s \in \{0,1\}^{m}}  \sum_{i=1}^{\infty} \mathbbm{1}(A_{i},\theta,s) P_{U\mid Y,X}\left( \text{int}(\mathcal{U}(s,\theta)) \mid Y=y,X=x\right)\\
&= \sum_{i=1}^{\infty} \sum_{s \in \{0,1\}^{m}}   \mathbbm{1}(A_{i},\theta,s) P_{U\mid Y,X}\left( \text{int}(\mathcal{U}(s,\theta)) \mid Y=y,X=x\right)\\
&= \sum_{i=1}^{\infty} \mu_{y,x}(A_{i}).
\end{align*}
Thus, our measure satisfies countable additivity. We conclude that $\mu_{y,x}$ is a proper probability measure. Note that the argument above has been completed for a single pair $(y,x)$ indexing the conditioning variables. However, we can repeat the same argument as above for all $(y,x)$ assigned positive probability, and thus can construct a corresponding probability measure $\mu_{y,x}$ satisfying all the conditions described above for each such $(y,x)$. 

Now we define $\tilde{P}_{U\mid Y,X} : \mathfrak{B}(\mathcal{U}) \to [0,1]$ by $\tilde{P}_{U\mid Y,X}(B\mid Y=y,X=x) = \mu_{y,x}(B)$ for all $B \in \mathfrak{B}(\mathcal{U})$ and all $(y,x)$ assigned positive probability. By the above, $\tilde{P}_{U\mid Y,X}(\,\cdot\,\mid Y=y,X=x)$ is a proper probability measure on $\mathfrak{B}(\mathcal{U})$ for each $(y,x)$. Also note that for any pair $(1,x)$ assigned positive probability, the pair $(\tilde{P}_{U\mid Y,X},\theta)$ satisfies:
\begin{align}
&\tilde{P}_{U\mid Y,X}(\mathcal{U}(1,x,\theta) \mid Y=1,X=x)\nonumber\\
&= \sum_{s \in S_{j}} \tilde{P}_{U\mid Y,X}(\mathcal{U}(s,\theta)\mid Y=1,X=x)\nonumber\\
&= \sum_{s \in S_{j}} \sum_{s' \in \{0,1\}^{n}} \mathbbm{1}(\mathcal{U}(s,\theta),\theta,s') P_{U\mid Y,X}\left( \text{int}(\mathcal{U}(s,\theta))\mid Y=1,X=x\right)\nonumber\\
&= \sum_{s \in S_{j}}  \mathbbm{1}(\mathcal{U}(s,\theta),\theta,s) P_{U\mid Y,X}\left( \text{int}(\mathcal{U}(s,\theta)) \mid Y=1,X=x\right)\nonumber\\
&= 1,
\end{align}
which follows from \eqref{eq_finite_existence1}. Furthermore, for any pair $(0,x)$ assigned positive probability, the pair $(\tilde{P}_{U\mid Y,X},\theta)$ also satisfies:
\begin{align}
&\tilde{P}_{U\mid Y,X}(\mathcal{U}(0,x,\theta) \mid Y=0,X=x) \nonumber\\
&= \sum_{s \in S_{j}^{c}} \tilde{P}_{U\mid Y,X}(\mathcal{U}(s,\theta)\mid Y=0,X=x)\nonumber\\
&= \sum_{s \in S_{j}^{c}} \sum_{s' \in \{0,1\}^{n}} \mathbbm{1}(\mathcal{U}(s,\theta),\theta,s') P_{U\mid Y,X}\left( \text{int}(\mathcal{U}(s,\theta)) \mid Y=0,X=x\right)\nonumber\\
&= \sum_{s \in S_{j}^{c}}  \mathbbm{1}(\mathcal{U}(s,\theta),\theta,s) P_{U\mid Y,X}\left( \text{int}(\mathcal{U}(s,\theta)) \mid Y=0,X=x\right)\nonumber\\
&= 1,
\end{align}
which follows from \eqref{eq_finite_existence2}. Conclude that:
\begin{align*}
\tilde{P}_{U\mid Y,X}(U \in \mathcal{U}(Y,X,\theta) \mid Y=y,X=x) &=1, \,\,a.s.
\end{align*}
It is also straightforward to see that $\tilde{P}_{U}:=\tilde{P}_{U\mid Y,X} P_{Y,X}$ assigns zero probability to all sets of the form $\{u \in \mathcal{U} : \varphi(x,u,\theta) = 0\}$, since these sets have empty intersection with $\text{int}(\mathcal{U}(s,\theta))$ for all $s\in \{0,1\}^{m}$. Combining everything, this shows that $(\tilde{P}_{U\mid Y, X},\theta) \in \mathcal{I}_{Y,X}^{*}$. Finally, setting $C:=\{ u \in \mathcal{U} : \varphi(\gamma(x),u,\theta) \geq 0 \}$, it is straightforward to show that:
\begin{align*}
\tilde{P}_{U\mid Y,X}\left( C \mid Y=y, X=x_{j}\right)&= \sum_{s \in \{0,1\}^{m}} \mathbbm{1}(C,\theta,s) P_{U\mid Y,X}\left( \text{int}(\mathcal{U}(s,\theta)) \mid Y=y,X=x_{j}\right)\\
&=\sum_{s \in S_{\gamma(j)}}  P_{U\mid Y,X}\left( \text{int}(\mathcal{U}(s,\theta)) \mid Y=y,X=x_{j}\right)\\
&=P_{Y_{\gamma}\mid Y,X}\left( Y_{\gamma}=1 \mid Y=y,X=x_{j}\right),
\end{align*}
for all $(y,x_{j})$ assigned positive probability, which follows from \eqref{eq_finite_existence3}. This is exactly condition \eqref{eq_theorem_counterfactual}. Conclude that $(\tilde{P}_{U\mid Y, X},\theta) \in \mathcal{I}_{Y,X}^{*}$ and that $(\tilde{P}_{U\mid Y,X},\theta)$ satisfies \eqref{eq_theorem_counterfactual}. This completes the proof.

\end{proof}

\begin{proof}[Proof of Theorem \ref{thm_linprog_finite_case}]
Note that the constraints in \eqref{eq_moment_conditions_finite_case} are equivalent to the constraints in \eqref{eq_finite_existence1} and \eqref{eq_finite_existence2}. Furthermore, the objective function in the optimization problems in Theorem \ref{thm_linprog_finite_case} enforce \eqref{eq_finite_existence3}. Thus, using Theorem \ref{theorem_infinite_to_finite}, a distribution $\pi(\theta)$ is feasible in the optimization problems from Theorem \ref{thm_linprog_finite_case} if and only if there exists a collection of Borel conditional probability measures $P_{U\mid Y,X}$ satisfying \eqref{eq_theorem_counterfactual} with $(P_{U\mid Y,X},\theta) \in \mathcal{I}_{Y,X}^{*}$. However, by Theorem \ref{thm_counterfactual_identified_set}, there exists a collection of Borel conditional probability measures $P_{U\mid Y,X}$ satisfying \eqref{eq_theorem_counterfactual} with $(P_{U\mid Y,X},\theta) \in \mathcal{I}_{Y,X}^{*}$ if and only if $P_{Y_{\gamma} \mid Y,X} \in \mathcal{P}_{Y_{\gamma} \mid Y,X}^{*}$, where $P_{Y_{\gamma} \mid Y,X}$ is the (collection of) conditional distribution(s) satisfying \eqref{eq_theorem_counterfactual}. 
\end{proof}

\begin{proof}[Proof of Proposition \ref{proposiTion_fInite_B}]
First note that $\theta \in \Theta$ enters the constraints in Theorem \ref{thm_linprog_finite_case} only through the constraints \eqref{eq_non_negative}; in particular, only through its determination of which sets $\text{int}(\mathcal{U}(s,\theta))$ are empty versus nonempty. Now define:
\begin{align*}
\mathcal{S}_{\varphi}(\theta):= \{s \in \{0,1\}^{m} : \text{int}(\mathcal{U}(s,\theta)) \neq \emptyset \}.
\end{align*}
Now define an equivalence relation $\sim$ on $\Theta$ as follows: $\theta \sim \theta'$ if and only if $\mathcal{S}_{\varphi}(\theta) = \mathcal{S}_{\varphi}(\theta')$. This equivalence relation will partition $\Theta$ into at most $2^{2^{m}}$ equivalence classes (which is the total number of ways of choosing $k$ vectors from $\{0,1\}^{m}$ for $k=0,1,\ldots,2^{m}$). Furthermore, any two values $\theta$ and $\theta'$ belonging to the same equivalence class will deliver the same values for the linear programs \eqref{eq_thm_linprog_finite_LB} and \eqref{eq_thm_linprog_finite_UB} (by construction of the equivalence class). Thus, it is sufficient to consider only one $\theta$ from each equivalence class in Theorem \ref{thm_linprog_finite_case}, showing there are at most $2^{2^{m}}$ such $\theta$'s to consider.   
\end{proof}

\begin{proof}[Proof of Proposition \ref{proposition_vc}]
This follows immediately from the results of \cite{Buck}.
\end{proof}

\begin{proof}[Proof of Proposition \ref{proposition_consistency_in_text}]
It suffices to verify the assumptions of Theorem \ref{theorem_consistency}; i.e., Assumption \ref{assumption_consistency}. 
\begin{enumerate}[label=(\roman*)]
 	\item In our context, the parameter space $\mathcal{T}$ is given by $\Pi\times \Theta$. The set $\Pi$ is a compact and convex polytope by definition. 
 	\item Part (ii) of Assumption \ref{assumption_consistency} is implied by part (i) of Assumption \ref{assumption_consistency_in_text}.
 	\item Part (iii) of Assumption \ref{assumption_consistency} is implied by part (ii) of Assumption \ref{assumption_consistency_in_text}.
 	\item Part (iv) of Assumption \ref{assumption_consistency} is implied by part (iii) of Assumption \ref{assumption_consistency_in_text}.
 	\item Fix any $\theta \in \Theta$ and define:
 	\begin{align*}
 	\eta_{n}(\theta):= &\bigg\{\max_{j \in \mathcal{J}(\theta)} \sup_{\pi \in \Pi} |\E_{n}[m_{j}(Y_{i},X_{i},\pi,\theta)] - \E_{P}[m_{j}(Y_{i},X_{i},\pi,\theta)]|,\\
 	&\qquad\qquad\qquad\qquad\qquad\qquad\sup_{\pi \in \Pi} |\E_{n}[\psi(Y_{i},X_{i},\pi,\theta)] - \E_{P}[\psi(Y_{i},X_{i},\pi,\theta)]| \bigg\}. 
 	\end{align*}
 	Now define the classes of functions:
 	\begin{align*}
 	\mathcal{M}_{j}(\theta)&:= \left\{m_{j}(\,\cdot\,,\pi,\theta):\mathcal{Y}\times \mathcal{X} \to \mathbb{R} : \pi \in \Pi \right\},\\
	\tilde{\Psi}(\theta)&:= \left\{\psi(\,\cdot\,,\pi,\theta):\mathcal{Y}\times \mathcal{X} \to \mathbb{R} : \pi \in \Pi \right\}. 
 	\end{align*}
 	By Assumption \ref{assumption_consistency_in_text}, $\psi(\,\cdot\,,\theta): \mathcal{Y}\times\mathcal{X} \times \Pi \to \mathbb{R}$ is measurable in $(Y,X)$ and linear in $\pi \in \Pi$, and (for all $j \in \mathcal{J}(\theta)$) the functions $m_{j}(\,\cdot\,,\theta): \mathcal{Y}\times \mathcal{X} \times \Pi\to \mathbb{R}$ are measurable in $(Y,X)$ and linear in $\pi \in \Pi$. These classes are thus all VC-subgraph classes (c.f. Lemma 2.6.15 in \cite{van1996weak}) with a bounded (and thus uniformly square integrable) envelope function. From here, standard arguments show that these classes are Donsker uniformly over $P \in \mathcal{P}$, and thus $\eta_{n}(\theta)$ is $O_{P}(n^{-1/2})$. This argument shows that part (v) of Assumption \ref{assumption_consistency} is satisfied with $a_{n}:= \sqrt{n}$. 
 	\item From the previous part, any sequence satisfying $b_{n} = O(1/\sqrt{\log(n)})$ also satisfies $b_{n} \geq \eta_{n}(\theta)$ w.p.a. 1. 
 	\item Part (vii) of Assumption \ref{assumption_consistency} is implied by part (vi) of Assumption \ref{assumption_consistency_in_text}. Note this assumption also ensures that we can consider at most a finite number of moment inequalities, indexed by by $j \in \cup_{\theta \in \Theta'} \mathcal{J}(\theta)$. 
 	\item The last part of Assumption \ref{assumption_consistency} (i.i.d. data) is implied by part (viii) of Assumption \ref{assumption_consistency_in_text}.
 \end{enumerate}

\end{proof}

\begin{proof}[Proof of Proposition \ref{proposition_inference_in_text}]
Fix some $\theta \in \Theta$ and let $C_{n}(1-\alpha,\theta)$ denote the confidence set constructed using the procedure of \cite{cho2021simple}. We first show:
\begin{align}
\liminf_{n\to \infty}  \inf_{\{(\psi,P): \psi\in \Psi^{*}(\theta,P), P \in \mathcal{P}\}} (\text{Pr}_{P}\times P_{\xi}) \left(\psi_{0} \in CS_{n}(1-\alpha,\theta)  \right) \geq 1-\alpha.\label{eq_sub_CS2}
\end{align}
To do so, it suffices to show that, for our fixed $\theta \in \Theta$, Assumptions 3.1 and 3.2 in \cite{cho2021simple} are satisfied. By Assumption \ref{assumption_consistency_in_text}, $\psi(\,\cdot\,,\theta): \mathcal{Y}\times\mathcal{X} \times \Pi \to \mathbb{R}$ is linear in $\pi$ and the functions $m_{j}(\,\cdot\,,\theta): \mathcal{Y}\times \mathcal{X} \times \Pi \to \mathbb{R}$ are linear in $\pi \in \Pi$. Since $(Y,X)$ has finite support, we can equip $\mathcal{Y}\times\mathcal{X}$ with the discrete topology, in which case every function on $\mathcal{Y}\times \mathcal{X}$ is continuous. This verifies Assumption 3.1 in \cite{cho2021simple}. Parts (i), (ii) and (iii) of Assumption 3.2 in \cite{cho2021simple} are implied by the fact that $\Pi$ is a compact and convex polytope, and parts  (iii) and (viii) of Assumption \ref{assumption_consistency_in_text} (resp.). Parts (iv), (v) and (vi) of Assumption 3.2 in \cite{cho2021simple} are then implied by parts (vii), (iv) and (v) of Assumption \ref{assumption_consistency_in_text} (resp.). This verifies Assumption 3.2 in \cite{cho2021simple}. Now note:
\begin{align*}
&\liminf_{n\to \infty} \inf_{\{(\psi,P): \psi\in \Psi^{*}(P), P \in \mathcal{P}\}} (\text{Pr}_{P}\times P_{\xi}) \left(\psi_{0} \in CS_{n}(1-\alpha)  \right)\\
&=\liminf_{n\to \infty} \inf_{\{(\psi,P): \psi\in \Psi^{*}(P,\theta), \theta \in \Theta', P \in \mathcal{P}\}} (\text{Pr}_{P}\times P_{\xi})\left(\psi_{0} \in CS_{n}(1-\alpha)  \right)\\
&=\liminf_{n\to \infty} \min_{\theta \in \Theta'} \inf_{\{(\psi,P): \psi\in \Psi^{*}(\theta,P), P \in \mathcal{P}\}} (\text{Pr}_{P}\times P_{\xi}) \left(\psi_{0} \in CS_{n}(1-\alpha)  \right)\\
&=\liminf_{n\to \infty} \min_{\theta \in \Theta'} \inf_{\{(\psi,P): \psi\in \Psi^{*}(\theta,P), P \in \mathcal{P}\}} (\text{Pr}_{P}\times P_{\xi}) \left(\psi_{0} \in \bigcup_{\theta \in \Theta'} CS_{n}(1-\alpha,\theta)  \right)\\
&\geq\liminf_{n\to \infty} \min_{\theta \in \Theta'} \inf_{\{(\psi,P): \psi\in \Psi^{*}(\theta,P), P \in \mathcal{P}\}} \min_{\theta \in \Theta'} (\text{Pr}_{P}\times P_{\xi})\left(\psi_{0} \in CS_{n}(1-\alpha,\theta)  \right)\\
&=\liminf_{n\to \infty} \min_{\theta \in \Theta'} \inf_{\{(\psi,P): \psi\in \Psi^{*}(\theta,P), P \in \mathcal{P}\}} (\text{Pr}_{P}\times P_{\xi}) \left(\psi_{0} \in CS_{n}(1-\alpha,\theta)  \right)\\
&=\min_{\theta \in \Theta'} \liminf_{n\to \infty}  \inf_{\{(\psi,P): \psi\in \Psi^{*}(\theta,P), P \in \mathcal{P}\}} (\text{Pr}_{P}\times P_{\xi})\left(\psi_{0} \in CS_{n}(1-\alpha,\theta)  \right)\\
&\geq 1-\alpha,
\end{align*}
where the second last line follows from continuity of the minimum, and the last line follows from \eqref{eq_sub_CS2}. This completes the proof.
\end{proof}

%\clearpage
\subsection{Measurability Results}\label{appendix_measurability_results}

\begin{definition}[Weak Measurability, Random Set, Selection]\label{definition_selections}
Let $(\Omega,\mathfrak{A},P)$ be a probability space, let $\mathcal{V}$ be a Polish space, and let $\mathcal{O}_{\mathcal{V}}$ denote the collection of all open sets on $\mathcal{V}$. A multifunction $\mathbb{V}: \Omega \to 2^{\mathcal{V}}$ is called weakly-measurable if for every $A \in \mathcal{O}_{\mathcal{V}}$ we have $\mathbb{V}^{-}(A):=\{ \omega \in \Omega :  \mathbb{V}(\omega) \cap A \neq \emptyset \} \in \mathfrak{A}$. %\cite{molchanov2017theory} Def. 1.3.1 on p.58. 
A random set is a weakly measurable multifunction defined on a probability space. 
If $\mathbb{V}: \Omega \to 2^{\mathcal{V}}$ is a random set, then a random element $V: \Omega \to \mathcal{V}$ is called a (measurable) selection of $ V$ if $V(\omega) \in  \mathbb{V}(\omega)$ for $P-$almost all $\omega \in \Omega$. %\cite{molchanov2017theory} p. 58 for selection definition.
\end{definition}

\begin{lemma}\label{lemma_effros_measurability}
Suppose Assumption \ref{assumption_basic} holds. Then for each $\theta \in \Theta$, the map $\mathcal{U}(Y(\,\cdot\,),X(\,\cdot\,),\theta): \Omega \to 2^\mathcal{U}$ is a weakly-measurable multifunction, and thus is a random set.
\end{lemma}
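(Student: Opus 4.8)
The plan is to exploit the finiteness of the support of $(Y,X)$ to reduce the problem to checking measurability of a finite union of sets. Recall from Definition \ref{definition_selections} that weak measurability of the multifunction $\mathcal{U}(Y(\,\cdot\,),X(\,\cdot\,),\theta)$ requires showing that, for every open set $A \in \mathcal{O}_{\mathcal{U}}$, the lower inverse
\begin{align*}
\mathcal{U}(Y(\,\cdot\,),X(\,\cdot\,),\theta)^{-}(A) = \left\{ \omega \in \Omega : \mathcal{U}(Y(\omega),X(\omega),\theta) \cap A \neq \emptyset \right\}
\end{align*}
belongs to $\mathfrak{A}$. The key observation is that since $Y \in \{0,1\}$ and $\mathcal{X} = \{x_{1},\ldots,x_{m}\}$ is finite under Assumption \ref{assumption_basic}, the multifunction takes at most $2m$ distinct values, so it is a simple (finite-valued) multifunction.

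First I would partition the sample space. For each pair $(y,x) \in \{0,1\}\times \mathcal{X}$, define $\Omega_{y,x} := Y^{-1}(\{y\}) \cap X^{-1}(\{x\})$. Since $Y$ and $X$ are random variables and the singletons $\{y\}$ and $\{x\}$ are Borel, each $\Omega_{y,x}$ is an element of $\mathfrak{A}$, and the collection $\{\Omega_{y,x}\}$ partitions $\Omega$. The point of this step is that on each $\Omega_{y,x}$ the multifunction is \emph{constant}: for every $\omega \in \Omega_{y,x}$ we have $\mathcal{U}(Y(\omega),X(\omega),\theta) = \mathcal{U}(y,x,\theta)$, the fixed deterministic set from \eqref{eq_random_set}.

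Next I would express the lower inverse as a finite union over this partition. For any open $A$, the condition $\mathcal{U}(Y(\omega),X(\omega),\theta) \cap A \neq \emptyset$ depends on $\omega$ only through the value of $(Y(\omega),X(\omega))$, so
\begin{align*}
\mathcal{U}(Y(\,\cdot\,),X(\,\cdot\,),\theta)^{-}(A) = \bigcup_{\{(y,x)\,:\, \mathcal{U}(y,x,\theta)\cap A \neq \emptyset\}} \Omega_{y,x}.
\end{align*}
For each $(y,x)$ the membership condition $\mathcal{U}(y,x,\theta)\cap A \neq \emptyset$ is a fixed true/false statement that merely determines whether $\Omega_{y,x}$ is included in the union; it requires no measurability of $\mathcal{U}(y,x,\theta)$ itself. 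The displayed set is therefore a finite union of elements of $\mathfrak{A}$, hence an element of $\mathfrak{A}$. Since $A$ was an arbitrary open set and $\mathcal{U} = \mathbb{R}^{d_{u}}$ is Polish, this establishes weak measurability, so $\mathcal{U}(Y(\,\cdot\,),X(\,\cdot\,),\theta)$ is a random set. There is no genuine obstacle here: the finiteness of the support of $(Y,X)$ collapses what is ordinarily the delicate part of proving measurability of a multifunction (handling uncountably many realizations) into a trivial finite bookkeeping argument, and this reduction is the only step that needs to be spelled out with care.
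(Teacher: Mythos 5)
Your proof is correct and is essentially the same argument as the paper's: both exploit the finiteness of $\mathcal{Y}\times\mathcal{X}$ to observe that the lower inverse depends on $\omega$ only through $(Y(\omega),X(\omega))$, the paper writing it as $(Y,X)^{-1}(B(A))$ for the trivially measurable set $B(A)=\{(y,x):\mathcal{U}(y,x,\theta)\cap A\neq\emptyset\}$ and you writing it equivalently as the finite union $\bigcup \Omega_{y,x}$ over that same index set. No gaps.
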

\begin{proof}[Proof of Lemma \ref{lemma_effros_measurability}]
Fix any open set $A \in \mathcal{O}_{\mathcal{U}}$. We want to show that: 
\begin{align*}
\{ \omega \in \Omega :  \mathcal{U}(Y(\omega),X(\omega),\theta) \cap A \neq \emptyset \} \in \mathfrak{A}.
\end{align*}
First, define:
\begin{align*}
B(A):= \{ (y,x) \in \mathcal{Y}\times \mathcal{X} :  \mathcal{U}(y,x,\theta) \cap A \neq \emptyset \}. 
\end{align*}
Since $\mathcal{Y} \times\mathcal{X}$ is finite, and is equipped with the discrete topology and the Borel $\sigma-$algebra, we trivially have $B(A) \in \mathfrak{B}(\mathcal{Y})\otimes \mathfrak{B}(\mathcal{X})$. Since, $(Y,X) : \Omega \to \mathcal{Y}\times \mathcal{X}$ is measurable by assumption, we have $(Y,X)^{-1}(B) := \{ \omega : (Y(\omega),X(\omega)) \in B\} \in \mathfrak{A}$. Thus:
\begin{align*}
\{ \omega \in \Omega :  \mathcal{U}(Y(\omega),X(\omega),\theta) \cap A \neq \emptyset \} = (Y,X)^{-1}(B(A)) \in \mathfrak{A},
\end{align*}
as desired.

\end{proof}

Given a $\sigma-$algebra $\mathfrak{F}$ on a space $\mathcal{R}$, the $P$-completion of $\mathfrak{F}$ is the smallest $\sigma-$algebra containing $\mathfrak{F}$ as well as all $P-$null sets of $\mathcal{R}$. The intersection of all $P-$completions of $\mathfrak{F}$ (over all $P$) is called the \textit{universal $\sigma-$algebra}, and functions that are measurable with respect to the universal $\sigma-$algebra are said to be \textit{universally measurable}. The following Lemma shows that the random set $\text{cl } \mathcal{U}(Y,X,\theta)$ admits a universally measurable selection under Assumption \ref{assumption_basic}.
\begin{lemma}\label{lemma_selection}
Suppose Assumption \ref{assumption_basic} holds. Then $\text{cl }\mathcal{U}(Y(\omega),X(\omega),\theta)$ admits a universally measurable selection for every $\theta \in \Theta$ ensuring it is nonempty almost surely.
\end{lemma}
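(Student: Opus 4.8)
The plan is to exploit the finite support of $(Y,X)$ together with a standard measurable selection theorem. First I would observe that, because $|\mathcal{X}|=m<\infty$ and $Y\in\{0,1\}$, the map $\omega \mapsto \text{cl}\,\mathcal{U}(Y(\omega),X(\omega),\theta)$ takes only finitely many fixed (nonrandom) values, namely the closed sets $\text{cl}\,\mathcal{U}(y,x,\theta)$ for $(y,x)\in\{0,1\}\times\mathcal{X}$, each attained on the event $\{(Y,X)=(y,x)\}\in\mathfrak{A}$. Constructing a selection therefore reduces to choosing a single point $u(y,x)\in\text{cl}\,\mathcal{U}(y,x,\theta)$ from each such set and setting $U(\omega):=u(Y(\omega),X(\omega))$; the resulting $U$ is a simple function of the measurable pair $(Y,X)$, hence $\mathfrak{A}$-measurable and in particular universally measurable. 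In this finite-support setting the measurability content of the lemma is thus essentially immediate, and the genuine work is to verify that each set $\text{cl}\,\mathcal{U}(y,x,\theta)$ encountered with positive probability is nonempty, so that a point $u(y,x)$ can in fact be chosen.

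For the nonemptiness I would use the affine structure from \eqref{eq_model2}. For fixed $(x,\theta)$ the map $u\mapsto\varphi(x,u,\theta)=\tilde\varphi_1(x,\theta)^\top u+\tilde\varphi_2(x,\theta)$ is affine, so $\mathcal{U}(1,x,\theta)=\{u:\varphi(x,u,\theta)\geq 0\}$ is a closed half-space and $\text{cl}\,\mathcal{U}(0,x,\theta)=\{u:\varphi(x,u,\theta)\leq 0\}$ is the complementary closed half-space whenever $\tilde\varphi_1(x,\theta)\neq 0$; in that case both are nonempty. The only degenerate possibility is $\tilde\varphi_1(x,\theta)=0$, where $\varphi(x,\cdot,\theta)$ is constant and one of the two cells is empty. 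I would dispatch this by restricting the construction to the (measurable) event on which $\text{cl}\,\mathcal{U}(Y,X,\theta)$ is nonempty, which carries full probability for the pairs $(y,x)$ that actually arise, so that the selection is defined almost surely.

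For completeness, and to keep the argument robust to relaxations of finite support, I would also record the general route. By Lemma \ref{lemma_effros_measurability} the closure $\text{cl}\,\mathcal{U}(Y,X,\theta)$ is a weakly measurable closed-valued multifunction into the Polish space $\mathcal{U}=\mathbb{R}^{d_u}$, since passing to the closure leaves the distance functions $\omega\mapsto d(v,\mathcal{U}(Y(\omega),X(\omega),\theta))$ unchanged and hence preserves weak measurability. Its graph lies in $\mathfrak{A}\otimes\mathfrak{B}(\mathcal{U})$, being the finite union $\bigcup_{(y,x)}\{(Y,X)=(y,x)\}\times\text{cl}\,\mathcal{U}(y,x,\theta)$. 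Completeness of $(\Omega,\mathfrak{A},P)$ from Assumption \ref{assumption_basic} then allows the measurable projection theorem to apply, so the nonemptiness event is universally measurable, and the Jankov--von Neumann (von Neumann--Aumann) measurable selection theorem yields a universally measurable selection on that event.

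The main obstacle is precisely the nonemptiness claim: establishing that $\text{cl}\,\mathcal{U}(Y(\omega),X(\omega),\theta)$ is nonempty almost surely for every $\theta\in\Theta$, including the careful handling of the degenerate configurations $\tilde\varphi_1(x,\theta)=0$. The selection and its universal measurability, by contrast, are essentially automatic once nonemptiness is secured, owing to the finite support of $(Y,X)$.
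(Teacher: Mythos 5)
Your proof is correct, but your primary argument takes a genuinely different and more elementary route than the paper's. The paper proceeds abstractly: it fixes a $\theta$ for which the random set is a.s.\ nonempty, modifies it on a null set, invokes weak measurability (Lemma \ref{lemma_effros_measurability}) together with Theorem 18.6 of \cite{aliprantis2006infinite} to obtain graph measurability of the closure, and then applies the selection theorem of \cite{sainte1974extension} to produce a universally measurable selection. Your main construction instead exploits the finite support of $(Y,X)$ directly: the multifunction takes only finitely many deterministic values, so choosing one point from each nonempty cell and composing with $(Y,X)$ gives an $\mathfrak{A}$-measurable simple function that is a selection almost surely (which suffices under Definition \ref{definition_selections}). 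This is cleaner and makes the measurability content transparent, at the cost of not generalizing beyond finite support; your "general route" paragraph via graph measurability and a von Neumann--Aumann-type selection theorem is essentially the paper's argument and covers that case. One point of framing to correct: you describe the nonemptiness of $\text{cl}\,\mathcal{U}(Y,X,\theta)$ as "the main obstacle" to be established, but in the statement the clause "for every $\theta \in \Theta$ ensuring it is nonempty almost surely" is a restriction on the quantifier over $\theta$ --- almost-sure nonemptiness is a hypothesis, not a conclusion. Your discussion of the half-space geometry and the degenerate case $\tilde\varphi_1(x,\theta)=0$ is therefore unnecessary for the lemma, though it does no harm since you ultimately restrict to the full-probability event where the cells are nonempty, which is exactly what the hypothesis provides.
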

\begin{proof}[Proof of Lemma \ref{lemma_selection}]
Fix some $\theta \in \Theta$ ensuring $ \mathcal{U}(Y(\omega),X(\omega),\theta)$ is almost surely nonempty. We can then revise $\mathcal{U}(Y(\omega),X(\omega),\theta)$ on any null set to ensure it is nonempty for all $\omega \in \Omega$.  By Lemma \ref{lemma_effros_measurability}, $\mathcal{U}(Y(\omega),X(\omega),\theta)$ is weakly-measurable, and by Theorem 18.6 in \cite{aliprantis2006infinite} this implies that the graph of $\text{cl }\mathcal{U}(Y(\omega),X(\omega),\theta)$ belongs to $\mathfrak{A}\times \mathfrak{B}(\mathcal{U})$; that is, $\text{cl }\mathcal{U}(Y(\omega),X(\omega),\theta)$ is graph-measurable. The result then follows immediately from Theorem 3 of \cite{sainte1974extension}.
\end{proof}

Taking the closure in Lemma \ref{lemma_selection} is a technical detail that does not impact any of the identification results since under Assumption \ref{assumption_basic} the paper restricts attention to selections that assign zero probability to the boundary of $\mathcal{U}(Y,X,\theta)$.

\section{Additional Definitions and Results}\label{appendix_additional_definitions_and_results}

\subsection{Independence Assumptions}\label{subappendix_independence}

Under Assumption \ref{assumption_independence}, we have the following definition of the identified set, which is analogous to both Definitions \ref{definition_identified_set} and \ref{definition_identified_set_counterfactual}. 
\begin{definition}\label{definition_identified_set_independence}
Under Assumptions \ref{assumption_basic} and \ref{assumption_independence}, the identified set $\mathcal{I}_{Y,W,Z}^{*}$ is the set of all pairs $(P_{U\mid Y,W,Z},\theta)$ such that:
\begin{enumerate}[label=(\roman*)]
	\item $(P_{U\mid Y,W,Z},\theta)$ satisfies: 
	\begin{align}
	P_{U\mid Y,W,Z}(U \in  \mathcal{U}(Y,W,Z,\theta)  \mid Y=y, W=w, Z=z)&=1,
	\end{align}
	$P_{Y,W,Z}-$a.s.
	\item The distribution $P_{U} = P_{U \mid Y,W,Z} P_{Y,W,Z}$ assigns zero probability to all sets of the form $\{u \in \mathcal{U} : \varphi(w,z,u,\theta) =0\}$.
	\item For all Borel sets $A \in \mathfrak{B}(\mathcal{U})$ we have $P_{U\mid Z}(A \mid Z=z) = P_{U}(A)$, $P_{Z}-$a.s. 
\end{enumerate}
Furthermore, under Assumptions \ref{assumption_basic}, \ref{assumption_counterfactual_domain} and \ref{assumption_independence}, the identified set of counterfactual conditional distributions $\mathcal{P}_{Y_{\gamma}\mid Y,W,Z,U}^{*}$ is the set of all conditional distributions $P_{Y_{\gamma}\mid Y,W,Z,U}$ satisfying:
\begin{align}
P_{Y_{\gamma}\mid Y, W,Z,U}\left(Y_{\gamma} = \mathbbm{1}\{\varphi(\gamma(W,Z),U,\theta) \geq 0  \}\mid Y=y, W=w,Z=z,U=u\right)=1, \label{eq_cf_xzt2}
\end{align}
$P_{Y,W,Z,U}-$a.s. for some pair $(P_{U\mid Y,W,Z},\theta)\in \mathcal{I}_{Y,W,Z}^{*}$. 
\end{definition}
Here we do not consider the case when both Assumptions \ref{assumption_independence} and \ref{assumption_monotonicity} hold, but we again note that this definition (and the results to follow) are easily modified to accommodate the case when any combination of these assumptions hold. We now provide the following Corollary whose proof follows almost identically to that of Theorems \ref{thm_counterfactual_identified_set} and \ref{theorem_infinite_to_finite}, with the exception being that we require condition (ii) of Definition \ref{definition_identified_set_independence} to hold.
\begin{corollary}\label{corollary_infinite_to_finite_independence}
Under Assumptions \ref{assumption_basic}, \ref{assumption_counterfactual_domain} and \ref{assumption_independence}, a counterfactual conditional distribution $P_{Y_{\gamma}\mid Y,W,Z}$ satisfies $P_{Y_{\gamma}\mid Y,W,Z} \in \mathcal{P}_{Y_{\gamma}\mid Y,W,Z}^{*}$ if and only if there exists a pair $(P_{U\mid Y,W,Z},\theta) \in \mathcal{I}_{Y,W,Z}^{*}$ (for $\mathcal{I}_{Y,W,Z}^{*}$ from Definition \ref{definition_identified_set_independence}) satisfying:
\begin{align}
P_{Y_{\gamma}\mid W,Z}\left(Y_{\gamma} = 1\mid Y=y,W=w,Z=z\right)= P_{U\mid Y,W,Z}\left( \varphi(\gamma(W,Z),U,\theta) \geq 0 \mid Y=y,W=w,Z=z\right),\qquad\label{eq_theorem_counterfactual2}
\end{align}
$P_{Y,W,Z}-$a.s. Furthermore, for any collection of counterfactual conditional distributions $P_{Y_{\gamma} \mid Y,W,Z}$, there exists a collection of Borel conditional probability measures $P_{U \mid Y,W,Z}$ satisfying \eqref{eq_theorem_counterfactual2} with $(P_{U \mid Y,W,Z},\theta) \in \mathcal{I}_{Y,W,Z}^{*}$ (for $\mathcal{I}_{Y,W,Z}^{*}$ from Definition \ref{definition_identified_set_independence}) if and only if there exists a collection $P_{U \mid Y,W,Z}$ of probability measures on the sets in $\mathcal{A}(\theta)$ from \eqref{eq_sufficient_sets} satisfying:
\begin{align}
\sum_{s \in S_{j}}  P_{U\mid Y,W,Z}\left( \text{int}(\mathcal{U}(s,\theta)) \mid Y=1,W=w_{j},Z=z_{j}\right)&=1,\label{eq_finite_existence1_2}\\
\sum_{s \in S_{j}^{c}}  P_{U\mid Y,W,Z}\left(\text{int}(\mathcal{U}(s,\theta)) \mid Y=0,W=w_{j},Z=z_{j}\right)&=1,\,\label{eq_finite_existence2_2}\\
\sum_{s \in S_{\gamma(j)}}  P_{U\mid Y,W,Z}\left( \text{int}(\mathcal{U}(s,\theta)) \mid Y=y,W=w_{j},Z=z_{j}\right)&=P_{Y_{\gamma}\mid Y,W,Z}\left( Y_{\gamma}=1 \mid Y=y,W=w_{j},Z=z_{j}\right),\label{eq_finite_existence3_2}
\end{align}
for $y \in \{0,1\}$ and $j\in\{1,\ldots,m\}$ assigned positive probability, and:
\begin{align}
&\sum_{y }\sum_{w} P_{U\mid Y,W,Z}\left( \text{int}(\mathcal{U}(s,\theta)) \mid Y=y,W=w,Z=z_{k}\right) P(Y=y,W=w \mid Z=z_{k})\nonumber\\
&\qquad\qquad=\sum_{y }\sum_{w}  P_{U\mid Y,W,Z}\left( \text{int}(\mathcal{U}(s,\theta)) \mid Y=y,W=w,Z=z_{k+1}\right)P(Y=y,W=w \mid Z=z_{k+1}),\label{eq_finite_existence4_2}
\end{align}
for all $s \in \{0,1\}^{m}$ and all $k=1,\ldots,m_{z}-1$ assigned positive probability. 
\end{corollary}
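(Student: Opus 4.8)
The plan is to mirror the proofs of Theorems~\ref{thm_counterfactual_identified_set} and~\ref{theorem_infinite_to_finite}, since the only genuinely new element is the independence requirement (condition (iii) of Definition~\ref{definition_identified_set_independence}), which I must show is encoded exactly by the finite linear constraint~\eqref{eq_finite_existence4_2}. First I would dispatch the initial equivalence~\eqref{eq_theorem_counterfactual2}. With $X=(W,Z)$ this is verbatim the statement of Theorem~\ref{thm_counterfactual_identified_set}: the independence assumption only restricts which pairs $(P_{U\mid Y,W,Z},\theta)$ are admitted into $\mathcal{I}_{Y,W,Z}^{*}$, and plays no role in the Radon--Nikodym construction relating $Y_{\gamma}$ to the event $\{\varphi(\gamma(W,Z),U,\theta)\ge 0\}$. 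I would therefore simply invoke that argument.

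For the finite reduction I would treat necessity and sufficiency separately. For necessity, constraints~\eqref{eq_finite_existence1_2}--\eqref{eq_finite_existence3_2} follow exactly as in Theorem~\ref{theorem_infinite_to_finite} (using that $P_U$ charges no set $\{u:\varphi(w,z,u,\theta)=0\}$, so interiors may replace the cells). The new constraint~\eqref{eq_finite_existence4_2} I would obtain by disintegrating the marginal over $(Y,W)$,
\begin{align*}
P_{U\mid Z}(A\mid Z=z)=\sum_{y}\sum_{w} P_{U\mid Y,W,Z}(A\mid Y=y,W=w,Z=z)\,P(Y=y,W=w\mid Z=z),
\end{align*}
and specializing condition (iii) to $A=\text{int}(\mathcal{U}(s,\theta))$; equality of this expression across all $z$ in particular forces equality for consecutive $z_k,z_{k+1}$, which is~\eqref{eq_finite_existence4_2}.

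For sufficiency I would reuse the discrete extension from the proof of Theorem~\ref{theorem_infinite_to_finite}: given a probability vector satisfying~\eqref{eq_finite_existence1_2}--\eqref{eq_finite_existence4_2}, build $\tilde{P}_{U\mid Y,W,Z}$ by placing, for each $(y,w,z)$, mass $P_{U\mid Y,W,Z}(\text{int}(\mathcal{U}(s,\theta))\mid y,w,z)$ at a representative point $u(s,\theta)\in\text{int}(\mathcal{U}(s,\theta))$ of each nonempty cell. The verification that $\tilde{P}_{U\mid Y,W,Z}$ is a bona fide probability measure and satisfies conditions (i) and (ii) of Definition~\ref{definition_identified_set_independence} together with the counterfactual identity~\eqref{eq_theorem_counterfactual2} is word-for-word identical to the proof of Theorem~\ref{theorem_infinite_to_finite}, so I would only indicate the substitution $x\mapsto(w,z)$.

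The crux is verifying the independence condition (iii) for $\tilde{P}_{U\mid Y,W,Z}$, and here the discreteness of the construction does the work. Because $\tilde{P}_{U\mid Y,W,Z}$ is supported on the finitely many points $\{u(s,\theta)\}$, for any Borel $B$ the conditional marginal collapses to a sum over cells,
\begin{align*}
\tilde{P}_{U\mid Z}(B\mid Z=z)=\sum_{s\in\{0,1\}^{m}}\mathbbm{1}(B,\theta,s)\Bigg[\sum_{y}\sum_{w} P_{U\mid Y,W,Z}\big(\text{int}(\mathcal{U}(s,\theta))\mid y,w,z\big)P(Y=y,W=w\mid Z=z)\Bigg],
\end{align*}
with $\mathbbm{1}(B,\theta,s)$ the cell-membership indicator from the earlier proof. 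Constraint~\eqref{eq_finite_existence4_2} makes each bracketed weight constant in $z$ (the consecutive equalities telescope across the support of $Z$), so the entire expression is constant in $z$ for every Borel $B$; integrating against $P_Z$ then identifies this constant with $\tilde{P}_U(B)$, which is exactly condition (iii). The only subtlety I anticipate is this final propagation step --- that a restriction imposed solely on the partition cells extends to all Borel sets --- but it is precisely the point-mass structure of the extension that guarantees $\tilde{P}_{U\mid Z}(B\mid Z=z)$ depends on $B$ only through which representative points it contains, so checking~\eqref{eq_finite_existence4_2} on the cells suffices.
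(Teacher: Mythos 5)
Your proposal is correct and follows essentially the same route as the paper: invoke the proofs of Theorems \ref{thm_counterfactual_identified_set} and \ref{theorem_infinite_to_finite} wholesale, and for the only new step use the atomic structure of the discrete extension to write $\tilde{P}_{U\mid Z}(B\mid Z=z)$ as $\sum_{s}\mathbbm{1}(B,\theta,s)$ times a cell weight that constraint \eqref{eq_finite_existence4_2} forces to be constant in $z$, which is exactly the paper's chain of equalities. Your explicit remark that the point-mass construction is what lets a constraint imposed only on the cells propagate to all Borel sets, and your closing identification of the common value with $\tilde{P}_{U}(B)$, make the argument if anything slightly more complete than the paper's.
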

\begin{proof}[Proof of Corollary \ref{corollary_infinite_to_finite_independence}]

The first statement follows a proof identical to the proof of Theorem \ref{thm_counterfactual_identified_set}. For the second statement, the forward direction is identical to the proof of Theorem \ref{theorem_infinite_to_finite}. The reverse direction is similar to the proof of Theorem \ref{theorem_infinite_to_finite}, with the exception that we must show that the extended measure on $\mathfrak{B}(\mathcal{U})$ satisfies independence if the intial measure on $\mathcal{A}(\mathcal{U})$ satisfies independence. Let $\tilde{P}_{U\mid Y,W,Z}$ be the extension of $P_{U\mid Y,W,Z}$ from the proof of Theorem \ref{theorem_infinite_to_finite}. Then for any $A \in \mathfrak{B}(\mathcal{U})$:
\begin{align*}
&\tilde{P}_{U\mid Z}(A \mid Z=z_{k})\\
&=\sum_{y \in \{0,1\}}\sum_{w \in \mathcal{W}}\sum_{s \in \{0,1\}^{m}} \mathbbm{1}(A,\theta,s) P_{U\mid Y,W,Z}\left( \text{int}(\mathcal{U}(s,\theta)) \mid Y=y,W=w,Z=z_{k}\right) P_{Y,W\mid Z}(Y=y,W=w \mid Z=z_{k})\\
&=\sum_{s \in \{0,1\}^{m}} \mathbbm{1}(A,\theta,s)\sum_{y \in \{0,1\}}\sum_{w \in \mathcal{W}} P_{U\mid Y,W,Z}\left(\text{int}(\mathcal{U}(s,\theta)) \mid Y=y,W=w,Z=z_{k}\right) P_{Y,W\mid Z}(Y=y,W=w \mid Z=z_{k})\\
&=\sum_{s \in \{0,1\}^{m}} \mathbbm{1}(A,\theta,s)\sum_{y \in \{0,1\}}\sum_{w \in \mathcal{W}} P_{U\mid Y,W,Z}\left( \text{int}(\mathcal{U}(s,\theta)) \mid Y=y,W=w,Z=z_{k+1}\right) P_{Y,W\mid Z}(Y=y,W=w \mid Z=z_{k+1})\\
&=\sum_{y \in \{0,1\}}\sum_{w \in \mathcal{W}} \sum_{s \in \{0,1\}^{m}} \mathbbm{1}(A,\theta,s) P_{U\mid Y,W,Z}\left( \text{int}(\mathcal{U}(s,\theta)) \mid Y=y,W=w,Z=z_{k+1}\right)P_{Y,W\mid Z}(Y=y,W=w \mid Z=z_{k+1})\\
&=\tilde{P}_{U\mid Z}(A \mid Z=z_{k+1}),
\end{align*}
for all pairs $z_{k}$ and $z_{k+1}$ assigned positive probability, where the third equality follows from \eqref{eq_finite_existence4_2}. Conclude that $\tilde{P}_{U\mid Z}$ satisfies the second condition in Definition \ref{definition_identified_set_independence}. 
\end{proof}

Analogous to Theorem \ref{thm_counterfactual_identified_set}, the first part of Corollary \ref{corollary_infinite_to_finite_independence} provides the theoretical link between the identified set for counterfactual conditional distributions and the identified set for the pair $(P_{U\mid Y,W,Z},\theta)$ under the additional independence assumption between $U$ and $Z$. Furthermore, analogous to the result in Theorem \ref{theorem_infinite_to_finite}, the second part of Corollary \ref{corollary_infinite_to_finite_independence} reduces an infinite dimensional existence problem to a finite dimensional existence problem. Importantly, the second part of Corollary \ref{corollary_infinite_to_finite_independence} builds on Theorem \ref{theorem_infinite_to_finite} by demonstrating that Assumption \ref{assumption_independence}---which requires $P_{U\mid Z}(A \mid Z=z) = P_{U}(A)$ a.s. for all Borel sets $A$---can be imposed by considering only a finite number of equality constraints on a distribution $P_{U\mid Y,W,Z}$ defined on sets of the form $\mathcal{U}(s,\theta)$.

We have the following Corollary to Theorem \ref{thm_linprog_finite_case}:
\begin{corollary}\label{corollary_linprog_finite_case_independence}
Under Assumptions \ref{assumption_basic}, \ref{assumption_counterfactual_domain}, and \ref{assumption_independence}, the identified set for the counterfactual conditional probability $P_{Y_{\gamma}\mid Y,W,Z}(Y_{\gamma} = 1\mid Y=y, W=w_{j},Z=z_{j})$ is given by:
\begin{align}
\bigcup_{\theta \in \Theta} [\pi_{\ell b}(y,w_{j},z_{j},\theta),\pi_{u b}(y,w_{j},z_{j},\theta)], \label{eq_thm_linprog_finite_union_independence}
\end{align}
where $\pi_{\ell b}(y,w_{j},z_{j},\theta)$ and $\pi_{u b}(y,w_{j},z_{j},\theta)$ are determined by the optimization problems:
\begin{align}
\pi_{\ell b}(y,w_{j},z_{j},\theta) &:= \min_{\pi(\theta)\in \mathbb{R}^{d_{\pi}}} \sum_{s \in S_{\gamma(j)}} \pi(y,w_{j},z_{j},s,\theta),\text{ s.t. \eqref{eq_moment_conditions_finite_case}, \eqref{eq_non_negative}, \eqref{eq_adding_up}, and \eqref{eq_independence_constraint},}\label{eq_thm_linprog_finite_LB_independence}\\
\pi_{u b}(y,w_{j},z_{j},\theta) &:=  \max_{\pi(\theta)\in \mathbb{R}^{d_{\pi}}} \sum_{s \in S_{\gamma(j)}} \pi(y,w_{j},z_{j},s,\theta),\text{ s.t. \eqref{eq_moment_conditions_finite_case}, \eqref{eq_non_negative}, \eqref{eq_adding_up}, and \eqref{eq_independence_constraint}.}\label{eq_thm_linprog_finite_UB_independence}
\end{align}
\end{corollary}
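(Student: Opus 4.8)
The plan is to reprise the argument used in the proof of Theorem \ref{thm_linprog_finite_case}, substituting Corollary \ref{corollary_infinite_to_finite_independence} for Theorem \ref{theorem_infinite_to_finite} throughout. The key observation is that each constraint appearing in the linear programs \eqref{eq_thm_linprog_finite_LB_independence} and \eqref{eq_thm_linprog_finite_UB_independence} corresponds exactly to one of the conditions characterizing the finite-dimensional existence problem in the second part of Corollary \ref{corollary_infinite_to_finite_independence}, so that the equivalence follows by direct translation.

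First I would adopt the notation $\pi(y,w,z,s,\theta) = P_{U\mid Y,W,Z}(\text{int}(\mathcal{U}(s,\theta)) \mid Y=y, W=w, Z=z)$, the analog of \eqref{eq_nu_parameter_vector}. Under this identification, the constraints \eqref{eq_non_negative} and \eqref{eq_adding_up} encode precisely that, for each $(y,w,z)$ assigned positive probability, the vector $\pi(\theta)$ represents a probability measure on the sets in $\mathcal{A}(\theta)$ from \eqref{eq_sufficient_sets} (assigning zero mass to any set with empty interior); the constraints \eqref{eq_moment_conditions_finite_case} are exactly conditions \eqref{eq_finite_existence1_2} and \eqref{eq_finite_existence2_2}; the additional constraint \eqref{eq_independence_constraint} is term-for-term identical to \eqref{eq_finite_existence4_2}; and the objective function $\sum_{s \in S_{\gamma(j)}} \pi(y,w_{j},z_{j},s,\theta)$ is the left-hand side of \eqref{eq_finite_existence3_2}.

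Then I would invoke Corollary \ref{corollary_infinite_to_finite_independence}. Its second part shows that a vector $\pi(\theta)$ satisfying \eqref{eq_moment_conditions_finite_case}, \eqref{eq_non_negative}, \eqref{eq_adding_up}, and \eqref{eq_independence_constraint}, and delivering value $p := \sum_{s \in S_{\gamma(j)}} \pi(y,w_{j},z_{j},s,\theta)$, exists if and only if there is a pair $(P_{U\mid Y,W,Z},\theta) \in \mathcal{I}_{Y,W,Z}^{*}$ (for $\mathcal{I}_{Y,W,Z}^{*}$ from Definition \ref{definition_identified_set_independence}) satisfying \eqref{eq_theorem_counterfactual2} with $P_{Y_{\gamma}\mid Y,W,Z}(Y_{\gamma} = 1\mid Y=y, W=w_{j},Z=z_{j}) = p$. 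The first part of the same corollary then identifies the existence of such a pair with membership $P_{Y_{\gamma}\mid Y,W,Z} \in \mathcal{P}_{Y_{\gamma}\mid Y,W,Z}^{*}$. Hence, for fixed $\theta$, the set of values $p$ attainable by feasible $\pi(\theta)$ coincides exactly with the set of values of the counterfactual probability compatible with the identified set at that $\theta$.

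Finally I would note that the feasible region is a (possibly empty) convex polytope and the objective is linear in $\pi(\theta)$, so the attainable set of values is the closed interval $[\pi_{\ell b}(y,w_{j},z_{j},\theta), \pi_{u b}(y,w_{j},z_{j},\theta)]$; taking the union over $\theta \in \Theta$ yields \eqref{eq_thm_linprog_finite_union_independence}. The only content genuinely new relative to Theorem \ref{thm_linprog_finite_case} is the verification that \eqref{eq_independence_constraint} and \eqref{eq_finite_existence4_2} coincide, which is immediate from the definition of $\pi$. I therefore expect no real obstacle, since all of the substantive work—in particular, showing that the single finite family of equality constraints \eqref{eq_finite_existence4_2} suffices to encode the full independence requirement $P_{U\mid Z}(A\mid Z=z)=P_{U}(A)$ for all Borel $A$—has already been carried out in the proof of Corollary \ref{corollary_infinite_to_finite_independence}.
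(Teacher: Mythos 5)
Your proposal is correct and follows exactly the route the paper takes: the paper gives no separate proof for this corollary, simply noting it is Theorem \ref{thm_linprog_finite_case} with the added constraints \eqref{eq_independence_constraint}, and its proof of Theorem \ref{thm_linprog_finite_case} is precisely the constraint-by-constraint translation you describe, with Corollary \ref{corollary_infinite_to_finite_independence} playing the roles of Theorems \ref{thm_counterfactual_identified_set} and \ref{theorem_infinite_to_finite}. You also correctly locate the only substantive new content (that the finite family \eqref{eq_finite_existence4_2} encodes full independence) in the already-proved Corollary \ref{corollary_infinite_to_finite_independence}, so there is nothing missing.
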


Note that this Corollary is identical to Theorem \ref{thm_linprog_finite_case} with the exception that we have imposed Assumption \ref{assumption_independence}, and thus have included constraints of the form in \eqref{eq_independence_constraint}.  With the exception of these additional constraints, the optimization problems that characterize the bounding problem are the same as before. Again, this result can be easily modified to bound any linear function of counterfactual conditional distributions by simply modifying the objective function in the optimization problems \eqref{eq_thm_linprog_finite_LB_independence} and \eqref{eq_thm_linprog_finite_UB_independence}. 

\begin{comment}
Finally, we present the following corollary to Proposition \ref{proposiTion_fInite_B}. 
\begin{corollary}\label{corollary_fInIte_b_independence}
Suppose that Assumptions \ref{assumption_basic},  \ref{assumption_counterfactual_domain} and \ref{assumption_independence} hold. Then there exists a (not necessarily unique) finite subset $\Theta' \subset \Theta$ such that:
\begin{align*}
&\left\{ \overline{\pi} \in \mathbb{R}^{d_{\pi}} : \exists \theta \in \Theta \text{ s.t. } \pi(\theta) \text{ satisfies  \eqref{eq_moment_conditions_finite_case}, \eqref{eq_non_negative}, \eqref{eq_adding_up}, \eqref{eq_independence_constraint} and }\overline{\pi}=\pi(\theta) \right\}\\
&\qquad\qquad\qquad= \left\{ \overline{\pi} \in \mathbb{R}^{d_{\pi}} : \exists \theta \in \Theta' \text{ s.t. } \pi(\theta) \text{ satisfies  \eqref{eq_moment_conditions_finite_case}, \eqref{eq_non_negative}, \eqref{eq_adding_up}, \eqref{eq_independence_constraint} , and }\overline{\pi}=\pi(\theta) \right\}.
\end{align*}
\end{corollary}
The additional independence constraints in \eqref{eq_independence_constraint} do not affect the proof of Proposition \ref{proposiTion_fInite_B} in any way, and so the proof of this result is identical to the proof of Proposition \ref{proposiTion_fInite_B}. 
\end{comment}

\subsection{Monotonicity Assumptions}\label{subappendix_monotonicity}

When we entertain Assumption \ref{assumption_monotonicity}, we have the following definition of the identified set, which is analogous to both Definitions \ref{definition_identified_set} and \ref{definition_identified_set_counterfactual}. 
\begin{definition}\label{definition_identified_set_monotonicity}
Under Assumptions \ref{assumption_basic} and \ref{assumption_monotonicity}, the identified set $\mathcal{I}_{Y,X}^{*}$ is the set of all pairs $(P_{U\mid Y,X},\theta)$ such that:
\begin{enumerate}[label=(\roman*)]
	\item $(P_{U\mid Y,X},\theta)$ satisfies: 
	\begin{align}
	P_{U\mid Y,X}(U \in  \mathcal{U}(Y,X,\theta) \mid Y=y, X=x)=1,
	\end{align}
	$P_{Y,X}-$a.s.
	\item The distribution $P_{U} = P_{U\mid Y,X} P_{Y,X}$ assigns zero probability to all sets of the form $\{u \in \mathcal{U} : \varphi(x,u,\theta)=0 \}$.
	\item For all $(j,k) \in \mathcal{M}$ from Assumption \ref{assumption_monotonicity}, we have:
	\begin{align}
	P_{U\mid Y,X}(\varphi(x_{j},U,\theta) \leq \varphi(x_{k},U,\theta) \mid Y=y, X=x)=1 \text{ a.s.}
	\end{align}

\end{enumerate}
Furthermore, under Assumptions \ref{assumption_basic}, \ref{assumption_counterfactual_domain}, and \ref{assumption_monotonicity}, the identified set of counterfactual conditional distributions $\mathcal{P}_{Y_{\gamma}\mid Y,X,U}^{*}$ is the set of all conditional distributions $P_{Y_{\gamma}\mid Y,X,U}$ satisfying:
\begin{align}
P_{Y_{\gamma}\mid Y,X,U}\left(Y_{\gamma} = \mathbbm{1}\{\varphi(\gamma(X),U,\theta) \geq 0  \}\mid Y=y, X=x,U=u\right)=1, \label{eq_cf_xzt3}
\end{align}
$P_{Y,X,U}-$a.s. for some pair $(P_{U\mid Y,X},\theta)\in \mathcal{I}_{Y,X}^{*}$. 
\end{definition}
Again, this definition and the results to follow are easily modified to accommodate the case when any combination of Assumptions \ref{assumption_independence} and \ref{assumption_monotonicity} hold. We now provide the following Corollary whose proof follows almost identically to that of Theorems \ref{thm_counterfactual_identified_set} and \ref{theorem_infinite_to_finite}, with the exception being that we require condition (ii) of Definition \ref{definition_identified_set_monotonicity} to hold.
\begin{corollary}\label{corollary_infinite_to_finite_monotonicity}
Under Assumptions \ref{assumption_basic}, \ref{assumption_counterfactual_domain}, and \ref{assumption_monotonicity}, a counterfactual conditional distribution $P_{Y_{\gamma}\mid Y,X}$ satisfies $P_{Y_{\gamma}\mid Y,X} \in \mathcal{P}_{Y_{\gamma}\mid Y,X}^{*}$ if and only if there exists a pair $(P_{U\mid Y,X},\theta) \in \mathcal{I}_{Y,X}^{*}$ (for $\mathcal{I}_{Y,X}^{*}$ from Definition \ref{definition_identified_set_monotonicity}) satisfying:
\begin{align}
P_{Y_{\gamma}\mid Y,X}\left(Y_{\gamma} = 1\mid Y=y,X=x\right)= P_{U\mid Y,X}\left( \varphi(\gamma(X),U,\theta) \geq 0 \mid Y=y,X=x\right),\label{eq_theorem_counterfactual3}
\end{align}
$P_{Y,X}-$a.s. Furthermore, for any collection of counterfactual conditional distributions $P_{Y_{\gamma} \mid Y,X}$, there exists a collection of Borel conditional probability measures $P_{U \mid Y,X}$ satisfying \eqref{eq_theorem_counterfactual3} with $(P_{U \mid Y,X},\theta) \in \mathcal{I}_{Y,X}^{*}$ (for $\mathcal{I}_{Y,X}^{*}$ from Definition \ref{definition_identified_set_monotonicity}) if and only if there exists a collection $P_{U \mid Y,X}$ of probability measures on the sets in $\mathcal{A}(\theta)$ from \eqref{eq_sufficient_sets} satisfying:
\begin{align}
\sum_{s \in S_{j}}  P_{U\mid Y,X}\left( \text{int}(\mathcal{U}(s,\theta))\mid Y=1,X=x_{j}\right)&=1,\label{eq_finite_existence1_3}\\
\sum_{s \in S_{j}^{c}}  P_{U\mid Y,X}\left( \text{int}(\mathcal{U}(s,\theta)) \mid Y=0,X=x_{j}\right)&=1,\label{eq_finite_existence2_3}\\
\sum_{s \in S_{\gamma(j)}}  P_{U\mid Y,X}\left( \text{int}(\mathcal{U}(s,\theta)) \mid Y=y,X=x_{j}\right)&=P_{Y_{\gamma}\mid Y,X}\left( Y_{\gamma}=1 \mid Y=y,X=x_{j}\right),\label{eq_finite_existence3_3}
\end{align}
for $y \in \{0,1\}$ and $j\in \{1,\ldots,m\}$ assigned positive probability, and:
\begin{align}
\sum_{s \in S_{M}^{c}} P_{U\mid Y,X}\left(\text{int}(\mathcal{U}(s,\theta))\mid Y=y,X=x\right) = 0, \,\,a.s.\label{eq_finite_existence5_3}
\end{align}
for all $(y,x)$ assigned positive probability, where $S_{M}$ is as defined in Section \ref{section_additional_assumptions}.

\end{corollary}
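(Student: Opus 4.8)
The plan is to follow the two-part structure of the proofs of Theorems \ref{thm_counterfactual_identified_set} and \ref{theorem_infinite_to_finite} essentially verbatim, inserting a single additional argument to accommodate the monotonicity constraint \eqref{eq_finite_existence5_3}. For the first statement, I would repeat the proof of Theorem \ref{thm_counterfactual_identified_set}, now reading every occurrence of $\mathcal{I}_{Y,X}^{*}$ as the monotonicity-augmented identified set of Definition \ref{definition_identified_set_monotonicity}. That proof invokes only the defining membership conditions of $\mathcal{I}_{Y,X}^{*}$ together with a Radon--Nikodym construction of $P_{Y_{\gamma}\mid Y,X,U}$; neither step interacts with the specific list of constraints used to define membership, so appending condition (iii) of Definition \ref{definition_identified_set_monotonicity} leaves the argument unchanged.

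For the forward direction of the second statement, constraints \eqref{eq_finite_existence1_3}--\eqref{eq_finite_existence3_3} are inherited directly from the forward direction of Theorem \ref{theorem_infinite_to_finite}, using absolute continuity of $P_{U\mid Y,X}$ with respect to $P_{U}$ together with the zero-boundary condition. The only genuinely new task is deducing \eqref{eq_finite_existence5_3} from condition (iii). Here I would use the link between response types and the ordering: if $s \in S_{M}^{c}$, then by definition of $S_{M}$ there is a tuple $(j,k) \in \mathcal{M}$ with $s_{j}=1$ and $s_{k}=0$, so that every $u \in \mathcal{U}(s,\theta)$ satisfies $\varphi(x_{j},u,\theta) \geq 0 > \varphi(x_{k},u,\theta)$ and hence violates $\varphi(x_{j},u,\theta) \leq \varphi(x_{k},u,\theta)$. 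Since condition (iii) assigns probability one to the event that all such orderings hold, every cell with $s \in S_{M}^{c}$ must receive zero probability, which is precisely \eqref{eq_finite_existence5_3}.

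For the reverse direction, I would reuse the explicit extension $\tilde{P}_{U\mid Y,X}$ built in the proof of Theorem \ref{theorem_infinite_to_finite} from representative points $u(s,\theta) \in \text{int}(\mathcal{U}(s,\theta))$; that construction already yields a bona fide Borel probability measure satisfying \eqref{eq_finite_existence1_3}--\eqref{eq_finite_existence3_3}, the zero-boundary condition, and \eqref{eq_theorem_counterfactual3}. The one addition is verifying condition (iii). Constraint \eqref{eq_finite_existence5_3} forces $\tilde{P}_{U\mid Y,X}$ to place all of its mass on cells with $s \in S_{M}$, on each of which the sign pattern satisfies $s_{j} \leq s_{k}$ for every $(j,k) \in \mathcal{M}$; combined with Assumption \ref{assumption_monotonicity}, which orders $\varphi(x_{j},u,\theta) \leq \varphi(x_{k},u,\theta)$, the retained representative points respect this ordering, so the monotone event receives probability one and condition (iii) holds.

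The main obstacle --- and the only place the argument departs from a mechanical copy of Theorem \ref{theorem_infinite_to_finite} --- is the reverse-direction equivalence between the cell-level constraint \eqref{eq_finite_existence5_3} and the distribution-level ordering in condition (iii). The subtlety is that \eqref{eq_finite_existence5_3} restricts only sign patterns (response types), whereas condition (iii) orders the underlying index values. I would make this precise by observing that Assumption \ref{assumption_monotonicity} imposes the ordering for all $u$, so that discarding the non-monotone sign patterns $s \in S_{M}^{c}$ is exactly what is needed for the ordering to hold almost surely under the extended measure. Everything else is bookkeeping identical to the earlier proofs.
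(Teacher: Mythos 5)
Your proposal is correct and follows the same route as the paper, which simply asserts that the proof is identical to those of Theorems \ref{thm_counterfactual_identified_set} and \ref{theorem_infinite_to_finite}; your explicit derivation of the cell-level constraint \eqref{eq_finite_existence5_3} from condition (iii) of Definition \ref{definition_identified_set_monotonicity} (and back again via the representative-point extension) supplies exactly the detail the paper leaves implicit. Note only that under the literal reading of Assumption \ref{assumption_monotonicity} --- the ordering $\varphi(x_{j},u,\theta)\leq\varphi(x_{k},u,\theta)$ holding for every $u$ --- the sets $\mathcal{U}(s,\theta)$ with $s\in S_{M}^{c}$ are empty, so condition (iii) and \eqref{eq_finite_existence5_3} are both automatic, which confirms that your resolution of the sign-pattern versus index-value subtlety is the right one.
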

The proof of this corollary is identical to the proof of Theorem \ref{thm_counterfactual_identified_set} and Theorem \ref{theorem_infinite_to_finite}. Analogous to Theorem \ref{thm_counterfactual_identified_set}, the first part of Corollary \ref{corollary_infinite_to_finite_monotonicity} provides the theoretical link between the identified set for counterfactual conditional distributions and the identified set for the pair $(P_{U\mid Y,X},\theta)$ under the additional monotonicity assumption. Analogous to Theorem \ref{theorem_infinite_to_finite}, the second part of Corollary \ref{corollary_infinite_to_finite_monotonicity} reduces an infinite dimensional existence problem to a finite dimensional existence problem amenable to analysis using optimization problems. Building on the intuition provided in Example \ref{example_monotonicity}, the second part of Corollary \ref{corollary_infinite_to_finite_monotonicity} demonstrates that monotonicity as in Assumption \ref{assumption_monotonicity} can be imposed by considering only a finite number of equality constraints on a distribution $P_{U\mid Y,X}$ defined on sets of the form $\mathcal{U}(s,\theta)$. By definition of the set $S_M$, condition \eqref{eq_finite_existence5_3} simply assigns probability zero to all sets $\mathcal{U}(s,\theta)$ that do not satisfy the monotonicity relation from Assumption \ref{assumption_monotonicity}. This leads to the following result.
\begin{corollary}\label{corollary_linprog_finite_case_monotonicity}
Under Assumptions \ref{assumption_basic}, \ref{assumption_counterfactual_domain}, and \ref{assumption_monotonicity}, the identified set for the counterfactual conditional probability $P_{Y_{\gamma}\mid Y,X}(Y_{\gamma} = 1\mid Y=y, X=x_{j})$ is given by:
\begin{align}
\bigcup_{\theta \in \Theta} [\pi_{\ell b}(y,x_{j},\theta),\pi_{u b}(y,x_{j},\theta)], \label{eq_thm_linprog_finite_union_monotonicity}
\end{align}
where $\pi_{\ell b}(y,x_{j},\theta)$ and $\pi_{u b}(y,x_{j},\theta)$ are determined by the optimization problems:
\begin{align}
\pi_{\ell b}(y,x_{j},\theta) &:= \min_{\pi(\theta)\in \mathbb{R}^{d_{\pi}}} \sum_{s \in S_{\gamma(j)}} \pi(y,x_{j},s,\theta),\text{ s.t. \eqref{eq_moment_conditions_finite_case}, \eqref{eq_non_negative}, \eqref{eq_adding_up}, and \eqref{eq_monotonicity_constraint},}\label{eq_thm_linprog_finite_LB_monotonicity}\\
\pi_{u b}(y,x_{j},\theta) &:=  \max_{\pi(\theta)\in \mathbb{R}^{d_{\pi}}} \sum_{s \in S_{\gamma(j)}} \pi(y,x_{j},s,\theta),\text{ s.t. \eqref{eq_moment_conditions_finite_case}, \eqref{eq_non_negative}, \eqref{eq_adding_up}, and \eqref{eq_monotonicity_constraint}.}\label{eq_thm_linprog_finite_UB_monotonicity}
\end{align}
\end{corollary}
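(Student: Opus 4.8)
The plan is to follow the proof of Theorem \ref{thm_linprog_finite_case} essentially verbatim, substituting Corollary \ref{corollary_infinite_to_finite_monotonicity} for Theorem \ref{theorem_infinite_to_finite} at the single step where the infinite-dimensional existence problem is reduced to the finite-dimensional one. All of the substantive work has already been carried out in Corollary \ref{corollary_infinite_to_finite_monotonicity}; what remains is purely to exhibit a dictionary matching each linear constraint in the programs \eqref{eq_thm_linprog_finite_LB_monotonicity}--\eqref{eq_thm_linprog_finite_UB_monotonicity} to its counterpart among \eqref{eq_finite_existence1_3}--\eqref{eq_finite_existence5_3}. In this sense the result is genuinely a corollary, and I would present it as such.

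First I would recall the definition of $\pi(y,x,s,\theta)$ from \eqref{eq_nu_parameter_vector} and observe the following identifications. The two constraints in \eqref{eq_moment_conditions_finite_case} are term-by-term identical to \eqref{eq_finite_existence1_3} and \eqref{eq_finite_existence2_3}. The support/nonnegativity constraint \eqref{eq_non_negative} together with the adding-up constraint \eqref{eq_adding_up} are exactly what is required for $\pi(\theta)$ to encode a collection of probability measures on the sets in $\mathcal{A}(\theta)$ from \eqref{eq_sufficient_sets}, in particular assigning zero mass to any set with $\text{int}(\mathcal{U}(s,\theta))=\emptyset$. The monotonicity constraint \eqref{eq_monotonicity_constraint}, namely $\sum_{s \in S_M^c}\pi(y,x_j,s,\theta)=0$, is literally condition \eqref{eq_finite_existence5_3} rewritten in terms of the parameter vector. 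Finally, the objective $\sum_{s \in S_{\gamma(j)}}\pi(y,x_j,s,\theta)$ is the left-hand side of \eqref{eq_finite_existence3_3}, so that minimizing (respectively maximizing) it over the feasible region traces out exactly the set of values of $P_{Y_\gamma\mid Y,X}(Y_\gamma=1\mid Y=y,X=x_j)$ that are attainable through \eqref{eq_finite_existence3_3}.

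With this dictionary in place, the second part of Corollary \ref{corollary_infinite_to_finite_monotonicity} implies that $\pi(\theta)$ is feasible in the programs \eqref{eq_thm_linprog_finite_LB_monotonicity}--\eqref{eq_thm_linprog_finite_UB_monotonicity} with objective value $v$ if and only if there exists a collection of Borel conditional measures $P_{U\mid Y,X}$ with $(P_{U\mid Y,X},\theta)$ belonging to the monotonicity-identified set of Definition \ref{definition_identified_set_monotonicity} and satisfying \eqref{eq_theorem_counterfactual3} with $P_{Y_\gamma\mid Y,X}(Y_\gamma=1\mid Y=y,X=x_j)=v$. The first part of Corollary \ref{corollary_infinite_to_finite_monotonicity} then converts the existence of such a pair into the statement $P_{Y_\gamma\mid Y,X}\in\mathcal{P}^*_{Y_\gamma\mid Y,X}$. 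Taking the union of the attainable intervals $[\pi_{\ell b}(y,x_j,\theta),\pi_{u b}(y,x_j,\theta)]$ over $\theta\in\Theta$ therefore yields precisely the identified set, which is \eqref{eq_thm_linprog_finite_union_monotonicity}. That the feasible objective values form an interval for each fixed $\theta$ follows from linearity of the objective together with convexity of the feasible polytope, exactly as in Theorem \ref{thm_linprog_finite_case}.

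The only point requiring genuine care is verifying that imposing $\sum_{s\in S_M^c}\pi(y,x_j,s,\theta)=0$ is equivalent to condition (iii) of Definition \ref{definition_identified_set_monotonicity}, i.e.\ that ruling out the monotonicity-violating response types coincides with requiring $P_{U\mid Y,X}(\varphi(x_j,U,\theta)\le\varphi(x_k,U,\theta)\mid Y=y,X=x)=1$ for every $(j,k)\in\mathcal{M}$. This equivalence is, however, already built into the definition of $S_M$ as the collection of sign vectors respecting the monotonicity relations, and into the derivation of \eqref{eq_finite_existence5_3} within Corollary \ref{corollary_infinite_to_finite_monotonicity}: a latent type $u$ with $r(u,\theta)=s$ violates $\varphi(x_j,u,\theta)\le\varphi(x_k,u,\theta)$ for some $(j,k)\in\mathcal{M}$ precisely when $s\in S_M^c$, so placing zero probability on $\bigcup_{s\in S_M^c}\text{int}(\mathcal{U}(s,\theta))$ is the same as enforcing the monotonicity relation almost surely. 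Consequently no new argument beyond Corollary \ref{corollary_infinite_to_finite_monotonicity} is needed, and the proof reduces to recording the constraint correspondence above.
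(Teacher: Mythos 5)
Your proposal is correct and takes essentially the same route as the paper: the paper treats this corollary as an immediate consequence of Corollary \ref{corollary_infinite_to_finite_monotonicity}, obtained by repeating the proof of Theorem \ref{thm_linprog_finite_case} with the extra constraint \eqref{eq_monotonicity_constraint} matched to \eqref{eq_finite_existence5_3}, exactly the constraint-by-constraint dictionary you record. Your additional check that $S_M^c$ indexes precisely the response types violating some relation in $\mathcal{M}$ is the same observation the paper builds into the definition of $S_M$, so nothing is missing.
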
 
Note that this Corollary is identical to Theorem \ref{thm_linprog_finite_case} with the exception that we have imposed Assumption \ref{assumption_monotonicity}, and thus have included constraints of the form \eqref{eq_monotonicity_constraint}. With the exception of these additional constraints, the optimization problems that characterize the bounding problem are the same as before. Finally, alternative counterfactual quantities can be bounded in the same way by simply modifying the objective function in \eqref{eq_thm_linprog_finite_LB_monotonicity} and \eqref{eq_thm_linprog_finite_UB_monotonicity}. 

\begin{comment}
Finally, we present the following corollary to Proposition \ref{proposiTion_fInite_B}. 
\begin{corollary}\label{corollary_fInIte_b_monotonicity}
Suppose that Assumptions \ref{assumption_basic},  \ref{assumption_counterfactual_domain} and \ref{assumption_monotonicity} hold. Then there exists a (not necessarily unique) finite subset $\Theta' \subset \Theta$ such that:
\begin{align*}
&\left\{ \overline{\pi} \in \mathbb{R}^{d_{\pi}} : \exists \theta \in \Theta \text{ s.t. } \pi(\theta) \text{ satisfies  \eqref{eq_moment_conditions_finite_case}, \eqref{eq_non_negative}, \eqref{eq_adding_up}, \eqref{eq_monotonicity_constraint} and }\overline{\pi}=\pi(\theta) \right\}\\
&\qquad\qquad\qquad= \left\{ \overline{\pi} \in \mathbb{R}^{d_{\pi}} : \exists \theta \in \Theta' \text{ s.t. } \pi(\theta) \text{ satisfies  \eqref{eq_moment_conditions_finite_case}, \eqref{eq_non_negative}, \eqref{eq_adding_up}, \eqref{eq_monotonicity_constraint}, and }\overline{\pi}=\pi(\theta) \right\}.
\end{align*}
\end{corollary}
The additional monotonicity constraints in \eqref{eq_monotonicity_constraint} are easily accommodated by in the proof of Proposition \ref{proposiTion_fInite_B}, and so the proof of this result follows the same proof of Proposition \ref{proposiTion_fInite_B}. 
\end{comment}

\subsection{Consistency}\label{appendix_consistency}

In this subsection we present a consistency result for functionals of a partially identified parameter related to results found in \cite{molchanov1998limit}, \cite{manski2002inference}, and \cite{chernozhukov2007estimation}. It is presented in a form that is more general than necessary for the current paper (and more general than Proposition \ref{proposition_consistency_in_text} in Section \ref{section_consistency_bias_correction_inference}), and so it may be of interest in other applications. We consider an environment where the researcher wishes to compute bounds on a functional $\E_{P}[\psi(X_{i},\tau_{1},\tau_{2})]$, where $\psi: \mathcal{X} \times \mathcal{T} \to \mathbb{R}$, $\mathcal{X}\subset \mathbb{R}^{d_{x}}$ denotes the support of the observed random vector $X_{i}$, and $\mathcal{T} = \mathcal{T}_{1}\times \mathcal{T}_{2} \subset \mathbb{R}^{d_{\tau}}$ denotes the parameter space with typical elements $\tau=(\tau_{1},\tau_{2}) \in \mathcal{T}$. The values of $(\tau_{1},\tau_{2})$ are constrained by moment inequalities of the form:
\begin{align*}
\E_{P}[m_{j}(X_{i},\tau_{1},\tau_{2})] \leq 0, \text{ for }j \in \mathcal{J}(\tau_{2}), 
\end{align*}
where $\mathcal{J}(\tau_{2})$ is a finite index set that may depend on $\tau_{2}$. Note this does not rule out moment equalities, since each moment equality can be equivalently written as a combination of two moment inequalities. In this environment, the identified set for $(\tau_{01},\tau_{02}) \in \mathcal{T}$ at the true $P$ is given by:
\begin{align*}
\mathcal{T}^{*}(P)&:= \left\{ (\tau_{1},\tau_{2}) \in \mathcal{T} : \E_{P}[m_{j}(X_{i},\tau_{1},\tau_{2})] \leq 0 \text{ for $j\in \mathcal{J}(\tau_{2})$} \right\}.
\end{align*}
In addition, the identified set for $\psi_{0}:= \E_{P}[\psi(X_{i},\tau_{01},\tau_{02})]$ is given by:
\begin{align*}
\Psi^{*}(P)&:= \left\{ \overline{\psi} \in \mathbb{R} : \exists (\tau_{1},\tau_{2}) \in \mathcal{T}^{*}(P) \text{ s.t. }  \overline{\psi}= \E_{P}[\psi(X_{i},\tau_{1},\tau_{2})] \right\}.
\end{align*}
Let us define the projection:
\begin{align*}
\mathcal{T}_{1}^{*}(\tau_{2},P)&:= \left\{ \tau_{1} \in \mathcal{T}_{1} :  \E_{P}[m_{j}(X_{i},\tau_{1},\tau_{2})] \leq 0 \text{ for $j\in \mathcal{J}(\tau_{2})$} \right\}.
\end{align*}
Under Assumption \ref{assumption_consistency} ahead, it is straightforward to show that $\Psi^{*}(P)$ can be rewritten as:
\begin{align*}
\Psi^{*}(P) = \bigcup_{\tau_{2} \in \mathcal{T}_{2}} [\Psi_{\ell b}(\tau_{2},P), \Psi_{u b}(\tau_{2},P)],
\end{align*}
where:
\begin{align*}
\Psi_{\ell b}(\tau_{2},P):= \min_{\tau_{1} \in \mathcal{T}_{1}^{*}(\tau_{2},P)} \E_{P}[\psi(X_{i},\tau_{1},\tau_{2})], && \Psi_{u b}(\tau_{2},P):= \max_{\tau_{1} \in \mathcal{T}_{1}^{*}(\tau_{2},P)} \E_{P}[\psi(X_{i},\tau_{1},\tau_{2})].
\end{align*}
We study the consistency properties of the sample analog estimator for this representation of $\Psi^{*}(P)$. In particular, define:
\begin{align*}
\E_{n}[\psi(X_{i},\tau_{1},\tau_{2})]:= \frac{1}{n} \sum_{i=1}^{n} \psi(X_{i},\tau_{1},\tau_{2}), && \E_{n}[m_{j}(X_{i},\tau_{1},\tau_{2})]:= \frac{1}{n} \sum_{i=1}^{n} m_{j}(X_{i},\tau_{1},\tau_{2}), \text{ for }j\in \mathcal{J}(\tau_{2}).
\end{align*}
Then the sample analog estimator of interest is given by: 
\begin{align*}
\Psi^{*}(\P_{n}) = \bigcup_{\tau_{2} \in \mathcal{T}_{2}} [\Psi_{\ell b}(\tau_{2},\P_{n}), \Psi_{u b}(\tau_{2},\P_{n})],
\end{align*}
where:
\begin{align*}
\Psi_{\ell b}(\tau_{2},\P_{n}):= \min_{\tau_{1} \in \mathcal{T}_{1}^{*}(\tau_{2},\P_{n})} \E_{n}[\psi(X_{i},\tau_{1},\tau_{2})], && \Psi_{u b}(\tau_{2},\P_{n}):= \max_{\tau_{1} \in \mathcal{T}_{1}^{*}(\tau_{2},\P_{n})} \E_{n}[\psi(X_{i},\tau_{1},\tau_{2})],
\end{align*}
and:
\begin{align*}
\mathcal{T}_{1}^{*}(\tau_{2},\P_{n})&:= \left\{ \tau_{1} \in \mathcal{T}_{1} : \E_{n}[m_{j}(X_{i},\tau_{1},\tau_{2})] \leq 0 \text{ for $j=1,\ldots,J$} \right\}.
\end{align*}
In the following, we define the sequence $\{\eta_{n}(\tau_{2})\}_{n=1}^{\infty}$ as:
\begin{align*}
\eta_{n}(\tau_{2}) := \max\left\{ \max_{j\in \mathcal{J}(\tau_{2})} \sup_{\tau_{1} \in \mathcal{T}_{1}} | \E_{n}[m_{j}(X_{i},\tau_{1},\tau_{2})] - \E_{P}[m_{j}(X_{i},\tau_{1},\tau_{2})]|,  \sup_{\tau_{1} \in \mathcal{T}_{1}} \left| \E_{n}[\psi(X_{i},\tau_{1},\tau_{2})]  -   \E_{P}[\psi(X_{i},\tau_{1},\tau_{2})] \right|\right\}.
\end{align*}
We impose the following assumption.
\begin{assumption}\label{assumption_consistency}
The parameter space $(\mathcal{T},\mathcal{P})$ satisfies the following: (i) $\mathcal{T}:=\mathcal{T}_{1}\times\mathcal{T}_{2} \subset \mathbb{R}^{d_{\tau}}$, where $\mathcal{T}_{1}$ is compact and convex; (ii) for each $\tau_{2} \in \mathcal{T}_{2}$, the function $\psi(\,\cdot\,,\tau_{2}): \mathcal{X} \times \mathcal{T}_{1} \to \mathbb{R}$ is measurable in $X_{i} \in \mathcal{X}\subset \mathbb{R}^{d_{x}}$ and is Lipschitz continuous in $\tau_{1}$ with a (possibly data-dependent) Lipschitz constant $C(\tau_{2})$ with $\sup_{\tau_{2} \in \mathcal{T}_{2}} C(\tau_{2})<\infty$ a.s.; (iii) for each $\tau_{2} \in \mathcal{T}_{2}$ and $j\in \mathcal{J}(\tau_{2})$, the moment function $m_{j}(\,\cdot\,,\tau_{2}): \mathcal{X} \times \mathcal{T}_{1} \to \mathbb{R}$ is measurable in $X_{i}$ and convex (and thus continuous) in $\tau_{1}$; (iv) for each $P \in \mathcal{P}$ there exists a $(\tau_{1},\tau_{2}) \in \mathcal{T}$ such that $\E_{P}[m_{j}(X_{i},\tau_{1},\tau_{2})] \leq 0, \text{ for }j\in \mathcal{J}(\tau_{2})$; (v) for each fixed $\tau_{2} \in \mathcal{T}_{2}$, we have $\eta_{n}(\tau_{2}) = O_{P}(a_{n}^{-1})$ for some sequence $a_{n} \uparrow \infty$; (vi) for each fixed $\tau_{2} \in \mathcal{T}_{2}$, there exists a sequence $b_{n} \downarrow 0$ satisfying $b_{n} \geq \eta_{n}(\tau_{2})$ w.p.a. 1; (vii) there exists a finite subset $\mathcal{T}_{2}' \subset \mathcal{T}_{2}$ such that:
\begin{align*}
&\left\{ \tau_{1} \in \mathcal{T}_{1} : \exists \tau_{2} \in \mathcal{T}_{2} \text{ s.t. } \E_{P}[m_{j}(X_{i},\tau_{1},\tau_{2})] \leq 0 \text{ for $j\in \mathcal{J}(\tau_{2})$} \right\}\\
&\qquad\qquad\qquad= \left\{ \tau_{1} \in \mathcal{T}_{1} : \exists \tau_{2} \in \mathcal{T}_{2}' \text{ s.t. } \E_{P}[m_{j}(X_{i},\tau_{1},\tau_{2})] \leq 0 \text{ for $j\in \mathcal{J}(\tau_{2})$} \right\}. 
\end{align*}
Finally, the sample $\{X_{i}\}_{i=1}^{n}$ is given by $n$ i.i.d. draws from some $P \in \mathcal{P}$.
\end{assumption}
For any $c \in \mathbb{R}$ let us define:
\begin{align*}
\mathcal{T}_{1}^{*}(\tau_{2},P,c)&:= \left\{ \tau_{1} \in \mathcal{T}_{1} : \E_{P}[m_{j}(X_{i},\tau_{1},\tau_{2})] \leq c \text{ for $j\in \mathcal{J}(\tau_{2})$} \right\},
\end{align*}
and:
\begin{align*}
\Psi^{*}(P,c) = \bigcup_{\tau_{2} \in \mathcal{T}_{2}'} [\Psi_{\ell b}(\tau_{2},P,c), \Psi_{u b}(\tau_{2},P,c)],
\end{align*}
where:
\begin{align*}
\Psi_{\ell b}(\tau_{2},P,c):= \min_{\tau_{1} \in \mathcal{T}_{1}^{*}(\tau_{2},P,c)} \E_{P}[\psi(X_{i},\tau_{1},\tau_{2})], && \Psi_{u b}(\tau_{2},\P_{n},c):= \max_{\tau_{1} \in \mathcal{T}_{1}^{*}(\tau_{2},P,c)} \E_{P}[\psi(X_{i},\tau_{1},\tau_{2})].
\end{align*}
Define the sets $\mathcal{T}_{1}^{*}(\tau_{2},P,c)$ and $\Psi^{*}(P,c)$ analogously. The following Theorem then shows that a slight enlargement of the set $\Psi^{*}(\P_{n})$ is a consistent estimator for the set $\Psi^{*}(P)$, where consistency is defined using the Hausdorff metric. 
\begin{theorem}\label{theorem_consistency}
Suppose that Assumption \ref{assumption_consistency} holds. Then $d_{H}(\Psi^{*}(\P_{n},b_{n}),\Psi^{*}(P)) = o_{P}(1)$, where $b_{n}$ is any sequence satisfying Assumption \ref{assumption_consistency}. 
\end{theorem}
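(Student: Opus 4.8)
The plan is to leverage the finiteness of $\mathcal{T}_2'$ guaranteed by part (vii) of Assumption \ref{assumption_consistency} to reduce a statement about Hausdorff convergence of sets to finitely many statements about convergence of interval endpoints. First I would record the elementary facts that, for two finite collections $\{A_i\}$ and $\{B_i\}$ indexed by the same finite set, $d_H\!\left(\bigcup_i A_i, \bigcup_i B_i\right) \le \max_i d_H(A_i, B_i)$, and that the Hausdorff distance between two nonempty compact intervals equals the maximum of the absolute differences of their left and right endpoints. Since $\Psi^*(P) = \Psi^*(P,0)$ and the estimator $\Psi^*(\P_n, b_n)$ are both finite unions over $\tau_2 \in \mathcal{T}_2'$ of the intervals $[\Psi_{\ell b}(\tau_2,\cdot,\cdot), \Psi_{ub}(\tau_2,\cdot,\cdot)]$, it then suffices to show that for each fixed $\tau_2 \in \mathcal{T}_2'$ the two endpoints of the sample interval converge in probability to the corresponding endpoints of the population interval; finiteness of $\mathcal{T}_2'$ upgrades these finitely many $o_P(1)$ statements to a single $o_P(1)$ bound on the maximum.

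The heart of the argument is a two-sided sandwich for each endpoint, obtained by comparing feasible sets at different slackness levels. I would work on the event $\{\eta_n(\tau_2) \le b_n\}$, which holds with probability approaching one by part (vi). On this event, for any $\tau_1 \in \mathcal{T}_1^*(\tau_2,P,0)$ we have $\E_n[m_j] \le \E_P[m_j] + \eta_n(\tau_2) \le b_n$, yielding the inner containment $\mathcal{T}_1^*(\tau_2,P,0) \subseteq \mathcal{T}_1^*(\tau_2,\P_n,b_n)$; conversely, any $\tau_1 \in \mathcal{T}_1^*(\tau_2,\P_n,b_n)$ satisfies $\E_P[m_j] \le \E_n[m_j] + \eta_n(\tau_2) \le 2 b_n$, yielding the outer containment $\mathcal{T}_1^*(\tau_2,\P_n,b_n) \subseteq \mathcal{T}_1^*(\tau_2,P,2b_n)$. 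Combining these with the uniform objective bound $\sup_{\tau_1}|\E_n[\psi] - \E_P[\psi]| \le \eta_n(\tau_2)$ that is built into the definition of $\eta_n(\tau_2)$, I obtain
\[
\Psi_{ub}(\tau_2,P,0) - \eta_n(\tau_2) \;\le\; \Psi_{ub}(\tau_2,\P_n,b_n) \;\le\; \Psi_{ub}(\tau_2,P,2b_n) + \eta_n(\tau_2),
\]
and the mirror-image inequalities for $\Psi_{\ell b}$. Because $\eta_n(\tau_2) = O_P(a_n^{-1}) = o_P(1)$ by part (v), everything collapses to continuity of the population value functions in the slackness argument at $0$, namely $\Psi_{ub}(\tau_2,P,2b_n) \to \Psi_{ub}(\tau_2,P,0)$ and likewise for $\Psi_{\ell b}$, where $2b_n \downarrow 0$ is deterministic.

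I expect this value-function continuity to be the main obstacle, and it is where parts (i)--(iii) are essential. The sets $\mathcal{T}_1^*(\tau_2,P,c)$ are nested and nondecreasing in $c$, closed (by continuity of each $m_j$ in $\tau_1$, which follows from convexity), contained in the compact $\mathcal{T}_1$, and satisfy $\bigcap_{c>0}\mathcal{T}_1^*(\tau_2,P,c) = \mathcal{T}_1^*(\tau_2,P,0)$. For the deterministic sequence $2b_n \downarrow 0$ I would take maximizers $\tau_1^n$ of $\E_P[\psi(X_i,\cdot,\tau_2)]$ over $\mathcal{T}_1^*(\tau_2,P,2b_n)$ along a subsequence attaining $\limsup_n \Psi_{ub}(\tau_2,P,2b_n)$, extract a further convergent subsequence $\tau_1^n \to \tau_1^*$ by compactness, use continuity of the constraints to conclude $\tau_1^* \in \mathcal{T}_1^*(\tau_2,P,0)$, and use Lipschitz continuity of the objective (part (ii)) to conclude that this $\limsup$ equals $\E_P[\psi(X_i,\tau_1^*,\tau_2)] \le \Psi_{ub}(\tau_2,P,0)$; the reverse inequality is immediate from monotonicity of the feasible sets, giving the claimed convergence, and symmetrically for $\Psi_{\ell b}$. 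Finally I would dispose of the empty-slice bookkeeping that the union lemma requires: if $\mathcal{T}_1^*(\tau_2,P,0) = \emptyset$ then $\delta := \min_{\tau_1 \in \mathcal{T}_1} \max_j \E_P[m_j(X_i,\tau_1,\tau_2)] > 0$ by compactness and continuity, so the outer containment forces $\mathcal{T}_1^*(\tau_2,\P_n,b_n) = \emptyset$ once $2b_n < \delta$ (hence w.p.a. 1), whereas a nonempty population slice forces a nonempty sample slice w.p.a. 1 via the inner containment. Since $\mathcal{T}_2'$ is finite, intersecting these finitely many probability-approaching-one events lets me apply the union bound of the first paragraph only over the common collection of nonempty slices, completing the argument.
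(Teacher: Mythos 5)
Your proof is correct and follows the same skeleton as the paper's: reduce to the finitely many slices $\tau_2 \in \mathcal{T}_2'$, establish the feasible-set sandwich $\mathcal{T}_1^*(\tau_2,P,0) \subseteq \mathcal{T}_1^*(\tau_2,\mathbb{P}_n,b_n) \subseteq \mathcal{T}_1^*(\tau_2,P,2b_n)$ on the event $\{\eta_n(\tau_2)\le b_n\}$, and then invoke a continuity-at-zero-slackness property. Where you diverge is in how that last property is established and used. The paper bounds $|\Psi_{\ell b}(\tau_2,\mathbb{P}_n,b_n)-\Psi_{\ell b}(\tau_2,P)|$ by $C(\tau_2)\,d_H(\mathcal{T}_1^*(\tau_2,\mathbb{P}_n,b_n),\mathcal{T}_1^*(\tau_2,P))$ plus a uniform objective deviation, and then proves Hausdorff consistency of the feasible sets by a Molchanov-type argument: it defines $\rho(\varepsilon):=d_H(\mathcal{T}_1^*(\tau_2,P,\varepsilon),\mathcal{T}_1^*(\tau_2,P))$, argues $\rho$ is right-continuous at $0$, and reads off $d_H \le \rho(2b_n)\to 0$. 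You never form the Hausdorff distance between feasible sets: you sandwich the value functions directly and prove $\Psi_{ub}(\tau_2,P,2b_n)\to\Psi_{ub}(\tau_2,P,0)$ by extracting a convergent subsequence of maximizers, using closedness of the constraint sets and continuity of the objective. The two routes rest on the same ingredients (compactness of $\mathcal{T}_1$, closedness from convexity of the $m_j$, regularity of $\psi$), but yours avoids the Lipschitz-transfer step for this part of the argument and is more self-contained, whereas the paper's isolates set consistency as a reusable intermediate fact. Two further minor differences work in your favor: you replace the paper's sum bound over $\mathcal{T}_2'$ with the sharper max bound (both are valid upper bounds), and you explicitly handle the possibility that a slice $\mathcal{T}_1^*(\tau_2,P,0)$ is empty --- a case the paper's proof passes over, since Assumption (iv) only guarantees nonemptiness of the feasible set for some $\tau_2$, not for every $\tau_2\in\mathcal{T}_2'$.
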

\begin{proof}[Proof of Theorem \ref{theorem_consistency}]
We have:
\begin{align*}
d_{H}(\Psi^{*}(\P_{n},b_{n}),\Psi^{*}(P)) \leq \sum_{\tau_{2} \in \mathcal{T}_{2}'}d_{H}\left([\Psi_{\ell b}(\tau_{2},\P_{n},b_{n}), \Psi_{u b}(\tau_{2},\P_{n},b_{n})], [\Psi_{\ell b}(\tau_{2},P), \Psi_{u b}(\tau_{2},P)]\right).
\end{align*}
Since $\mathcal{T}_{2}'$ is finite, it suffices to show that:
\begin{align*}
d_{H}\left([\Psi_{\ell b}(\tau_{2},\P_{n},b_{n}), \Psi_{u b}(\tau_{2},\P_{n},b_{n})], [\Psi_{\ell b}(\tau_{2},P), \Psi_{u b}(\tau_{2},P)]\right)=o_{P}(1),
\end{align*}
for each $\tau_{2} \in \mathcal{T}_{2}'$. To this end, fix any $\tau_{2} \in \mathcal{T}_{2}'$. To show the previous display, it suffices to show consistency of the upper and lower bounds; i.e. that $|\Psi_{\ell b}(\tau_{2},\P_{n},b_{n}) - \Psi_{\ell b}(\tau_{2},P)| = o_{P}(1)$ and that $|\Psi_{u b}(\tau_{2},\P_{n},b_{n}) - \Psi_{u b}(\tau_{2},P)| = o_{P}(1)$. We focus on the lower bound, since the upper bound proof is symmetric. 

First recall that $\psi(X_{i},\tau_{1},\tau_{2})$ is continuous with respect to $\tau_{1}$ for every $\tau_{2}$ by Assumption \ref{assumption_consistency}, and $\mathcal{T}_{1}$ is compact. Thus, we have that $\psi(X_{i},\tau_{1},\tau_{2})$ is uniformly continuous (w.r.t. $\tau_{1}$) on $\mathcal{T}_{1}$. Thus, for every $\varepsilon>0$ there exists a $\delta(\varepsilon)>0$ such that $|\E_{n}[\psi(X_{i},\tau_{1},\tau_{2})] - \E_{n}[\psi(X_{i},\tau_{1}',\tau_{2})]|< \varepsilon$ whenever $||\tau_{1} - \tau_{1}'|| < \delta(\varepsilon)$. Now note that:
\begin{align*}
&|\Psi_{\ell b}(\tau_{2},\P_{n},b_{n}) - \Psi_{\ell b}(\tau_{2},P)|\\
&= \left|\min_{\tau_{1} \in \mathcal{T}_{1}^{*}(\tau_{2},\P_{n},b_{n})}  \E_{n}[\psi(X_{i},\tau_{1},\tau_{2})] -  \min_{\tau_{1} \in \mathcal{T}_{1}^{*}(\tau_{2},P)}  \E_{P}[\psi(X_{i},\tau_{1},\tau_{2})] \right|,\\
&\leq \left|\min_{\tau_{1} \in \mathcal{T}_{1}^{*}(\tau_{2},\P_{n},b_{n})}  \E_{n}[\psi(X_{i},\tau_{1},\tau_{2})] -  \min_{\tau_{1} \in \mathcal{T}_{1}^{*}(\tau_{2},P)}  \E_{n}[\psi(X_{i},\tau_{1},\tau_{2})] \right|\\
&\qquad\qquad\qquad+ \left|\min_{\tau_{1} \in \mathcal{T}_{1}^{*}(\tau_{2},P)}  \E_{n}[\psi(X_{i},\tau_{1},\tau_{2})] -  \min_{\tau_{1} \in \mathcal{T}_{1}^{*}(\tau_{2},P)}  \E_{P}[\psi(X_{i},\tau_{1},\tau_{2})] \right|,\\
&= \left|\max_{\tau_{1} \in \mathcal{T}_{1}^{*}(\tau_{2},P)}  - \E_{n}[\psi(X_{i},\tau_{1},\tau_{2})] - \max_{\tau_{1} \in \mathcal{T}_{1}^{*}(\tau_{2},\P_{n},b_{n})}  -\E_{n}[\psi(X_{i},\tau_{1},\tau_{2})]  \right|\\
&\qquad\qquad\qquad+ \left|\max_{\tau_{1} \in \mathcal{T}_{1}^{*}(\tau_{2},P)} -\E_{P}[\psi(X_{i},\tau_{1},\tau_{2})] -\max_{\tau_{1} \in \mathcal{T}_{1}^{*}(\tau_{2},P)} -\E_{n}[\psi(X_{i},\tau_{1},\tau_{2})]  \right|,\\
&\leq \max_{\{\tau_{1},\tau_{1}' \in \mathcal{T}_{1} : ||\tau_{1} - \tau_{1}'|| \leq d_{H}(\mathcal{T}_{1}^{*}(\tau_{2},\P_{n},b_{n}),\mathcal{T}_{1}^{*}(\tau_{2},P))\}}\left| - \E_{n}[\psi(X_{i},\tau_{1},\tau_{2})]  -  -\E_{n}[\psi(X_{i},\tau_{1}',\tau_{2})]  \right|\\
&\qquad\qquad\qquad+ \max_{\tau_{1} \in \mathcal{T}_{1}^{*}(\tau_{2},P)} \left| -\E_{n}[\psi(X_{i},\tau_{1},\tau_{2})]  -  - \E_{P}[\psi(X_{i},\tau_{1},\tau_{2})] \right|\\
&= \max_{\{\tau_{1},\tau_{1}' \in \mathcal{T}_{1} : ||\tau_{1} - \tau_{1}'|| \leq d_{H}(\mathcal{T}_{1}^{*}(\tau_{2},\P_{n},b_{n}),\mathcal{T}_{1}^{*}(\tau_{2},P))\}}\left| \E_{n}[\psi(X_{i},\tau_{1}',\tau_{2})] - \E_{n}[\psi(X_{i},\tau_{1},\tau_{2})] \right|\\
&\qquad\qquad\qquad+ \max_{\tau_{1} \in \mathcal{T}_{1}^{*}(\tau_{2},P)} \left| \E_{P}[\psi(X_{i},\tau_{1},\tau_{2})] -\E_{n}[\psi(X_{i},\tau_{1},\tau_{2})] \right|\\
&\leq \max_{\{\tau_{1},\tau_{1}' \in \mathcal{T}_{1} : ||\tau_{1} - \tau_{1}'|| \leq d_{H}(\mathcal{T}_{1}^{*}(\tau_{2},\P_{n},b_{n}),\mathcal{T}_{1}^{*}(\tau_{2},P))\}} C(\tau_{2}) \cdot ||\tau_{1} - \tau_{1}'|| + \max_{\tau_{1} \in \mathcal{T}_{1}^{*}(\tau_{2},P)} \left|\E_{P}[\psi(X_{i},\tau_{1},\tau_{2})] -  \E_{n}[\psi(X_{i},\tau_{1},\tau_{2})]  \right|\\
&\leq C(\tau_{2}) \cdot d_{H}(\mathcal{T}_{1}^{*}(\tau_{2},\P_{n},b_{n}),\mathcal{T}_{1}^{*}(\tau_{2},P))+ \max_{\tau_{1} \in \mathcal{T}_{1}^{*}(\tau_{2},P)} \left|\E_{P}[\psi(X_{i},\tau_{1},\tau_{2})] -  \E_{n}[\psi(X_{i},\tau_{1},\tau_{2})]  \right|.
\end{align*}
It suffices to show the two terms in the last line of the previous display converge to zero in probability. The second term converges in probability to zero by Assumption \ref{assumption_consistency}(vi). Furthermore, since $\sup_{\tau_{2} \in \mathcal{T}_{2}} C(\tau_{2})<\infty$ a.s., the first term converges to zero in probability if we can show that:
\begin{align*}
d_{H}(\mathcal{T}_{1}^{*}(\tau_{2},\P_{n},b_{n}),\mathcal{T}_{1}^{*}(\tau_{2},P)) = o_{P}(1). 
\end{align*}
The remainder of the proof focuses on proving this latter fact. Note that:
\begin{align*}
d_{H}(\mathcal{T}_{1}^{*}(\tau_{2},\P_{n},b_{n}),\mathcal{T}_{1}^{*}(\tau_{2},P))&= \inf\{\delta>0 : \mathcal{T}_{1}^{*}(\tau_{2},P) \subset \mathcal{T}_{1}^{*}(\tau_{2},\P_{n},b_{n})^\delta, \text{ and } \mathcal{T}_{1}^{*}(\tau_{2},\P_{n},b_{n}) \subset \mathcal{T}_{1}^{*}(\tau_{2},P)^\delta\},
\end{align*}
where:
\begin{align*}
\mathcal{T}_{1}^{*}(\tau_{2},\P_{n},b_{n})^\delta&:= \{\tau_{1} \in \mathcal{T}_{1} : B_{\delta}(\tau_{1})\cap \mathcal{T}_{1}^{*}(\tau_{2},\P_{n},b_{n})\neq \emptyset  \},\\
\mathcal{T}_{1}^{*}(\tau_{2},P)^\delta&:= \{\tau_{1} \in \mathcal{T}_{1} : B_{\delta}(\tau_{1})\cap \mathcal{T}_{1}^{*}(\tau_{2},P)\neq \emptyset  \},
\end{align*}
where $B_{\delta}(\tau_{1})$ denotes the closed ball of radius $\delta>0$ around $\tau_{1}$. The next part of the proof closely follows the proof of Theorem 2.1 in \cite{molchanov1998limit}. Define the function:
\begin{align*}
\rho(\varepsilon):= d_{H}(\mathcal{T}_{1}^{*}(\tau_{2},P,\varepsilon),\mathcal{T}_{1}^{*}(\tau_{2},P)).
\end{align*}
Since each of the moment functions are convex (and thus lower semi-continuous) in $\tau_{1}$ for each $\tau_{2}$, each of the sets $\mathcal{T}_{1}^{*}(\tau_{2},P,\varepsilon)$ and $\mathcal{T}_{1}^{*}(\tau_{2},P)$ are closed and $\rho$ is right continuous. Furthermore, $\rho$ is non-increasing for $\varepsilon<0$ and non-decreasing for $\varepsilon>0$. Now by Assumption \ref{assumption_consistency} we have with high probability:
\begin{align*}
\mathcal{T}_{1}^{*}(\tau_{2},\P_{n},b_{n}) &= \{\tau_{1} \in \mathcal{T}_{1} : \E_{n}[m_{j}(X_{i},\tau_{1},\tau_{2})] \leq b_{n} \text{ for $j\in \mathcal{J}(\tau_{2})$} \}\\
&\subset \{\tau_{1} \in \mathcal{T}_{1} : \E_{P}[m_{j}(X_{i},\tau_{1},\tau_{2})] \leq \eta_{n}(\tau_{2}) + b_{n} \text{ for $j\in \mathcal{J}(\tau_{2})$} \}\\
&\subset \mathcal{T}_{1}^{*}(\tau_{2},P,2b_{n})\\
&\subset \mathcal{T}_{1}^{*}(\tau_{2},P)^{\rho(2b_{n})}.
\end{align*}
Furthermore, by Assumption \ref{assumption_consistency} we have with high probability for large enough $n$:
\begin{align*}
\mathcal{T}_{1}^{*}(\tau_{2},P) &\subset \mathcal{T}_{1}^{*}(\tau_{2},P,b_{n}-\eta_{n}(\tau_{2}))\\
&\subset \mathcal{T}_{1}^{*}(\tau_{2},\P_{n},b_{n}).
\end{align*}
Conclude that with high probability for large enough $n$:
\begin{align*}
d_{H}(\mathcal{T}_{1}^{*}(\tau_{2},\P_{n},b_{n}),\mathcal{T}_{1}^{*}(\tau_{2},P)) \leq \rho(2b_{n}) \to 0,
\end{align*}
where the last line follows from right-continuity of the function $\rho(\,\cdot\,)$. Since $\tau_{2} \in \mathcal{T}_{2}'$ was arbitrary, this completes the proof.
\end{proof}

\subsection{The Additively Separable Case}\label{appendix_additively_separable}

In this subsection we show how our method can be applied to a model that satisfies the following assumption. 

\begin{assumption}\label{assumption_threshold_crossing} The index function $\varphi$ satisfying Assumption \ref{assumption_basic} is additively separable in $U$; i.e. we have $\varphi(X,U,\theta) = \tilde{\varphi}(X,\theta) - U$ for some function $\tilde{\varphi}$.
\end{assumption}
This is a well-studied special case of the linear model considered in the main text. In particular, much of the discussion in this section expands upon the insights of \cite{chesher2013semiparametric}. We consider two cases: (i) when the structural function $\varphi$ is linear in the parameter vector $\theta$, and (ii) when the structural function is unknown. To begin, let us consider the following simple example.

\begin{example}\label{example_threshold_crossing}
Suppose we have a scalar variable $X$ with support $\mathcal{X} = \{x_{1},\ldots,x_{m}\}$ and latent variables $U \in [-1,1]$. Consider the following additively separable threshold crossing model:
\begin{align*}
Y = \mathbbm{1}\{X\theta \geq U \},
\end{align*}
where $\theta$ is a fixed scalar coefficient. The response types in this setting are characterized by the $m\times 1$ vectors:
\begin{align*}
r(u,\theta) := 
\begin{bmatrix}
\mathbbm{1}\{x_{1}\theta \geq u \}\\
\mathbbm{1}\{x_{2}\theta \geq u \}\\
\vdots\\
\mathbbm{1}\{x_{m}\theta \geq u \}\\
\end{bmatrix}.
\end{align*}
However, the set of possible response types in this setting depends on the sign of the fixed coefficient $\theta$. In particular, when $\theta\geq0$ we have the response types $r(u,\theta) \in \{s_{1}, \ldots, s_{m+1}\}$, where:
\begin{align}
s_{1}:= 
\begin{bmatrix}
0\\
0\\
\vdots\\
0\\
0
\end{bmatrix}, && s_{2}:= 
\begin{bmatrix}
0\\
0\\
\vdots\\
0\\
1
\end{bmatrix}, &&\ldots, &&
s_{m}:= 
\begin{bmatrix}
0\\
1\\
\vdots\\
1\\
1
\end{bmatrix},&&
s_{m+1}:= 
\begin{bmatrix}
1\\
1\\
\vdots\\
1\\
1
\end{bmatrix}.\label{eq_pos_ordering}
\end{align}
No other response types are possible when $\theta>0$, and so all other response types must be assigned zero probability. Alternatively, when $\theta<0$ we have the response types $r(u,\theta) \in \{s_{1}', \ldots, s_{m+1}'\}$, where:
\begin{align}
s_{1}':= 
\begin{bmatrix}
0\\
0\\
\vdots\\
0\\
0\\
\end{bmatrix}, && s_{2}':= 
\begin{bmatrix}
1\\
0\\
\vdots\\
0\\
0
\end{bmatrix}, &&\ldots,&&
s_{m}':= 
\begin{bmatrix}
1\\
1\\
\vdots\\
1\\
0
\end{bmatrix},&&
s_{m+1}':= 
\begin{bmatrix}
1\\
1\\
\vdots\\
1\\
1
\end{bmatrix}.\label{eq_neg_ordering}
\end{align}
Again, all other response types must be assigned zero probability by the distribution of $U$. 

The reason that these particular response types arise when $\theta\geq0$ and $\theta<0$ is due to the ordering of the support of $X$ induced by the value of the scalar product $X\theta$. In particular, if we suppose $x_{1} \leq x_{2} \leq \ldots \leq x_{m}$, then when $\theta \geq0$ we have the ordering $x_{1}\theta \leq x_{2}\theta\leq \ldots \leq x_{m}\theta$. This means, for example, that it is impossible to find a value of $u \in [-1,1]$ so that:
\begin{align*}
r(u,\theta)= \begin{bmatrix}
\mathbbm{1}\{x_{1}\theta \geq u \}\\
\mathbbm{1}\{x_{2}\theta \geq u \}\\
\mathbbm{1}\{x_{3}\theta \geq u \}\\
\vdots\\
\mathbbm{1}\{x_{m}\theta \geq u \}
\end{bmatrix} = \begin{bmatrix}
0\\
1\\
0\\
\vdots\\
0
\end{bmatrix}.
\end{align*}
This shows that when $\theta \geq 0$ certain response types are not possible, and so must be assigned probability zero by the distribution of $U$. An identical intuition holds in the case when $\theta<0$. In the end, the response types that can be assigned positive probability in this example when $\theta \geq 0$ and $\theta <0$ are exactly the ones corresponding to the vectors in \eqref{eq_pos_ordering} and \eqref{eq_neg_ordering}, respectively. Figure \ref{fig_profile_eg2} provides an illustration in the case when $\mathcal{X}=\{x_{1},x_{2},x_{3}\}$.

\begin{figure}[!t]
\centering
\includegraphics[scale=0.7]{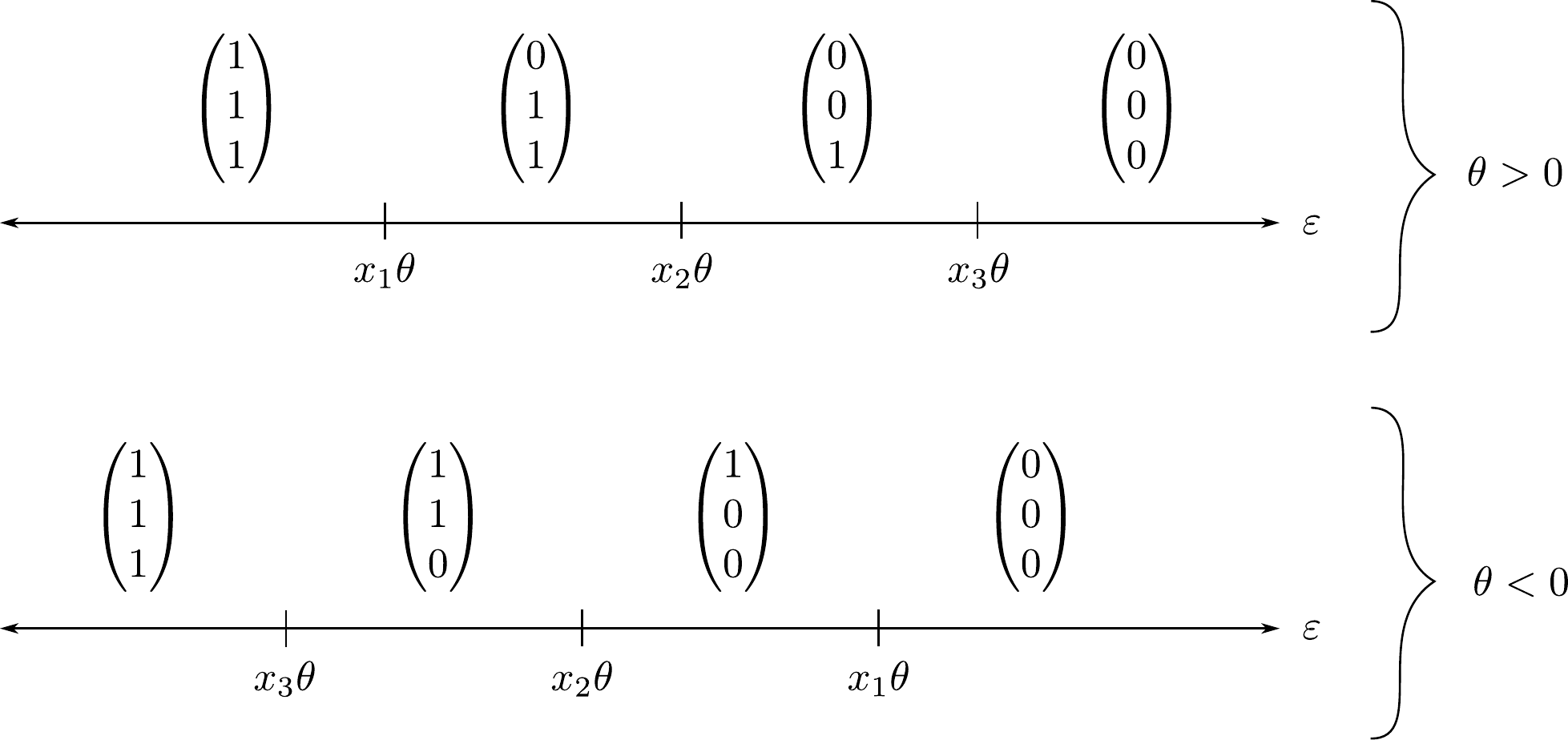}
\caption{A figure corresponding to Example \ref{example_threshold_crossing} illustrating the partition of the latent variable space according to response types in the case when the index function is additively separable in $U$ and when $\mathcal{X}= \{x_{1},x_{2},x_{3}\}$ with $x_{1} \leq x_{2} \leq x_{3}$. As indicated in the example, the feasible response types are those that correspond to a particular ordering of the points in $\mathcal{X}$ induced by the scalar product $X \theta$.} \label{fig_profile_eg2}
\end{figure}
 
\end{example}

This example illustrates the key ideas behind the implementation of our approach when the index function is additively separable in $U$, as in Assumption \ref{assumption_threshold_crossing}. In particular, given the function $\tilde{\varphi}$ from Assumption \ref{assumption_threshold_crossing}, the key is to determine the values of $\theta$ such that the function $\tilde{\varphi}(\,\cdot\,,\theta): \mathcal{X} \to \mathbb{R}$ induces a unique ordering of the points in the support $\mathcal{X}$. With a scalar $X$ variable, and $\tilde{\varphi}(X,\theta) = X \theta$, Example \ref{example_threshold_crossing} shows that only two orderings are possible, corresponding to the case when $\theta \geq 0$ and $\theta <0$. After the order is determined, we can immediately determine the set of response types that must be assigned zero probability by the distribution of $U$, and then impose these restrictions as an additional constraint in the bounding problems \eqref{eq_thm_linprog_finite_LB} and \eqref{eq_thm_linprog_finite_UB}. In particular, let $S_{\varphi}(\theta)$ denote the set of all binary vectors $s \in \{0,1\}^{m}$ corresponding to sets $\mathcal{U}(s,\theta)$ that can be assigned positive probability under Assumption \ref{assumption_threshold_crossing}, and impose the constraint:
\begin{align}
\sum_{s \in S_{\varphi}(\theta)^{c}} \pi(y,x_{j},\theta,s) = 0, \label{eq_linearity_constraint2}
\end{align}
for all $y\in \{0,1\}$ and $j=1,\ldots,m$ occurring with positive probability. Then Theorem \ref{thm_linprog_finite_case} can be extended to accommodate Assumption \ref{assumption_threshold_crossing} by simply adding the constraints \eqref{eq_linearity_constraint2} to the optimization problems \eqref{eq_thm_linprog_finite_LB} and \eqref{eq_thm_linprog_finite_UB}.

Similar to the discussion in the main text, determining the sets $\mathcal{U}(s,\theta)$ that can be assigned positive probability under Assumption \ref{assumption_threshold_crossing} poses an interesting computational problem. Although Example \ref{example_threshold_crossing} illustrates a case when there are only two orderings, in general many more orderings may be possible, even when $\tilde{\varphi}$ is linear in $\theta$. Clearly at most $m!$ orderings are possible, but when the index function is linear in $\theta$ it is possible to show that the maximum number of possible orderings is much smaller than $m!$. In particular, consider the function $\tilde{\varphi}(X,\theta) = X\theta$ where $X$ is a vector of dimension $d$. Label the support $\mathcal{X}$ as $\{x_{1},x_{2},\ldots,x_{m} \}$, and let $\Delta_{jk} := x_{j} - x_{k}$ for $1\leq j < k \leq m$. The set $H_{jk}:=\{ \theta \in \mathbb{R}^{d} : \Delta_{jk} \theta = 0\}$ defines a hyperplane through the origin that is normal to the line connecting $x_{j}$ and $x_{k}$ in $\mathbb{R}^{d}$. The set of all such hyperplanes partitions $\mathbb{R}^{d}$ into at most $Q(m,d)$ nonempty cones, where $Q(m,d)$ is defined recursively as: 
\begin{equation}
	Q(m, d) = Q(m-1, d) + (m-1) Q(m-1,d-1), \label{eq_Osets}
\end{equation}
with $Q(m,1) = 2$ for all $m \geq 2$ and $Q(2, d) = 2$ for all $d \geq 1$.  Furthermore, each these nonempty cones corresponds exactly to the equivalence class of vectors $\theta$ that induce a unique ordering of the points in $\mathcal{X}$. Thus, the value $Q(m,d)$ serves as an upper bound on the number of orderings of the points in $\mathcal{X}$ that are inducible by the function $\tilde{\varphi}(X,\theta) = X\theta$. The recursive formula from \eqref{eq_Osets} defining the upper bound $Q(m,d)$ has been independently discovered in different contexts by many authors; the earliest such account appears in \cite{Bennett1956}, although the formula was independently discovered again in \cite{cover1967number}. The upper bound $Q(m,d)$ is obtained when the collection of hyperplanes of the form $H_{jk}$ are in general position. Note that $Q(m,1)=2$ corresponds exactly to Example \ref{example_threshold_crossing}, where it was shown that only two orderings could be induced when $\tilde{\varphi}(X,\theta) = X \theta$ for scalar $X$ and $\theta$. Typically, $Q(m, d) < m!$, although some inspection of the formula shows that we always have $Q(m,d)=m!$ when $d \geq m-1$.

If we could select one value of $\theta$ from each of the cones defined by the collection of hyperplanes of the form $H_{jk}$, we could then determine the permitted orderings of the support points $\mathcal{X}$ by simply evaluating $x_{j} \theta$ for $j=1,\ldots,m$. This would then allow us to determine which sets $\mathcal{U}(s,\theta)$ must be assigned zero probability under Assumption \ref{assumption_threshold_crossing}. Note that under Assumption \ref{assumption_threshold_crossing} the latent variable $U$ obtains a value on the hyperplane $H_{jk}$ with probability zero. Thus, it suffices to select one value of $\theta$ from the interior of each of the cones defined by the collection of hyperplanes of the form $H_{jk}$. This can be done using the hyperplane arrangement algorithm described in the main text applied to the hyperplanes of the form $H_{jk}$ for $1\leq j < k \leq m$.  

Our method is also applicable to cases when $\tilde{\varphi}(X,\theta)$ may be non-linear in $\theta$. When $\tilde{\varphi}$ is not restricted by the researcher, all orderings of the support points in $\mathcal{X}$ are possible. The researcher must first fix an ordering of the support points in $\mathcal{X}$, determine the admissible response types $S_{\varphi}(\theta)$ for the fixed ordering, and run the linear programs in \eqref{eq_thm_linprog_finite_LB} and \eqref{eq_thm_linprog_finite_UB} subject to the constraint \eqref{eq_linearity_constraint2}. The researcher must then repeat the procedure for all possible orderings of the support points in $\mathcal{X}$. On each iteration of this procedure the researcher obtains an interval with endpoints determined by the values of the linear programs in \eqref{eq_thm_linprog_finite_LB} and \eqref{eq_thm_linprog_finite_UB}. The closed convex hull of the identified set for the counterfactual probability is then given by the interval whose lower endpoint is the smallest value of the linear program in \eqref{eq_thm_linprog_finite_LB} obtained across all orderings, and whose upper endpoint is the largest value of the linear program in \eqref{eq_thm_linprog_finite_UB} obtained across all orderings. There are $m!$ possible orderings for $\tilde{\varphi}(X,\theta)$ unless additional assumptions are imposed, so that considering all possible orderings can quickly become computationally demanding. %\cite{mourifie2015sharp} discusses the possibility of identifying a partial ordering of $\tilde{\varphi}(X,Z,\theta)$ from the data, which will reduce the set of feasible orderings substantially. 

\section{Comparison to Artstein's Inequalities}\label{appendix_artstein_comparison}

Here we briefly discuss the method proposed by \cite{chesher2013instrumental} and \cite{chesher2014instrumental}.\footnote{The general version of the approach can be found in \cite{chesher2017generalized}.} Our objective is to provide an informal comparison, and to illustrate the connections between the two approaches. To ease the comparison, we focus on the identified set for (conditional) latent variable distributions rather than counterfactual probabilities. Suppose Assumption \ref{assumption_basic} holds, and consider the correspondence:
\begin{align}
\overline{\mathcal{U}}(y,x,\theta)&:=\text{cl}\left\{ u \in \mathcal{U} : y = \mathbbm{1}\{\varphi(x,u,\theta) \geq 0  \} \right\}.\label{eq_random_set2}
\end{align}
The set $\overline{\mathcal{U}}(Y,X,\theta)$ is the closure of the set $\mathcal{U}(Y,X,\theta)$ from \eqref{eq_random_set}. Under some conditions, when the distribution of $U$ is absolutely continuous (which is assumed, for example, in both \cite{chesher2013instrumental} and \cite{chesher2014instrumental}), these two random sets are equal almost surely.\footnote{In \cite{chesher2014instrumental}, absolute continuity combined with a linear index function ensures this statement is true. \cite{chesher2013instrumental} consider a more general class of latent index functions than \cite{chesher2014instrumental}, and so also impose strict monotonicity in latent variables of the latent index function in order to ensure their analog of the set $\left\{u \in \mathcal{U} : \varphi(x,u,\theta) = 0  \right\}$ is of Lebesgue measure zero for each $(x,\theta)$.} Also define the correspondence:
\begin{align}
H(u,\theta)&:=\left\{(y,x) \in \mathcal{Y} \times \mathcal{X} : y = \mathbbm{1}\{\varphi(x,u,\theta) \geq 0  \} \right\}.\label{eq_random_set3}
\end{align}
A result due to \cite{artstein1983distributions} (also see \cite{norberg1992existence} and \cite{molchanov2017theory} Corollary 1.4.11), characterizes the set of selections of a random closed set. 
\begin{theorem}\label{theorem_artstein}
Suppose that Assumption \ref{assumption_basic} holds. Then for any $\theta \in \Theta$, the random vector $U$ can be realized as a selection of the random closed set $\overline{\mathcal{U}}(Y,X,\theta)$ if and only if:
\begin{align}
P_{U}(U \in K) \leq P_{Y,X}\left(\overline{\mathcal{U}}(Y,X,\theta)\cap K \neq \emptyset\right), \label{eq_artstein}
\end{align}
for all compact sets $K \subset \mathcal{U}$. Furthermore, for any $\theta \in \Theta$, the random vector $(Y,X)$ can be realized as a selection of the random closed set $H(U,\theta)$ if and only if:
\begin{align}
P_{Y,X}((Y,X) \in C) \leq P_{U}( H(U,\theta)\cap C \neq \emptyset), \label{eq_artstein2}
\end{align}
for all compact sets $C \subset \mathcal{Y}\times \mathcal{X}$. 
\end{theorem}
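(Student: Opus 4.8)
The plan is to recognize Theorem \ref{theorem_artstein} as a direct consequence of the Artstein--Norberg characterization of the selections of a random closed set (e.g. \cite{artstein1983distributions}, \cite{norberg1992existence}, or \cite{molchanov2017theory} Corollary 1.4.11), so that the bulk of the work is verifying the regularity conditions under which this classical result applies. First I would confirm that $\overline{\mathcal{U}}(Y,X,\theta)$ from \eqref{eq_random_set2} is a random closed set taking values in the Polish space $\mathcal{U} = \mathbb{R}^{d_{u}}$. Lemma \ref{lemma_effros_measurability} already establishes weak (Effros) measurability of the multifunction $\mathcal{U}(Y,X,\theta)$ from \eqref{eq_random_set}; since $(Y,X)$ has finite support and each level set is a halfspace-type region, I would argue that the closure $\overline{\mathcal{U}}(Y,X,\theta)$ inherits weak measurability, so that it is a bona fide random closed set, and Lemma \ref{lemma_selection} confirms it is almost surely nonempty and admits a selection. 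With these ingredients in place the hypotheses of the cited theorem are met.

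For the first equivalence, the necessity (``only if'') direction is immediate: if $U$ is realized as a selection, so that $U \in \overline{\mathcal{U}}(Y,X,\theta)$ almost surely under some coupling, then for any compact $K \subset \mathcal{U}$ the event $\{U \in K\}$ is (up to a null set) contained in $\{\overline{\mathcal{U}}(Y,X,\theta)\cap K \neq \emptyset\}$, since $U$ itself witnesses the nonempty intersection; monotonicity of probability then yields \eqref{eq_artstein}. The sufficiency (``if'') direction is the deep part and is exactly the content of Artstein's theorem: given that the capacity inequality \eqref{eq_artstein} holds for every compact $K$, one constructs a joint law of $U$ and $(Y,X)$ with the prescribed marginals under which $U$ is a measurable selection of $\overline{\mathcal{U}}(Y,X,\theta)$. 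I would delegate this construction to the cited literature rather than reproduce it, since it rests on a Strassen-type argument for capacity functionals.

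For the second equivalence I would apply the same characterization with the roles of the two random elements reversed, treating $(Y,X)$ as the element to be realized as a selection of the correspondence $H(U,\theta)$ from \eqref{eq_random_set3}. Here the argument is in fact simpler: because $\mathcal{Y} \times \mathcal{X}$ is finite, $H(U,\theta)$ is automatically a random closed set---every subset of a finite space is closed and the required measurability is trivial---so no additional topological regularity is needed, and the ``compact'' sets $C$ range over the finitely many subsets of $\mathcal{Y} \times \mathcal{X}$. Necessity again follows from the inclusion $\{(Y,X) \in C\} \subset \{H(U,\theta)\cap C \neq \emptyset\}$, and sufficiency is the finite-space instance of the Artstein characterization.

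I expect the main obstacle to be the measurability and closedness bookkeeping in the first equivalence---in particular, cleanly passing from the weak measurability of $\mathcal{U}(Y,X,\theta)$ established in Lemma \ref{lemma_effros_measurability} to that of its closure $\overline{\mathcal{U}}(Y,X,\theta)$, and confirming that the almost-sure equality of the two sets (which holds because under Assumption \ref{assumption_basic} the distribution of $U$ assigns zero mass to the boundary hyperplanes $\{u : \varphi(x,u,\theta)=0\}$) leaves the selection characterization unaffected. Everything else is either elementary, as in the two necessity directions, or a direct invocation of the cited theorem, as in the two sufficiency directions.
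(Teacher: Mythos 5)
Your proposal is correct and matches the paper's treatment: the paper offers no independent proof of this theorem, presenting it as a direct citation of the Artstein--Norberg characterization (\cite{artstein1983distributions}, \cite{norberg1992existence}, \cite{molchanov2017theory} Corollary 1.4.11), which is exactly the route you take. Your additional bookkeeping — the elementary necessity directions, weak measurability of the closure inherited from Lemma \ref{lemma_effros_measurability}, and the triviality of the finite-space case for $H(U,\theta)$ — is sound and, if anything, more explicit than what the paper records.
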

\begin{remark}
The random vector $U$ can be realized as a selection of the random closed set $\overline{\mathcal{U}}(Y,X,\theta)$ if and only if there exists a probability space and random elements $\tilde{U}$ and $\tilde{\mathcal{U}}(Y,X,\theta)$ with identical distributions to $U$ and $\overline{\mathcal{U}}(Y,X,\theta)$ such that $\tilde{U} \in \tilde{\mathcal{U}}(Y,X,\theta)$ a.s. Since $\mathcal{U} \subset \mathbb{R}^{d_{\theta}}$ is locally compact and Hausdorff, it is equivalent that \eqref{eq_artstein} hold for all open sets $G \subset \mathcal{U}$. Note that since $\mathcal{Y}\times \mathcal{X}$ is finite, all subsets are trivially compact with respect to the discrete topology. 
\end{remark}
The first part of this result is very similar to Theorem 1 in \cite{chesher2013instrumental} and Theorem 3.1 in \cite{chesher2014instrumental}, and the second part of this result is a direct corollary of Theorem 1 in \cite{chesher2017generalized}. Either \eqref{eq_artstein} and \eqref{eq_artstein2} can be used to construct the identified set of unconditional latent variable distributions, say $\mathcal{P}_{U}^{*}$; in practice, this is accomplished by first fixing a value of $\theta \in \Theta$, collecting all distributions $P_{U}$ satisfying either \eqref{eq_artstein} or \eqref{eq_artstein2}, and then taking a union (over all $\theta \in \Theta$) of the resulting collections of distributions. A similar result to Theorem \ref{theorem_artstein} can be stated after conditioning on $(Y,X)$. 

The main difficultly with using the characterization of the identified set based on Artstein's inequalities is the number of constraints that must be imposed. The resulting computational bottleneck is thus associated with a lack of short-term computer memory needed to store all of Artstein's inequalities. Most efforts to reduce the computational burden of the approach based on Artstein's inequalities are directed towards reducing the number of constraints implied by Theorem \ref{theorem_artstein}; see the discussions in \cite{galichon2011set}, \cite{chesher2017generalized}, \cite{russell2021sharp}. 

At first glance it appears that \eqref{eq_artstein} leads to a characterization of the identified set of unconditional latent variable distributions that is intractable, given the number of possible compact subsets of $\mathcal{U}$. However, following the discussion in both \cite{chesher2013instrumental} and \cite{chesher2014instrumental}, most of the inequalities of the form \eqref{eq_artstein} are redundant. For instance, when $\varphi$ is linear in $U$ and $\theta$, the set $\overline{\mathcal{U}}(y,x,\theta)$ is a closed halfspace through the origin. In this case, \cite{chesher2014instrumental} show it suffices to check the inequalities in \eqref{eq_artstein} for all sets $K$ that can be written as the intersection of halfspaces of the form $\overline{\mathcal{U}}(y,x,\theta)$. In the case with no exogenous variables and a scalar endogenous variable $X$ with $m$ points of support, \cite{chesher2014instrumental} demonstrate that there are at most $2m(2m+1)/2$ nonredundant inequalities implied by \eqref{eq_artstein}. 

On the other hand, since $r:=|\mathcal{Y}\times \mathcal{X}|$ is finite, \eqref{eq_artstein2} gives $2^r-1$ inequalities. Even with small values of $m$ the resulting number of inequalities can also be prohibitively large. We now show that, at the cost of some pre-processing, our approach leads to a simplification of the set of constraints in \eqref{eq_artstein2}: the number of constraints in our approach is proportional to $r$ rather than $2^r$. 

Consider imposing \eqref{eq_artstein2} conditional on $(Y,X)$, and assume for simplicity that each $(y,x) \in \mathcal{Y}\times\mathcal{X}$ is assigned positive probability. For any $\theta \in \Theta$, the random vector $U$ can be realized as a selection from $\overline{\mathcal{U}}(y,x,\theta)$ if and only if:
\begin{align*}
\mathbbm{1}\{(y,x) \in C\} \leq P_{U \mid Y,X}( H(U,\theta)\cap C \neq \emptyset \mid Y=y,X=x),
\end{align*}
for all compact $C \subset \mathcal{Y}\times\mathcal{X}$. For a fixed value of $(y,x)$, consider the set of all compact $C \subset \mathcal{Y}\times\mathcal{X}$ containing $(y,x)$. For all such $C$ we must have:
\begin{align*}
P_{U \mid Y,X}( H(U,\theta)\cap C \neq \emptyset \mid Y=y,X=x)=1.
\end{align*}
However, this can hold for all compact sets $C \subset \mathcal{Y}\times\mathcal{X}$ containing $(y,x)$ if and only if it holds for the singleton set $\{(y,x)\}$. Conclude that:
\begin{align*}
P_{U\mid Y,X}( H(U,\theta)\cap \{(y,x)\} \neq \emptyset \mid Y=y,X=x)=1.
\end{align*}
Some basic manipulation shows this holds if and only if:
\begin{align*}
P_{U \mid Y,X}(U \in  \overline{\mathcal{U}}(Y,X,\theta) \mid Y=y,X=x)=1.
\end{align*}
This derivation can be used to prove the following Lemma.
\begin{lemma}\label{lemma_comparision}
Suppose that Assumption \ref{assumption_basic} holds and that all $(y,x) \in \mathcal{Y}\times\mathcal{X}$ are assigned positive probability. Then Artstein's inequalities:
\begin{align}
P_{Y,X}((Y,X) \in C) \leq P_{U}( H(U,\theta)\cap C \neq \emptyset), \label{eq_artstein3}
\end{align}
hold for all compact sets $C \subset \mathcal{Y}\times \mathcal{X}$ if and only if:
\begin{align}
P_{U \mid Y,X}(U \in  \overline{\mathcal{U}}(Y,X,\theta) \mid Y=y,X=x)=1,\label{eq_artstein_equivalence}
\end{align}
for all $(y,x)$.
\end{lemma}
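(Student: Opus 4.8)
The plan is to prove Lemma \ref{lemma_comparision} by establishing the biconditional through a chain of equivalences, essentially formalizing the informal derivation that precedes the statement in the excerpt. The key observation driving the whole argument is that the discrete, finite structure of $\mathcal{Y}\times\mathcal{X}$ collapses the family of constraints indexed by all compact sets $C$ down to constraints indexed by singletons, and that the singleton constraint is in turn equivalent to an almost-sure containment statement. I would organize the proof around a fixed but arbitrary pair $(y,x)$ assigned positive probability, and condition on $\{Y=y, X=x\}$ throughout.

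First I would rewrite the unconditional Artstein inequality \eqref{eq_artstein3} in conditional form. Since every $(y,x)$ has positive probability and $\mathcal{Y}\times\mathcal{X}$ is finite, I can decompose $P_{U}(H(U,\theta)\cap C \neq \emptyset)$ and $P_{Y,X}((Y,X)\in C)$ over the conditioning events. The goal of this step is to show that \eqref{eq_artstein3} holding for all compact $C$ is equivalent to the conditional statement
\begin{align*}
\mathbbm{1}\{(y,x)\in C\} \leq P_{U\mid Y,X}(H(U,\theta)\cap C \neq \emptyset \mid Y=y, X=x),
\end{align*}
holding for all compact $C$ and all $(y,x)$. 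The forward direction here uses that the left side of \eqref{eq_artstein3} is a sum of conditional indicators weighted by $P_{Y,X}$, while the right side is a corresponding weighted sum of conditional probabilities; the reverse direction is immediate by multiplying through and summing.

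Next I would reduce from arbitrary compact $C$ to singletons. For a fixed $(y,x)$, I would restrict attention to those $C$ containing $(y,x)$ (for $C$ not containing $(y,x)$ the left side is zero and the inequality is trivially satisfied). For such $C$ the inequality forces $P_{U\mid Y,X}(H(U,\theta)\cap C \neq \emptyset \mid Y=y,X=x)=1$. The monotonicity $\{(y,x)\}\subseteq C$ implies $\{H(u,\theta)\cap\{(y,x)\}\neq\emptyset\}\subseteq\{H(u,\theta)\cap C\neq\emptyset\}$, so the most stringent such constraint is the one for $C=\{(y,x)\}$, and all larger $C$ are implied by it. This reduces the whole family to the single requirement $P_{U\mid Y,X}(H(U,\theta)\cap\{(y,x)\}\neq\emptyset \mid Y=y,X=x)=1$. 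Finally I would unfold the definition \eqref{eq_random_set3} of $H(u,\theta)$: the event $H(u,\theta)\cap\{(y,x)\}\neq\emptyset$ is exactly the event $y=\mathbbm{1}\{\varphi(x,u,\theta)\geq 0\}$, which is precisely $u\in\mathcal{U}(y,x,\theta)$, and hence (up to the measure-zero boundary, using the zero-probability-on-boundaries hypothesis of Assumption \ref{assumption_basic}) the event $u\in\overline{\mathcal{U}}(y,x,\theta)$. This yields \eqref{eq_artstein_equivalence}, and since $(y,x)$ was arbitrary the equivalence holds for all $(y,x)$.

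I do not anticipate a deep obstacle, since the finiteness of $\mathcal{Y}\times\mathcal{X}$ does most of the work; the main care point will be the passage between the closed set $\overline{\mathcal{U}}(y,x,\theta)$ and the half-open set $\mathcal{U}(y,x,\theta)$, which must be handled by invoking the assumption that $P_U$ assigns zero probability to the boundary $\{u:\varphi(x,u,\theta)=0\}$ so that $P_{U\mid Y,X}(U\in\overline{\mathcal{U}}(Y,X,\theta)\mid Y=y,X=x)=1$ and $P_{U\mid Y,X}(U\in\mathcal{U}(Y,X,\theta)\mid Y=y,X=x)=1$ are interchangeable. A secondary point worth stating explicitly is the justification that checking all compact $C$ is equivalent to checking only singletons; I would present this cleanly via the monotonicity argument above rather than enumerating cases.
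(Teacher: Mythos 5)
Your overall route --- condition on $(Y,X)=(y,x)$, reduce the family of compact sets $C$ to singletons by a monotonicity argument, and then unfold the definition of $H(u,\theta)$ to land on the containment statement, handling the closure via the zero-probability-on-boundaries hypothesis --- is the same route the paper takes in the derivation immediately preceding the lemma, and those later steps are sound; you are in fact more explicit than the paper about why singletons suffice and about the passage between $\mathcal{U}(y,x,\theta)$ and $\overline{\mathcal{U}}(y,x,\theta)$.

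The gap is in your first step. For a \emph{fixed} disintegration $P_{U\mid Y,X}$ of $P_U$, the unconditional inequalities \eqref{eq_artstein3} holding for all compact $C$ do \emph{not} imply the conditional inequalities
\begin{align*}
\mathbbm{1}\{(y,x)\in C\}\ \leq\ P_{U\mid Y,X}\bigl(H(U,\theta)\cap C\neq\emptyset\mid Y=y,X=x\bigr)
\end{align*}
for every $(y,x)$: a weighted-sum inequality between a $\{0,1\}$-valued vector and a $[0,1]$-valued vector need not hold termwise. Concretely, take $\mathcal{X}=\{x_1\}$, let $U$ be independent of $Y$ with $P_U(\varphi(x_1,U,\theta)\geq 0)=P(Y=1)=1/2$; then \eqref{eq_artstein3} holds (with equality) for every $C$, yet $P_{U\mid Y,X}(U\in\overline{\mathcal{U}}(1,x_1,\theta)\mid Y=1,X=x_1)=1/2\neq 1$. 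So the ``only if'' direction cannot be obtained by the decomposition you describe; it is genuinely existential. The correct reading, consistent with Theorem \ref{theorem_artstein} and the paper's phrase ``can be realized as a selection,'' is that \eqref{eq_artstein3} (which depends only on the marginals $P_{Y,X}$ and $P_U$) holds if and only if \emph{some} disintegration of $P_U$ given $(Y,X)$ satisfies \eqref{eq_artstein_equivalence}. For that direction you must invoke the selection part of Theorem \ref{theorem_artstein} to produce a coupling of $(Y,X)$ and $U$ with the given marginals under which $(Y,X)\in H(U,\theta)$ a.s., and then read \eqref{eq_artstein_equivalence} off the disintegration of \emph{that} coupling; your singleton reduction and the unfolding of $H$ then apply verbatim to the resulting conditional laws. (This is what the paper's derivation does implicitly by applying the selectionability characterization conditionally on each $\{Y=y,X=x\}$.) The converse direction --- containment implies \eqref{eq_artstein3} by multiplying through by $P(Y=y,X=x)$ and summing --- is the part of your argument that is correct as written.
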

Notice that \eqref{eq_artstein_equivalence} is nearly identical to the condition in Definition \ref{definition_identified_set} characterizing $\mathcal{P}_{U\mid Y,X}^{*}$, connecting the approach based on Artstein's inequalities to the approach considered in this paper. Setting $r=|\mathcal{Y} \times \mathcal{X}|$, we conclude that the $2^{r}-1$ inequalities implied by \eqref{eq_artstein3} are equivalent to the (at most) $r$ equalities implied by \eqref{eq_artstein_equivalence}. For the sake of comparison, recall that in the case with no exogenous variables and a scalar endogenous variable $X$ with $m$ points of support, \cite{chesher2014instrumental} demonstrate that there are at most $2m(2m+1)/2$ nonredundant inequalities implied by \eqref{eq_artstein} when $\varphi$ is linear in $(U,\theta)$. In contrast, Lemma \ref{lemma_comparision} implies that with no exogenous variables and a scalar endogenous variable $X$ with $m$ points of support, there are at most $2m$ constraints implied by \eqref{eq_artstein_equivalence}, regardless of the functional form specified for $\varphi$.

We conclude by noting that our characterization may not always produce less constraints than the approach based on Artstein's inequalities. In particular, the characterization based on Artstein's inequalities easily handles the incorporation of instruments without increasing the number of inequality constraints (c.f. \cite{beresteanu2012partial} Proposition 2.5), where our approach may require significantly more constraints. However, in many cases we believe our approach can offer substantial simplifications. 

\section{Additional Figures}\label{appendix_figure}

This section contains two figures illustrating the intervals computed using the linear programs of the form \eqref{eq_thm_linprog_finite_LB} and \eqref{eq_thm_linprog_finite_UB} for each profiling point for certain specifications in the application section. Figure \ref{ATE_hole1} presents the intervals computed for each representative point of $\theta \in \mathbb{R}^2$ when bounding $\mu_{ate}$ for Model \eqref{eq_app_linear1} under various assumptions. Figure \ref{ATE_hole2} shows the analogous intervals for model \eqref{eq_app_linear2}.

\begin{sidewaysfigure}
	\centering 
	\includegraphics[scale=0.47]{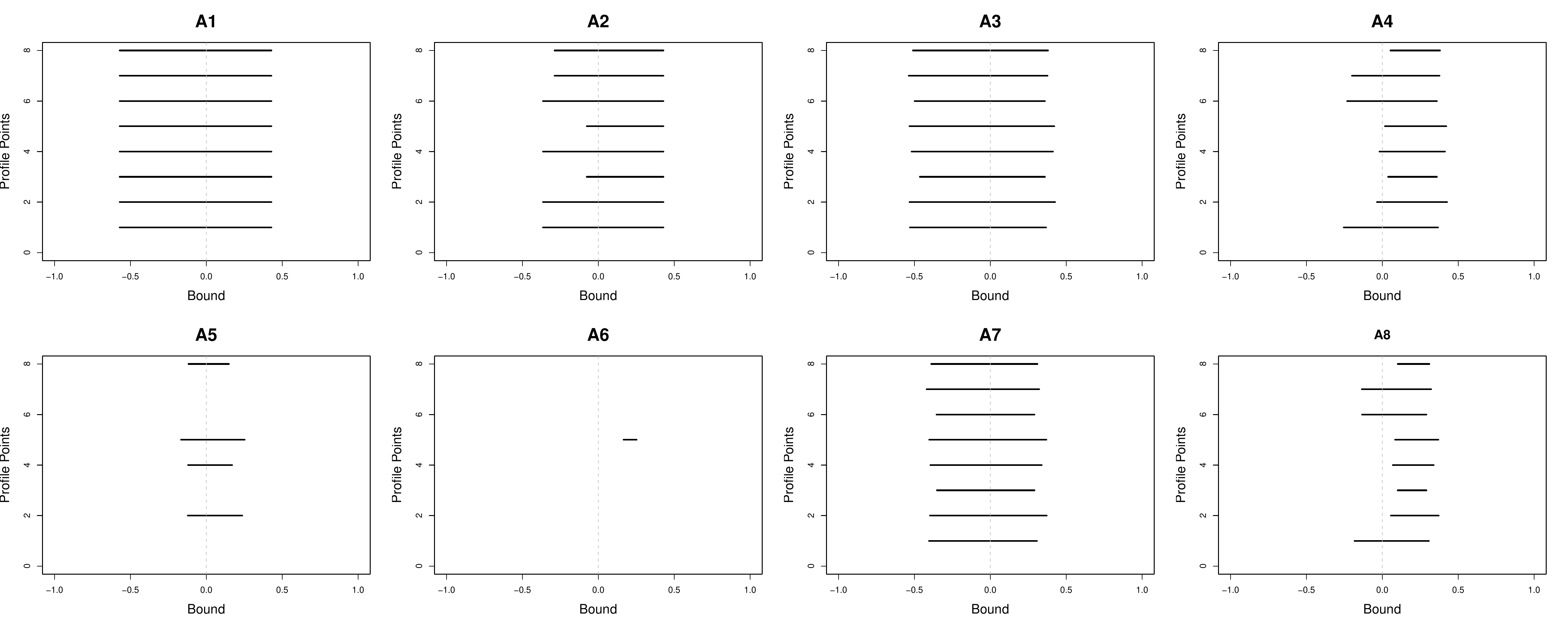}
	\caption{This figure shows the intervals computed using the linear programs of the form \eqref{eq_thm_linprog_finite_LB} and \eqref{eq_thm_linprog_finite_UB} for each representative point of $\theta \in \mathbb{R}^2$ when bounding $\mu_{ate}$ for Model \eqref{eq_app_linear1} under various assumptions. The active assumptions are given at the top of each illustration. The axes labelled ``Profile Points'' correspond to various representative points. }
	\label{ATE_hole1} 
\end{sidewaysfigure}

\begin{sidewaysfigure}
	\centering 
	\includegraphics[scale=0.47]{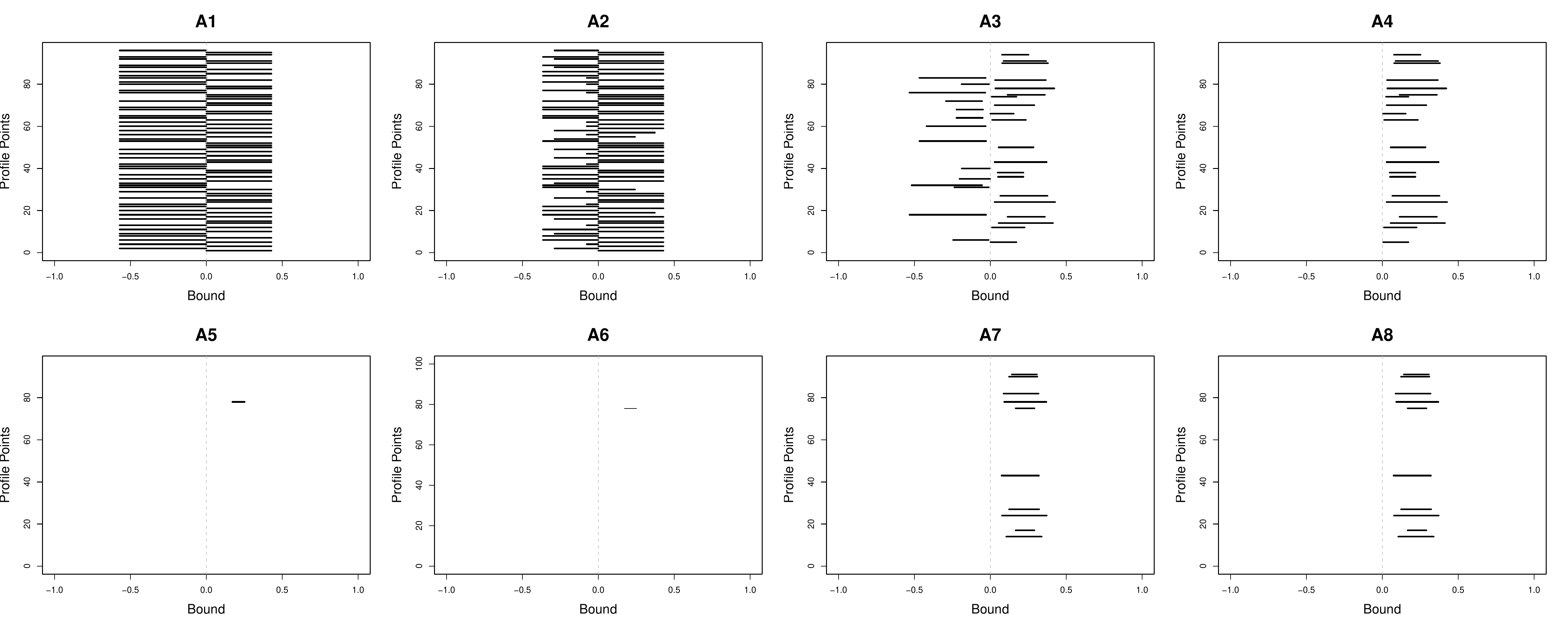}
	\caption{This figure shows the intervals computed using the linear programs of the form \eqref{eq_thm_linprog_finite_LB} and \eqref{eq_thm_linprog_finite_UB} for each representative point of $\theta \in \mathbb{R}^3$ when bounding $\mu_{ate}$ for Model \eqref{eq_app_linear2} under various assumptions. The active assumptions are given at the top of each illustration. The axes labelled ``Profile Points'' correspond to various representative points.}
	\label{ATE_hole2} 
\end{sidewaysfigure}

\end{appendix}

%torgovitsky requires known function of (theta,F) linked to observed probabilities. Not required in our case.
%may not allow for data-dependent constraints.
%collection of subsets considered in torgovitsky does 

%torgovitsky admits that in practice one wants to find the smallest collection Ux(theta) 

\end{document}